\newif\ifarxiv
\tikzset{
	vert/.style={circle,inner sep=1.5,fill=white,draw=black,minimum size=.3cm},
    dummy/.style={circle,fill=black,draw=black,inner sep=2.5},
	edge/.style={color=black, thick},
	diredge/.style={->,>={Stealth[width=8pt,length=8pt]},color=black, thick},
	timelabel/.style={fill=white,font=\footnotesize, text centered},
	wave/.style={decorate,decoration={coil,aspect=0}},
	dirwave/.style={->, >={Stealth[width=8pt,length=8pt]},decorate,decoration={coil,aspect=0}},
	diredge2/.style={->,>={Stealth[width=8pt,length=8pt]}}
}
\crefname{claim}{Claim}{Claims}
\crefname{observation}{Observation}{Observations}
\newcommand{\ie}{i.\,e.,\ }
\newcommand{\degtwo}{degree-2\xspace}
\newcommand{\vc}{\mathrm{vc}}
\newcommand{\COL}[1][3]{\textsc{#1-Coloring}\xspace}
\newcommand{\problemdef}[3]{
	\begin{center}
		\begin{minipage}{0.95\textwidth}
			\noindent
			#1
			\vspace{5pt}\\
			\setlength{\tabcolsep}{3pt}
			\begin{tabularx}{\textwidth}{@{}lX@{}}
				\textbf{Input:}& #2 \\
				\textbf{Question:}& #3
			\end{tabularx}
		\end{minipage}
	\end{center}
}
\newcommand{\optproblemdef}[3]{
	\begin{center}
		\begin{minipage}{0.95\textwidth}
			\noindent
			#1
			\vspace{5pt}\\
			\setlength{\tabcolsep}{3pt}
			\begin{tabularx}{\textwidth}{@{}lX@{}}
				\textbf{Input:}& #2 \\
				\textbf{Task:}& #3
			\end{tabularx}
		\end{minipage}
	\end{center}
}
\newcommand{\DeltaUpperBoundLong}{\textsc{Periodic Upper-Bounded Temporal Tree Realization}\xspace}
\newcommand{\deltaUpperBound}{\textsc{TTR}\xspace}
\title{Realizing temporal transportation trees} 
\author{George B. Mertzios}{Department of Computer Science, Durham University, UK}{george.mertzios@durham.ac.uk}{https://orcid.org/0000-0001-7182-585X}{Supported by the EPSRC grant EP/P020372/1.}
\author{Hendrik~Molter}{Department of Computer Science, Ben-Gurion~University~of~the~Negev, 
	Beer-Sheva, 
	Israel}{molterh@post.bgu.ac.il}{https://orcid.org/0000-0002-4590-798X}{Supported by the Israel Science Foundation, grant nr.~1470/24, by the European Union's Horizon Europe research and innovation programme under grant agreement 949707, and by the European Research Council, grant nr.~101039913 (PARAPATH).}
\author{Nils Morawietz}{Friedrich Schiller University Jena, Institute of Computer Science, Germany}{nils.morawietz@uni-jena.de}{https://orcid.org/0000-0002-7283-4982}{Partially supported by the French ANR, project ANR-22-CE48-0001 (TEMPOGRAL).}
\author{Paul G. Spirakis}{Department of Computer Science, University of Liverpool, UK}{p.spirakis@liverpool.ac.uk}{https://orcid.org/0000-0001-5396-3749}{Supported by the EPSRC grant EP/P02002X/1.}
\authorrunning{George B. Mertzios, Hendrik Molter, Nils Morawietz, and Paul G. Spirakis} 
\keywords{Temporal graph, periodic temporal labeling, fastest temporal path, graph realization, temporal connectivity.}
\begin{document}
\maketitle

\begin{abstract}
In this paper, we study the complexity of the \textit{periodic temporal graph realization} problem with respect to 
\textit{upper bounds} on the fastest path durations among its vertices. 
This constraint with respect to upper bounds appears naturally in \textit{transportation network design} applications where, for example, a road network is given, and the goal is to appropriately schedule periodic travel routes, while not exceeding some desired upper bounds on the travel times. 
In our work, we focus only on underlying \textit{tree topologies}, which are fundamental in many transportation network applications.

As it turns out, the periodic upper-bounded temporal tree realization problem (\deltaUpperBound) has a very different computational complexity behavior than both (i)~the classic graph realization problem with respect to shortest path distances in static graphs and (ii)~the periodic temporal graph realization problem with \textit{exact} given fastest travel times (which was recently introduced). 
First, we prove that, surprisingly, \deltaUpperBound is NP-hard, even for a constant period $\Delta$ and when the input tree $G$ satisfies at least one of the following conditions: 
(a)~$G$ is a star, or (b)~$G$ has constant maximum degree. 
Second, we prove that \deltaUpperBound is fixed-parameter tractable (FPT) with respect to the number of leaves in the input tree~$G$, via a novel combination of techniques for totally unimodular matrices and mixed integer linear programming.

\keywords{Temporal graph, periodic temporal labeling, fastest temporal path, graph realization, Mixed Integer Linear Programming.}
\end{abstract}

\section{Introduction}\label{intro-sec}
A temporal (or dynamic) graph is a graph whose topology is subject to discrete changes over time. 
This paradigm reflects the structure and operation of a great variety of modern networks; 
social networks, wired or wireless networks whose links change dynamically, transportation networks, and several physical systems are only a few examples of networks that change over time~\cite{michailCACM,Nicosia-book-chapter-13,holme2019temporal}. 
Inspired by the foundational work of Kempe et al.~\cite{KKK00}, we adopt here a simple model for
temporal graphs, in which the vertex set remains unchanged while each edge is equipped with a set of integer time labels.
	
\begin{definition}[temporal graph~\cite{KKK00}]
\label{temp-graph-def} A \emph{temporal graph} is a pair $(G,\lambda)$,
where $G=(V,E)$ is an underlying (static) graph and $\lambda :E\rightarrow 2^\mathbb{N}$ is a \emph{labeling} function which assigns to every edge of $G$ a finite set of discrete time labels.
\end{definition}

Here, whenever $t \in \lambda(e)$, we say that the edge $e$ is \emph{active} or \emph{available} at time $t$. In the context of temporal graphs, where the notion of vertex adjacency is time-dependent, the notions of path and distance also need to be redefined. The most natural temporal analogue of a path is that of a \emph{temporal} (or \emph{time-dependent}) path, which is 
a path of the underlying static graph whose time labels strictly increase along the edges of the path. 
The \textit{duration} of a temporal path is the number of discrete time steps needed to traverse it. Finally, a temporal path from vertex $u$ to vertex $v$ is \textit{fastest} if it has the smallest duration among all temporal paths from $u$ to $v$.

The \emph{graph realization} problem with respect to some graph property $\mathcal{P}$ is to compute a graph that satisfies $\mathcal{P}$, or to decide that no such graph exists. 
The main motivation for graph realization problems stems both from network design and from ``verification'' applications in engineering. 
In \emph{network design} applications, the goal is to design network topologies having a desired property~\cite{augustine2022distributed,grotschel1995design}. 
On the other hand, in \emph{verification} applications, given the outcomes of some exact experimental measurements, 
the aim is to (re)construct an input network that complies with them. 
If such a reconstruction is not possible, this proves that the measurements are incorrect or implausible (resp.~that the algorithm making the computations is incorrectly implemented).

One example of a graph realization problem is the recognition of probe interval graphs, in the context of the physical mapping of DNA, see~\cite{McMorris98,McConnellS02} and~\cite[Chapter 4]{GolumbicTrenk04}.
Analyzing the computational complexity of the graph realization problems for various natural and fundamental graph properties $\mathcal{P}$ requires a deep understanding of these properties.
Among the most studied parameters for graph realization 
are constraints on the distances between vertices~\cite{barNoy2022GraphRealization,barNoy2021composed,hakimi1965distance,Patrinos-Hakimi-72,Rubei16,Tamura93,chung2001distance,bixby1988almost,culberson1989fast}, 
on the vertex degrees~\cite{GolovachM17,gomory1961multi,hakimi1962realizability,Bar-NoyCPR20,erdos1960graphs}, 
on the eccentricities~\cite{barNoy2020efficiently,hell2009linear,behzad1976eccentric,lesniak1975eccentric}, and on connectivity~\cite{fulkerson1960zero,frank1992augmenting,chen1966realization,frank1994connectivity,frank1970connectivity,gomory1961multi}, among others. 
Although the majority of studies of graph realization problems concern \textit{static} graphs, the \textit{temporal} (periodic) graph realization problem with respect to given (exact) delays of the fastest temporal paths among vertices has been recently studied in~\cite{EMW24,KMMS23TempRealiz}, motivated by \textit{verification} applications where exact measurements are given.

In this paper, we consider the (periodic) temporal graph realization problem with respect to the durations of the fastest temporal paths from a \textit{network design} perspective, \ie where only \textit{upper bounds} on the durations of fastest temporal paths are given. 
One important application domain of network design problems is that of \textit{transportation network} design where, for example, a road network is given, and the goal is to appropriately schedule periodic travel routes while not exceeding some desired upper bounds on travel times. 
Our work is motivated by the fact that in many transportation network applications the  underlying graph has a \textit{tree} structure~\cite{PhysRevLett00,Barthelemy_06,PhysRevLett98}. For example, many airlines or railway transportation networks are even star graphs (having a big hub at the center of the network, e.g.~in the capital city). 
The formal definition of our problem is as follows (see \Cref{sec:prelims} for formal definitions of all used terminology).

\problemdef{\DeltaUpperBoundLong (\deltaUpperBound)}
{A tree $G=(V,E)$ with $V=\{v_1,v_2,\ldots,v_n\}$, an $n \times n$ matrix~$D$ of positive integers, and a positive integer $\Delta$.}
{Does there exist a $\Delta$-periodic labeling $\lambda: E \rightarrow \{1,2,\ldots,\Delta\}$ such that, 
	for every~$i,j$, the duration of the fastest temporal path from $v_i$ to $v_j$ in the $\Delta$-periodic temporal graph $(G,\lambda,\Delta)$ is \emph{at most} $D_{i,j}$.}

Many natural and technological systems exhibit a periodic temporal behavior. 
This is true even more on transportation networks~\cite{Arrighi2023Multi}, where the goal is to build periodic schedules of transportation units (e.g.~trains, buses, or airplanes). 
The most natural constraint, while designing such periodic transportation schedules, is that the fastest travel time (\ie the duration of the fastest temporal path) between a specific pair of vertices does not exceed a specific desired upper bound. That is, if the travel time between $u$ and $v$ in the resulting schedule is shorter than this upper bound, the schedule is even better (and thus, still feasible). 
Periodic temporal graphs have also been studied in various different contexts (see~\cite[Class~8]{casteigts2012time} and~\cite{Arrighi2023Multi,ErlebachS20,ErlebachMSW24,morawietz2021timecop,morawietz2020timecop}).

We focus on the most fundamental case of periodic temporal graphs, where every edge has the same period $\Delta$ and it appears exactly once within each period. 
As it turns out, \deltaUpperBound has a very different computational complexity behavior than both (i)~the classic graph realization problem with respect to shortest path distances in static graphs~\cite{hakimi1965distance} 
and (ii)~the periodic temporal graph realization problem with \textit{exact} given fastest travel times~\cite{KMMS23TempRealiz}. We remark that very recently, the case where every edge can appear multiple times in each $\Delta$ period has been studied~\cite{EMW24}.

\subparagraph{Our contribution.}
Our results in this paper are given in two main directions. 
First, we prove that \deltaUpperBound is NP-hard, even for a constant period $\Delta$ and when the input tree $G$ satisfies at least one of the following conditions: 
(a)~$G$ is a star (for any $\Delta\geq 3$), or (b)~$G$ has diameter at most $6$ and constant pathwidth (even for $\Delta=2$), or (c)~$G$ has maximum degree at most $8$ and constant pathwidth  (even for $\Delta= 2$). 
Note that, for $\Delta=1$, \deltaUpperBound becomes trivial (as in this case fastest paths coincide with shortest paths in the underlying tree). 
On the one hand, our hardness results come in wide contrast to the classic (static) graph realization problem with respect to shortest path distances. Indeed, this static analogue is easily solvable in polynomial time in two steps~\cite{hakimi1965distance}. 
On the other hand, the complexity of \deltaUpperBound is also surprisingly different than the periodic temporal graph realization problem with \textit{exact} given fastest travel times. More specifically, when the input integer matrix $D$ gives the exact fastest travel times, the problem becomes solvable in polynomial time on trees~\cite{KMMS23TempRealiz}. 
Note that our hardness results rule out the existence of FPT-algorithms for \deltaUpperBound for all reasonable parameter on trees besides the number of leaves (assuming P $\neq$ NP).

Second, we prove that \deltaUpperBound is fixed-parameter tractable (FPT) with respect to the number of leaves in the input tree $G$. 
That is, long chains of vertices of degree 2 do not affect the complexity of the problem. 
To provide our FPT algorithm, we reduce \deltaUpperBound to a number \textsc{Mixed Integer Linear Program} (\textsc{MILP}) instances that is upper-bounded by a function of the number of leaves in the input tree. Furthermore, the number of integer variables in each \textsc{MILP} instance is also upper-bounded by a function of the number of leaves in the input tree.
This allows us to use a known FPT algorithm for MILP parameterized by the number of integer variables~\cite{Lenstra1983Integer,dadush2011enumerative} to solve all \textsc{MILP} instances in FPT-time with respect to the number of leaves of the input tree. We prove that the \deltaUpperBound instance is a yes-instance if and only if at least one of the MILP instances admits a feasible solution. 
To this end, we use a novel combination of techniques for totally unimodular matrices and mixed integer linear programming to design the \textsc{MILP} instances in a way that if they admit a feasible solution, then we can assume that all variables are set to integer values.

\subparagraph{Related work.} Since the 1960s, (static) graph realization problems have been extensively studied. We give an overview of related work in the introduction. 

Several problems related to graph realization have been studied in the temporal setting. In most cases, the goal is to assign labels to (sets of) edges of a given static graph to achieve certain connectivity-related properties~\cite{KlobasMMS22,MertziosMS19,akrida2017complexity,enright2021assigning,EMW24,KMMS23TempRealiz,EMW24}. In most of the mentioned works, the sought labeling is not periodic and the desired property is to establish temporal connectivity among all vertices~\cite{KlobasMMS22,MertziosMS19,akrida2017complexity} or a subset of vertices~\cite{KlobasMMS22}, or to restrict reachability~\cite{enright2021assigning}. 
Among all these works, periodic labelings and the duration of the temporal paths were only considered in~\cite{KMMS23TempRealiz,EMW24}.

Numerous further problems related to temporal connectivity have been studied~\cite{Mertzios-transitivity21,enright2021deleting,MolterRZ21,deligkas2022optimizing,erlebach2021temporal,Flu+19a,Zsc+19,EnrightMM23,casteigts2021finding,FMNR22a,meusel2025directedtemporaltreerealization,EMM25}. Due to the added temporal dimension, the considered problems are typically computationally harder than their non-temporal counterparts, and some of which do not even have a non-temporal counterpart. In many cases, restricting the underlying topology to trees has been considered to obtain tractability results~\cite{KMZ23,klobas2023interference,Akrida-explorer-21}.

\begin{figure}[t]
    \centering
    \ifarxiv
    \includegraphics[scale=1]{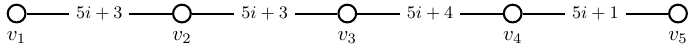}
    \else
	\begin{tikzpicture}[line width=1pt,xscale=1.4]
		\node[vert,label=below:$v_1$] (1) at (1,0) {};
		\node[vert,label=below:$v_2$] (2) at (3,0) {};
		\node[vert,label=below:$v_3$] (3) at (5,0) {};
		\node[vert,label=below:$v_4$] (4) at (7,0) {};
		\node[vert,label=below:$v_5$] (5) at (9,0) {};
		\draw (1) --node[timelabel]  {$5i+3$} (2) --node[timelabel] {$5i+3$}  (3) --node[timelabel] {$5i+4$} (4) --node[timelabel] {$5i+1$} (5);	
	\end{tikzpicture}
	\fi
	\caption{Visualization of a $\Delta$-periodic temporal graph $(G,\lambda,\Delta)$ with $\Delta = 5$ and the following $\Delta$-periodic labeling $\lambda: E \rightarrow \{1,2,\ldots,5\}$: $\lambda(\{v_1, v_2\})=\lambda(\{v_2, v_3\})=3$, $\lambda(\{v_3, v_4\})=4$, and~$\lambda(\{v_4, v_5\})=1$. 
    A fastest temporal path from $v_1$ to $v_5$ first traverses $\{v_1,v_2\}$ at time $3$, then $\{v_2,v_3\}$ a time $8$, then $\{v_3,v_4\}$ at time $9$, and then $\{v_4,v_5\}$ at time $11$, and has duration $9$.
	\label{fig:periodic-example}}
\end{figure}

\section{Preliminaries and notation}\label{sec:prelims}

An undirected graph~$G=(V,E)$ consists of a set~$V$ of vertices 
and a set~$E \subseteq \binom{V}{2}$ of edges.
We denote by~$V(G)$ and~$E(G)$ the vertex and edge set of~$G$, respectively.

Let $G=(V,E)$ and $\Delta\in \mathbb{N}$, and let $\lambda: E \rightarrow \{1,2,\ldots,\Delta\}$ be an edge-labeling function that assigns to every edge of $G$ exactly one of the labels from $\{1,\ldots,\Delta\}$. 
Then we denote by $(G,\lambda,\Delta)$ the \emph{$\Delta$-periodic temporal graph} $(G,L)$, where for every edge $e\in E$ we have $L(e)=\{\lambda(e) + i\Delta \mid i\geq 0\}$. 
In this case, we call $\lambda$ a \emph{$\Delta$-periodic labeling} of~$G$. 
When it is clear from the context, we drop $\Delta$ and denote the ($\Delta$-periodic) temporal graph by $(G,\lambda)$.

A  \emph{temporal $(s,z)$-walk} (or \emph{temporal walk}) of length~$k$ from vertex $s=v_0$ to vertex $z=v_k$ in a $\Delta$-periodic temporal graph~$(G,L)$ is a sequence $P = \left(\left(v_{i-1},v_i,t_i\right)\right)_{i=1}^k$ of triples that we call \emph{transitions},  such that for all $i\in[k]$ we have that $t_i\in L(\{v_{i-1},v_i\})$ and for all $i\in [k-1]$ we have that $t_i < t_{i+1}$.
Moreover, we call $P$ a \emph{temporal $(s,z)$-path} (or \emph{temporal path})  
of length~$k$ if~$v_i\neq v_j$ for all~$i, j\in \{0,\ldots,k\}$ with $i\neq j$.
Given a temporal path $P=\left(\left(v_{i-1},v_i,t_i\right) \right)_{i=1}^k$, we denote the set of vertices of $P$ by $V(P)=\{v_0,v_1,\ldots,v_k\}$.
A temporal $(s,z)$-path $P=\left(\left(v_{i-1},v_i,t_i\right)\right)_{i=1}^k$ is \emph{fastest} if for all temporal $(s,z)$-path $P'=\left(\left(v'_{i-1},v'_i,t'_i\right)\right)_{i=1}^{k'}$ we have that $t_k-t_0\le t'_{k'}-t'_0$. We say that the \emph{duration} of $P$ is $d(P)=t_k-t_0+1$. Furthermore, the concept of \emph{travel delays} is very important for our proofs.

\begin{definition}[travel delays] \label{def:travelDelays}
	Let $(G, \lambda)$ be a $\Delta$-periodic temporal graph.
	Let $e_1=\{u,v\}$ and $e_2=\{v,w\}$ be two incident edges in $G$ with $e_1 \cap e_2 = \{v\}$.
	We define the \emph{travel delay} from~$u$ to $w$ at vertex $v$, denoted with $\tau_v^{u,w}$,
	as follows.
	\begin{equation*}
		\tau_v^{u,w} = \begin{cases}
        \lambda (e_2) - \lambda(e_1), & \text{if } \lambda (e_2) > \lambda(e_1),\\
        \lambda (e_2) - \lambda(e_1) + \Delta, & \text{otherwise}.
        \end{cases}
	\end{equation*}
\end{definition}
 From \Cref{def:travelDelays} we immediately get the following observation.
\begin{observation}\label{obs:traveldelays}
    Let $(G, \lambda)$ be a $\Delta$-periodic temporal graph.
	Let $e_1=\{u,v\}$ and $e_2=\{v,w\}$ be two incident edges in $G$ with $e_1 \cap e_2 = \{v\}$. Then we have that $\tau_v^{u,w} = \Delta - \tau_v^{w,u}$ if $\lambda (e_1) \neq \lambda(e_2)$, and $\tau_v^{u,w} = \tau_v^{w,u} =\Delta$ if $\lambda (e_1) = \lambda(e_2)$.
\end{observation}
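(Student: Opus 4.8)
The plan is to unfold \Cref{def:travelDelays} for both travel delays $\tau_v^{u,w}$ and $\tau_v^{w,u}$ and then compare them by a short case distinction on the relationship between $\lambda(e_1)$ and $\lambda(e_2)$. The key point is that $\tau_v^{w,u}$ is obtained from the defining formula by interchanging the roles of the two incident edges: reading the definition with $\{w,v\}=e_2$ as the ``incoming'' edge and $\{v,u\}=e_1$ as the ``outgoing'' edge, we get $\tau_v^{w,u}=\lambda(e_1)-\lambda(e_2)$ when $\lambda(e_1)>\lambda(e_2)$, and $\tau_v^{w,u}=\lambda(e_1)-\lambda(e_2)+\Delta$ otherwise.

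First I would handle the case $\lambda(e_1)\neq\lambda(e_2)$. If $\lambda(e_2)>\lambda(e_1)$, then the first branch of the definition gives $\tau_v^{u,w}=\lambda(e_2)-\lambda(e_1)$, while $\lambda(e_1)<\lambda(e_2)$ puts $\tau_v^{w,u}$ into its second branch, so $\tau_v^{w,u}=\lambda(e_1)-\lambda(e_2)+\Delta$. Adding these yields $\tau_v^{u,w}+\tau_v^{w,u}=\Delta$, that is, $\tau_v^{u,w}=\Delta-\tau_v^{w,u}$. The symmetric subcase $\lambda(e_2)<\lambda(e_1)$ is identical with the two branches swapped, and gives the same identity.

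Finally I would treat the case $\lambda(e_1)=\lambda(e_2)$. Here $\lambda(e_2)$ is not strictly larger than $\lambda(e_1)$, so $\tau_v^{u,w}$ falls into the second branch and equals $\lambda(e_2)-\lambda(e_1)+\Delta=\Delta$; by the same reasoning $\tau_v^{w,u}=\Delta$ as well, which is exactly the claimed equality in this case.

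Since every step is a direct substitution into a two-line piecewise definition, there is no genuine obstacle here; the only point requiring a little care is correctly identifying which branch each travel delay lands in when the roles of $e_1$ and $e_2$ are swapped. I would therefore make the branch conditions explicit in each subcase rather than invoking symmetry in words, so that the additivity $\tau_v^{u,w}+\tau_v^{w,u}=\Delta$ (respectively both delays equalling $\Delta$) is immediate.
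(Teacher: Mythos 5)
Your proof is correct and is exactly the direct unfolding of \Cref{def:travelDelays} that the paper relies on (the paper states the observation follows immediately from the definition and gives no further argument). The case analysis on which branch each of $\tau_v^{u,w}$ and $\tau_v^{w,u}$ lands in is handled correctly in all three subcases.
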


Intuitively, the value of $\tau_v^{u,w}$ quantifies how long a fastest temporal path waits at vertex~$v$ when first traversing edge $\{u,v\}$ and then edge $\{v,w\}$. Formally, we have the following. 
\begin{lemma}\label{lem:duration}
    Let $P=\left(\left(v_{i-1},v_i,t_i\right)\right)_{i=1}^k$ be a fastest temporal $(s,z)$-path. Then we have
    $d(P)=1+\sum_{i\in[k-1]}\tau_{v_i}^{v_{i-1},v_{i+1}}$.
\end{lemma}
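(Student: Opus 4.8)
The plan is to prove the formula by analyzing what the travel delays $\tau_{v_i}^{v_{i-1},v_{i+1}}$ encode about the actual traversal times of a fastest path, and then summing them telescopically. The key observation is that for a fastest temporal path, the path leaves each intermediate vertex $v_i$ at the \emph{earliest} possible time consistent with having just arrived via edge $\{v_{i-1},v_i\}$; that is, the traversal time $t_{i+1}$ of the next edge $\{v_i,v_{i+1}\}$ is the smallest label of that edge (modulo $\Delta$) that is strictly greater than $t_i$. First I would establish this ``greedy'' characterization: since along any temporal path the times strictly increase and each edge's availability is $\Delta$-periodic, a fastest path can be assumed to traverse each edge as early as possible after arriving at its tail endpoint. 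Any path that waited longer at some vertex could be shortened (or at least not lengthened) by leaving earlier, so without loss of generality a fastest path is ``eager'' at every intermediate vertex.

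Next I would relate a single eager transition to the travel delay. Suppose the path arrives at $v_i$ at time $t_i$ using edge $e_1=\{v_{i-1},v_i\}$ with label $\lambda(e_1)$, so $t_i \equiv \lambda(e_1) \pmod{\Delta}$. The next edge $e_2=\{v_i,v_{i+1}\}$ has label $\lambda(e_2)$ and becomes available at all times congruent to $\lambda(e_2)$ modulo $\Delta$. The earliest such time strictly greater than $t_i$ is exactly $t_i + \tau_{v_i}^{v_{i-1},v_{i+1}}$: when $\lambda(e_2) > \lambda(e_1)$ the wait is $\lambda(e_2)-\lambda(e_1)$, and otherwise one must wait until the next period, giving $\lambda(e_2)-\lambda(e_1)+\Delta$, which matches precisely the two cases of \Cref{def:travelDelays}. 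Hence $t_{i+1} - t_i = \tau_{v_i}^{v_{i-1},v_{i+1}}$ for each $i \in [k-1]$.

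Summing these per-step identities over $i=1,\ldots,k-1$ telescopes to $t_k - t_1 = \sum_{i\in[k-1]} \tau_{v_i}^{v_{i-1},v_{i+1}}$. Since the duration is defined as $d(P)=t_k - t_1 + 1$ (indexing the path's first transition time as $t_1$ and last as $t_k$), adding $1$ yields the claimed formula $d(P)=1+\sum_{i\in[k-1]}\tau_{v_i}^{v_{i-1},v_{i+1}}$.

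I expect the main obstacle to be the rigorous justification of the eager-traversal assumption, since \Cref{lem:duration} is asserted for \emph{any} fastest path, not merely for some conveniently chosen one. The subtlety is that a fastest path minimizes $t_k - t_1$, which depends on both the start time $t_1$ and the arrival time $t_k$; I must argue that greediness at intermediate vertices is forced, not just beneficial. The cleanest route is an exchange argument: given any fastest path, show that replacing each intermediate traversal time by the earliest feasible one keeps the start time $t_1$ fixed while not increasing $t_k$, so the modified path is still a valid temporal path of no greater duration — and, because the original was fastest, the duration is unchanged, meaning the greedy times already realized the optimum. Care is needed to confirm that the earliest-arrival times stay strictly increasing along the path (which follows from $\tau > 0$) and that fixing $t_1$ does not sacrifice optimality, which holds because shifting the entire path by a full period $\Delta$ preserves its duration, so the choice of starting period is immaterial.
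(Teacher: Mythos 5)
Your proposal is correct and follows essentially the same route as the paper: both rest on the per-step identity $t_{i+1}-t_i=\tau_{v_i}^{v_{i-1},v_{i+1}}$, which holds because a fastest path cannot wait a full extra period at any intermediate vertex, and then accumulate these increments (the paper packages the telescoping as an induction on $k$, you sum directly). Your explicit exchange argument for why fastness forces eager traversal is in fact more careful than the paper's one-line appeal to ``the fact that $P$ is a fastest temporal path.''
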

\begin{proof}
    This statement can straightforwardly be shown by induction on $k$. For $k=1$ it follows from \Cref{def:travelDelays} and the definition of the duration of a fastest temporal path. For~$k>1$ we obtain the following. 
    \begin{align*}
        d(P)&=t_k-t_0+1 = (t_{k-1}-t_0 +1) + t_k-t_{k-1}\\
        &= (1+\sum_{i\in[k-2]}\tau_{v_i}^{v_{i-1},v_{i+1}})+ t_k-t_{k-1}\\
        &= (1+\sum_{i\in[k-2]}\tau_{v_i}^{v_{i-1},v_{i+1}})+ \tau_{v_{k-1}}^{v_{k-2},v_{k}}\\
        &= 1+\sum_{i\in[k-1]}\tau_{v_i}^{v_{i-1},v_{i+1}}.
    \end{align*}
    Here, we use \Cref{def:travelDelays} and the fact that $P$ is a fastest temporal path to obtain $t_k-t_{k-1}=\tau_{v_{k-1}}^{v_{k-2},v_{k}}$.
\end{proof}

\section{The Classical Complexity on Restricted Instances}\label{sec:hardness}

In this section, we prove that \deltaUpperBound is NP-hard even in quite restrictive settings: when the input tree is a star and $\Delta\ge 3$, or when the input tree has constant diameter or constant maximum degree, and in addition $\Delta=2$. We remark that the diameter can be seen as a measure of how ``star-like'' a tree is. In particular, on trees, the diameter upper bounds the treedepth.

\begin{theorem}\label{thm:NPh}
    \deltaUpperBound\ is NP-hard even if the input matrix is symmetric and
    \begin{itemize}
    \item $\Delta\ge 3$ and $G$ is a star,
    \item $\Delta= 2$ and $G$ has constant diameter and a constant pathwidth, or
    \item $\Delta=2$ and $G$ has constant maximum degree and a constant pathwidth.
    \end{itemize} 
\end{theorem}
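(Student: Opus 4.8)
The plan is to first translate feasibility of a labeling into a purely combinatorial condition on the travel delays, and then to design separate reductions for the three regimes. By \Cref{lem:duration}, since $G$ is a tree there is a unique underlying path between any two vertices $v_i,v_j$, and the duration of the fastest temporal $(v_i,v_j)$-path equals $1+\sum_{v}\tau_v$, the sum ranging over the internal vertices of that path. Hence a $\Delta$-periodic labeling is feasible if and only if, for every pair, this sum of travel delays is at most $D_{i,j}$. I would specialize this in two ways. For a star, each such path has the single internal vertex $c$, so the constraint for the pair $(\ell_i,\ell_j)$ is simply $1+\tau_c^{\ell_i,\ell_j}\le D_{i,j}$. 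For $\Delta=2$, \Cref{def:travelDelays} gives $\tau_v\in\{1,2\}$, with $\tau_v=1$ exactly when the two edges at $v$ carry different labels; writing $k$ for the length of the path and $c$ for the number of label changes along it, the duration becomes $2k-1-c$, so each constraint turns into a \emph{lower bound} on the number of label changes along the corresponding path.

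For the star with $\Delta\ge 3$ I would reduce from $\Delta$-\textsc{Coloring}, which is NP-hard for every fixed $\Delta\ge 3$. Take one leaf $\ell_i$ per vertex $i$ of the input graph $H$ and read the label of the edge $\{c,\ell_i\}$ as the color of $i$ (there are exactly $\Delta$ labels). By \Cref{obs:traveldelays}, equal labels give $\tau_c^{\ell_i,\ell_j}=\tau_c^{\ell_j,\ell_i}=\Delta$, whereas distinct labels give both delays at most $\Delta-1$. Thus setting $D_{i,j}=\Delta$ for every pair with $\{i,j\}\in E(H)$ forces the two incident edges to receive distinct labels (otherwise the duration $\Delta+1$ exceeds the bound in both directions), while setting $D_{i,j}=\Delta+1$ for non-edges imposes no constraint, since the maximum possible duration is $\Delta+1$. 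The center-to-leaf and diagonal entries are set large enough to be vacuous. The matrix is symmetric, and feasible labelings correspond exactly to proper $\Delta$-colorings of $H$, which yields NP-hardness.

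For the two $\Delta=2$ regimes I would reduce from a suitable bounded-structure constraint problem (a restricted SAT or CSP variant), encoding the truth value of each variable as the $\{1,2\}$-label of a dedicated edge and each constraint as a short path whose required minimum number of label changes can be met precisely when the constraint is satisfied. The two topological restrictions dictate two different layouts of the \emph{same} underlying idea. For constant diameter at most $6$, every constraint must read only a bounded number of edges, so I would attach all variable-edges and all (bounded-length) checker paths within radius $3$ of a single hub; the hub may have large degree, but a depth-bounded subdivided-star layout keeps the pathwidth constant. For constant maximum degree at most $8$, a hub is impossible, so instead I would string the variable-edges along a bounded-degree caterpillar spine --- accepting a large diameter --- and realize each constraint by a short path attached to the spine, which again keeps the pathwidth constant.

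The main obstacle is the design and verification of these gadgets for $\Delta=2$. The difficulty is twofold. First, the quantity being constrained, the number of label changes along a path, is a global function of the labeling, and the same edge lies on many constrained paths; one must therefore choose the thresholds $D_{i,j}$ so that the only way to meet every lower bound simultaneously is to fix the variable-edges to a consistent truth assignment, and then argue both directions of the correspondence through a careful case analysis on the travel delays via \Cref{lem:duration}. Second, the two structural requirements pull in opposite directions --- small diameter wants a single high-degree hub, whereas small degree wants long paths --- so the real work is to realize the encoding under each restriction while simultaneously certifying that the pathwidth remains constant. I expect this gadget engineering, rather than the high-level reduction strategy, to be where essentially all the technical effort lies.
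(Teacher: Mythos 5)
Your first bullet (the star, $\Delta\ge 3$) is complete and is essentially the paper's own reduction from \COL[$\Delta$]: the same encoding of colors as edge labels, the same choice of $D_{u,v}=\Delta$ on edges and $\Delta+1$ on non-edges, and the same use of \Cref{obs:traveldelays}. Your reformulation of the $\Delta=2$ case as lower bounds on the number of label changes along paths (duration $=2k-1-c$) is also correct and is the right lens for the remaining two bullets.

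For the second and third bullets, however, there is a genuine gap: you describe a strategy but do not construct the gadgets, and the gadgets are where the entire proof lives. Three things are missing. First, the choice of source problem matters: the paper reduces from \textsc{Monotone NAE-3SAT}, and some NAE- or parity-flavoured problem is essentially forced here, because with $\Delta=2$ an upper bound on a duration can only express ``not all labels on this path are equal'' (or, on a length-two path, ``these two labels differ''); it cannot directly express an implication, so ordinary 3-SAT clauses are not straightforwardly encodable. Second, the crucial device is the paper's \Cref{cor:forcing}: when $D_{s,t}+D_{t,s}=(k-1)\cdot\Delta+2$, the label of the first edge of the $s$--$t$ path is forced to be a fixed function of the label of the last edge. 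Since for $\Delta=2$ ``different from different'' yields ``equal'', chaining such tight two-sided bounds lets one \emph{copy} a variable's label from its variable gadget into each clause gadget and then propagate it through a cascade of auxiliary cherries until the three truth values sit on three consecutive edges of a path, where a single bound $D=4$ on that length-three path expresses the NAE condition. Your proposal never identifies a mechanism for forcing equality of labels of non-adjacent edges (as opposed to forcing ``at least one change somewhere''), and without it the thresholds $D_{i,j}$ you allude to cannot pin the gadgets to a consistent assignment. Third, in the bounded-degree layout the variable gadget and the clause gadgets reading it are far apart on the spine, so the copying constraints must be distance-dependent (the paper uses $D=2(n-i_\ell+j)+3$ on paths of length $2(n-i_\ell+j)+2$, permitting exactly one repeated label along the whole path) and the spine itself must be forced to alternate; none of this is addressed. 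So bullets two and three remain a plausible plan rather than a proof.
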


We start with showing the first statement in \Cref{thm:NPh} via a simple reduction from graph coloring. Essentially the same reduction was independently discovered by Meusel~et~al.~\cite{meusel2025directedtemporaltreerealization}.

\begin{proposition}\label{prop:star}
\deltaUpperBound is NP-hard even if $\Delta\ge3$ is constant and the input tree $G$ is a star.
\end{proposition}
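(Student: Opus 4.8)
The plan is to reduce from graph $\Delta$-coloring, which is NP-hard for every constant $\Delta\ge 3$. Given an instance $H=(V_H,E_H)$ of $\Delta$-coloring with $V_H=\{u_1,\dots,u_m\}$, I would build a star $G$ whose center is a fresh vertex $c$ and whose leaves are $v_1,\dots,v_m$, one per vertex of $H$ (so $n=m+1$). The guiding structural fact is that in a star, every temporal path between two leaves $v_i$ and $v_j$ is forced to use exactly the two edges $\{v_i,c\}$ and $\{c,v_j\}$; hence by \Cref{lem:duration} its duration equals $1+\tau_c^{v_i,v_j}$, i.e.\ it is entirely governed by the single travel delay at the center. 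Accordingly, the intended reading is that the label $\lambda(\{c,v_i\})\in\{1,\dots,\Delta\}$ is the color assigned to $u_i$.

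Next I would set up the target matrix $D$ so that the upper bounds encode exactly the edges of $H$. The decisive point is the dichotomy provided by \Cref{def:travelDelays} and \Cref{obs:traveldelays}: if $\lambda(\{c,v_i\})\neq\lambda(\{c,v_j\})$, then $\tau_c^{v_i,v_j}\in\{1,\dots,\Delta-1\}$ and likewise $\tau_c^{v_j,v_i}=\Delta-\tau_c^{v_i,v_j}\in\{1,\dots,\Delta-1\}$, so both directed durations are at most $\Delta$; whereas if the two labels coincide, then $\tau_c^{v_i,v_j}=\tau_c^{v_j,v_i}=\Delta$ and both durations equal $\Delta+1$. Therefore, for each edge $\{u_i,u_j\}\in E_H$ I set $D_{i,j}=D_{j,i}=\Delta$, which forces $v_i$ and $v_j$ to receive distinct labels; for every non-edge I set $D_{i,j}=D_{j,i}=\Delta+1$, which is satisfied no matter what, since the largest possible leaf-to-leaf duration is exactly $\Delta+1$. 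The center row/column entries and the diagonal are filled with trivially satisfiable bounds (the direct $c$-to-leaf durations are $1$). Note that this very dichotomy is what lets the same bound $\Delta$ sit symmetrically in both entries, so the produced matrix $D$ is symmetric, as the proposition demands.

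With this construction the correspondence is immediate: a $\Delta$-periodic labeling $\lambda$ realizing all upper bounds exists if and only if the leaf labels form a proper $\Delta$-coloring of $H$, because the only nontrivial constraints are the ``$\le\Delta$'' bounds on edges of $H$, each equivalent to a ``distinct color'' constraint on $\{v_i,v_j\}$. The reduction runs in polynomial time, produces a star with a constant period $\Delta\ge 3$, and hence yields the claimed NP-hardness.

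I expect the only genuine subtlety—and thus the step to argue carefully—to be the \emph{asymmetry} of travel delays: a priori $\tau_c^{v_i,v_j}\neq\tau_c^{v_j,v_i}$, so it is not obvious that a symmetric matrix can encode an undirected coloring constraint. The resolution is exactly the dichotomy above: equal labels blow up \emph{both} directions to $\Delta+1$ while distinct labels keep \emph{both} directions at most $\Delta$. Verifying this equivalence, together with confirming that no alternative (shorter) temporal path between two leaves exists in a star, is the crux of the argument.
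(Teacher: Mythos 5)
Your proposal is correct and is essentially identical to the paper's own proof: the same reduction from \COL[$\Delta$] to a star with center $c$, the same bounds $D_{i,j}=\Delta$ for edges and $\Delta+1$ for non-edges, and the same duration dichotomy (equal labels give duration $\Delta+1$ in both directions, distinct labels give at most $\Delta$ in both directions) justified via \Cref{lem:duration} and \Cref{obs:traveldelays}. No gaps.
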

\begin{proof}
We present a reduction from~\COL[$x$], which is known to be NP-hard for each~$x \geq 3$~\cite{Karp1972Reducibility}.

\problemdef{\COL[$x$]}{A graph $G = (V,E)$.}{Is there a~\emph{proper~$x$-coloring~$\chi$} of~$G$, that is, a function~$\chi \colon V \to [1,x]$, such that for each edge~$\{u,v\}\in E$, $\chi(u) \neq \chi(v)$?}

Let~$\Delta \geq 3$ and let~$G=(V,E)$ be an instance of~\COL[$\Delta$].
We obtain an equivalent instance~$I:=(G',D,\Delta)$ of~\deltaUpperBound as follows:
The graph~$G':=(V \cup \{c\}, E')$ is a star with center vertex~$c$ and leaves~$V$.
For each vertex~$v\in V$, we set $D_{v,c} = D_{c,v} = 1$.
Furthermore, for each two distinct vertices~$u$ and~$v$ of~$V$, we set
$$D_{u,v} := D_{v,u} := \begin{cases}\Delta & \text{if } \{u,v\}\in E \text{, and}\\ \Delta+1 & \text{ otherwise.} \end{cases}$$
This completes the construction of~$I$. The matrix $D$ is clearly symmetric.
Next, we show that~$G$ is~$\Delta$-colorable if and only if~$I$ is a yes-instance of~\deltaUpperBound.

$(\Rightarrow)$
Let~$\chi \colon V \to [1,\Delta]$ be a~$\Delta$-coloring of~$G$.
We define an edge labeling~$\lambda$ of~$G'$ that realized~$D$ as follows:
For each vertex~$v\in V$, we set~$\lambda(\{c,v\}) := \chi(v)$.
Next, we show that~$\lambda$ realizes~$D$.
To this end, note that all entries of~$D$ of value at most 1 are trivially realized.
Hence, it remains to show that for any two distinct vertices~$u$ and~$v$ of~$V$, the duration of any fastest temporal path from~$u$ to~$v$ (and vice versa) is at most~$D_{u,v}$.
Note that this trivially holds for all~$\{u,v\}\notin E$, since for such vertex pairs $D_{u,v} = D_{v,u} = \Delta + 1$, which is an upper bound for the duration of any temporal path of length 2.
Hence, it remains to consider the vertex pairs~$\{u,v\}$ that are edges of~$G$.
Since $\chi$ is a~$\Delta$-coloring of~$G$, $u$ and~$v$ receive distinct colors under~$\chi$ and, thus, the edges~$\{c,u\}$ and~$\{c,v\}$ receive distinct labels under~$\lambda$.
This implies that both temporal paths~$(u,c,v)$ and~$(v,c,u)$ have a duration of at most~$\Delta$ each.
Hence, $I$ is a yes-instance of~\deltaUpperBound, since $\lambda$ realizes~$D$.

$(\Leftarrow)$
Let~$\lambda\colon E' \to [1,\Delta]$ be an edge labeling of~$G'$ that realizes~$D$.
We define a $\Delta$-coloring~$\chi$ of the vertices of~$V$ as follows:
For each vertex~$v\in V$, we set~$\chi(v) := \lambda(\{c,v\})$.
Next, we show that for each edge~$\{u,v\}\in E$, $u$ and~$v$ receive distinct colors under~$\chi$.
Since~$\{u,v\}$ is an edge of~$E$, $D_{u,v} = \Delta$.
Hence, $\lambda(\{c,u\}) \neq \lambda(\{c,v\})$, as otherwise, the unique path~$(u,c,v)$ from~$u$ to~$v$ in~$G'$ has a duration of exactly~$\Delta +1 > D_{u,v}$.
This implies that~$\chi(u) = \lambda(\{c,u\}) \neq \lambda(\{c,v\}) = \chi(v)$.
Consequently, $G$ is a yes-instance of~\COL[$\Delta$].
\end{proof}

To show the second and third statement of \Cref{thm:NPh} we present two reductions from the NP-hard problem \textsc{Monotone Not-All-Equal 3-Satisfiability (MonNAE3SAT)} to \deltaUpperBound. It is known that \textsc{MonNAE3SAT} is NP-hard~\cite{Schaefer1978complexity}.

\problemdef{\textsc{Monotone Not-All-Equal 3-Satisfiability (MonNAE3SAT)}}{A set $X$ of Boolean variables and a set $C\subseteq X^3$ of clauses, each of which contains three variables.}{Is there an assignment to the variables $X$ such that for each clause in $C$ at least one variable is set to true and at least one variable is set to false?}

Given an instance of \textsc{MonNAE3SAT}, we call an assignment of Boolean values to the variables $X$ \emph{satisfying}, if each clause in $C$ is \emph{satisfied}, that is, it holds that at least one variable in the clause is set to true and at least one variable is set to false. 
We give one reduction showing that \deltaUpperBound\ is NP-hard even if the input tree~$G$ has constant diameter and we give a second reduction showing that \deltaUpperBound\ is NP-hard even if $G$ has constant maximum degree, and, in both cases, $\Delta=2$. Both reductions are based on similar ideas. Before we give formal descriptions, we present several properties of labelings that we will exploit. 

The first property establishes a relation between the duration of a fastest temporal path from $s$ to $t$ and the duration of a fastest temporal path from $t$ back to $s$.

\begin{lemma}
\label{lem:durationsum}
    Let $G$ be a tree and $(G, \lambda)$ be a $\Delta$-periodic temporal graph.
    Let $P_{s,t}=\left(\left(v_{i-1},v_i,t_i\right)\right)_{i=1}^k$ be a fastest temporal path in $(G, \lambda)$ from $s=v_0$ to $t=v_k$. Let $P_{t,s}$ be a fastest temporal path in $(G, \lambda)$ from $t$ to $s$.
	Then we have
 \begin{equation*}
     d(P_{s,t})+d(P_{t,s})\ge (k-1)\cdot\Delta+2.
 \end{equation*}
 Furthermore, if for all $i\in[k-1]$ we have $\lambda(\{v_{i-1},v_i\})\neq\lambda(\{v_i,v_{i+1}\})$, then we have
  \begin{equation*}
     d(P_{s,t})+d(P_{t,s})= (k-1)\cdot\Delta+2.
 \end{equation*}
\end{lemma}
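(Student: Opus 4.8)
The plan is to express both durations through the travel-delay formula of \Cref{lem:duration} and then, at each internal vertex, pair the forward travel delay against the backward one, invoking \Cref{obs:traveldelays} to control each pair.

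First I would exploit that $G$ is a tree: since the path between $s$ and $t$ in $G$ is unique, the fastest temporal path $P_{t,s}$ must traverse exactly the same edges as $P_{s,t}$, only in reverse order. Hence $P_{t,s}$ visits the vertices $v_k, v_{k-1}, \ldots, v_0$, and \Cref{lem:duration} applies to it with this reversed vertex sequence.

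Applying \Cref{lem:duration} to $P_{s,t}$ gives $d(P_{s,t}) = 1 + \sum_{i\in[k-1]} \tau_{v_i}^{v_{i-1}, v_{i+1}}$. Applying it to $P_{t,s}$, whose $j$-th internal vertex is $v_{k-j}$, gives $d(P_{t,s}) = 1 + \sum_{j\in[k-1]} \tau_{v_{k-j}}^{v_{k-j+1}, v_{k-j-1}}$; reindexing by $i = k-j$ rewrites this as $1 + \sum_{i\in[k-1]} \tau_{v_i}^{v_{i+1}, v_{i-1}}$. Adding the two durations then groups the terms by internal vertex:
\[
d(P_{s,t}) + d(P_{t,s}) = 2 + \sum_{i\in[k-1]} \left( \tau_{v_i}^{v_{i-1}, v_{i+1}} + \tau_{v_i}^{v_{i+1}, v_{i-1}} \right).
\]

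Finally I would apply \Cref{obs:traveldelays} term by term: each bracketed sum equals $\Delta$ when $\lambda(\{v_{i-1},v_i\}) \neq \lambda(\{v_i,v_{i+1}\})$, and equals $2\Delta$ when the two labels coincide. In either case it is at least $\Delta$, which yields the claimed bound $d(P_{s,t}) + d(P_{t,s}) \geq (k-1)\Delta + 2$; moreover, equality holds precisely when every bracketed sum equals $\Delta$, that is, when consecutive edges along the path receive distinct labels, which is exactly the hypothesis of the second statement. I do not expect a serious obstacle here: the only point demanding care is the reindexing of the backward sum so that the forward and backward travel delays at each $v_i$ are correctly paired, after which \Cref{obs:traveldelays} closes the argument immediately.
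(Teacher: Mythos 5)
Your proposal is correct and follows exactly the route the paper intends: its proof of this lemma is the one-liner ``This follows directly from \Cref{lem:duration} and \Cref{obs:traveldelays}'', and your argument---using the tree structure to force $P_{t,s}$ onto the reversed edge sequence, summing the two travel-delay expansions vertex by vertex, and applying \Cref{obs:traveldelays} to each pair---is precisely the intended expansion of that sentence. No gaps.
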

\begin{proof}
 This follows directly from \Cref{lem:duration} and \Cref{obs:traveldelays}.
\end{proof}
From \Cref{lem:durationsum}, we immediately get the following.
\begin{corollary}\label{cor:forcing}
    Let $(G,D,\Delta)$ be a yes-instance of \deltaUpperBound. Let $P_{s,t}=\left(\left(v_{i-1},v_i,t_i\right)\right)_{i=1}^k$ with $k>1$ be a fastest temporal path in $(G, \lambda)$ from $s=v_0$ to $t=v_k$. Assume that $D_{s,t}+D_{t,s}=(k-1)\cdot\Delta+2$. Then for every solution $\lambda$, we have
    $\lambda(\{v_0,v_1\})=\lambda(\{v_{k-1},v_k\})-D_{s,t}+i\cdot\Delta+1$,
    for some (uniquely determined) $i\ge 0$.
\end{corollary}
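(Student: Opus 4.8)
The plan is to derive the corollary directly from \Cref{lem:durationsum} together with \Cref{lem:duration}. First I would observe that since $P_{s,t}$ is a fastest temporal path of length $k$, the hypothesis $D_{s,t}+D_{t,s}=(k-1)\cdot\Delta+2$, combined with the general inequality $d(P_{s,t})+d(P_{t,s})\ge (k-1)\cdot\Delta+2$ from \Cref{lem:durationsum} and the feasibility constraints $d(P_{s,t})\le D_{s,t}$ and $d(P_{t,s})\le D_{t,s}$, forces both durations to meet their bounds \emph{exactly}: we must have $d(P_{s,t})=D_{s,t}$ and $d(P_{t,s})=D_{t,s}$, with the inequality of \Cref{lem:durationsum} holding with equality. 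By the ``furthermore'' part of \Cref{lem:durationsum} (read contrapositively, or by noting that equality pins down the consecutive labels), this equality regime is exactly the case where no two consecutive edges along the path share a label, i.e.\ $\lambda(\{v_{i-1},v_i\})\neq\lambda(\{v_i,v_{i+1}\})$ for all $i\in[k-1]$.

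Next I would translate the exact value $d(P_{s,t})=D_{s,t}$ into a constraint on the first and last edge labels. By \Cref{lem:duration}, $d(P_{s,t})=1+\sum_{i\in[k-1]}\tau_{v_i}^{v_{i-1},v_{i+1}}$, and since all consecutive labels differ, each travel delay $\tau_{v_i}^{v_{i-1},v_{i+1}}$ equals $\lambda(\{v_i,v_{i+1}\})-\lambda(\{v_{i-1},v_i\})$ taken modulo $\Delta$ into the range $\{1,\dots,\Delta-1\}$ by \Cref{def:travelDelays}. Summing these delays telescopes: each sum $\sum_{i\in[k-1]}(\lambda(\{v_i,v_{i+1}\})-\lambda(\{v_{i-1},v_i\}))$ collapses to $\lambda(\{v_{k-1},v_k\})-\lambda(\{v_0,v_1\})$, while the added multiples of $\Delta$ account for an integer number of full periods. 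Hence $D_{s,t}-1=\lambda(\{v_{k-1},v_k\})-\lambda(\{v_0,v_1\})+i\cdot\Delta$ for some integer $i\ge 0$, which rearranges to exactly $\lambda(\{v_0,v_1\})=\lambda(\{v_{k-1},v_k\})-D_{s,t}+i\cdot\Delta+1$, the claimed identity.

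Finally I would argue uniqueness of $i$. Since the labels $\lambda(\{v_0,v_1\})$ and $\lambda(\{v_{k-1},v_k\})$ both lie in the finite range $\{1,\dots,\Delta\}$, the quantity $\lambda(\{v_0,v_1\})+D_{s,t}-1-\lambda(\{v_{k-1},v_k\})$ is determined, and dividing by $\Delta$ fixes a single value of $i$; equivalently, two distinct nonnegative integers $i$ would give first-edge labels differing by at least $\Delta$, impossible within $\{1,\dots,\Delta\}$. The main subtlety, rather than any heavy computation, is the first step: correctly arguing that the hypothesis forces \emph{both} fastest-path durations to hit their upper bounds simultaneously and therefore places us in the equality case of \Cref{lem:durationsum} where consecutive labels are guaranteed distinct—this is what licenses the clean telescoping of travel delays and makes the modular arithmetic collapse to a statement purely about the two endpoint edge labels.
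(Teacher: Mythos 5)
Your proof is correct and takes essentially the same route as the paper, which derives the corollary directly from \Cref{lem:durationsum} (stating only that it follows ``immediately''): your squeeze argument forcing $d(P_{s,t})=D_{s,t}$ and $d(P_{t,s})=D_{t,s}$, hence equality in \Cref{lem:durationsum} and distinct consecutive labels, followed by the telescoping of travel delays, is exactly the detail the paper omits. One small imprecision: the implication you need (equality forces distinct consecutive labels) is not literally the contrapositive of the ``furthermore'' clause of \Cref{lem:durationsum}, but it does follow from \Cref{obs:traveldelays} combined with \Cref{lem:duration}, as your parenthetical remark indicates.
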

Intuitively, \Cref{cor:forcing} allows us to use the matrix $D$ ``force'' certain labels of certain pairs of edges to be a function of each other. In particular, we can force the labels of certain pairs of edges to be the same. We will heavily exploit this in the gadgets of the reduction.

The next property provides some insight into the maximum duration that any fastest temporal path in a $\Delta$-periodic temporal graph can have.
\begin{lemma}
\label{lem:maxdur}
    Let $(G, \lambda)$ be a $\Delta$-periodic temporal graph.
    Let $P=\left(\left(v_{i-1},v_i,t_i\right)\right)_{i=1}^k$ be a fastest temporal path in $(G, \lambda)$.
	Then we have $d(P)\le (k-1)\cdot\Delta+1$.
 Furthermore, if~$d(P)= (k-1)\cdot\Delta+1$, then we have for all $i\in[k-1]$ that $\lambda(\{v_{i-1},v_i\})=\lambda(\{v_i,v_{i+1}\})$.
\end{lemma}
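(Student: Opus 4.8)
The plan is to bound $d(P)$ term by term using the expression for the duration from \Cref{lem:duration}, namely $d(P) = 1 + \sum_{i \in [k-1]} \tau_{v_i}^{v_{i-1},v_{i+1}}$. The whole argument reduces to showing that every individual travel delay $\tau_{v_i}^{v_{i-1},v_{i+1}}$ is at most $\Delta$, and that it equals $\Delta$ exactly when the two incident edges carry the same label.

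First I would establish the per-term bound $\tau_v^{u,w} \le \Delta$ for any two incident edges $e_1 = \{u,v\}$ and $e_2 = \{v,w\}$. This follows directly from \Cref{obs:traveldelays}: if $\lambda(e_1) = \lambda(e_2)$, then $\tau_v^{u,w} = \Delta$; and if $\lambda(e_1) \neq \lambda(e_2)$, then $\tau_v^{u,w} = \Delta - \tau_v^{w,u}$, where $\tau_v^{w,u} \ge 1$ is itself a strictly positive travel delay. Hence $\tau_v^{u,w} \le \Delta - 1 < \Delta$ in this case. In particular, $\tau_v^{u,w} = \Delta$ holds if and only if $\lambda(e_1) = \lambda(e_2)$. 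Summing this bound over $i \in [k-1]$ and substituting into \Cref{lem:duration} immediately yields $d(P) \le 1 + (k-1)\cdot\Delta$, which is the first claim.

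For the equality characterization, I would observe that $d(P) = (k-1)\cdot\Delta + 1$ forces every summand $\tau_{v_i}^{v_{i-1},v_{i+1}}$ to attain its maximum value $\Delta$, since each summand is at most $\Delta$ and a strict inequality in any single term would make the whole sum strictly smaller. By the equality condition established above, this is equivalent to $\lambda(\{v_{i-1},v_i\}) = \lambda(\{v_i,v_{i+1}\})$ for every $i \in [k-1]$, as desired. There is essentially no hard step here: the statement is a direct consequence of \Cref{lem:duration} together with the elementary bound on a single travel delay. The only point requiring a little care is verifying that a travel delay between two edges of distinct labels is strictly positive, so that the second case of \Cref{obs:traveldelays} gives a strict inequality rather than just $\le \Delta$; this is immediate from the case distinction in \Cref{def:travelDelays}, since $\lambda(e_2) - \lambda(e_1) \ge 1$ when $\lambda(e_2) > \lambda(e_1)$, and $\lambda(e_2) - \lambda(e_1) + \Delta \ge 1$ when $\lambda(e_2) < \lambda(e_1)$.
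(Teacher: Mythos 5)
Your proposal is correct and follows exactly the route the paper takes: the paper's proof is the one-line remark that the statement ``follows directly from \Cref{lem:duration} and \Cref{obs:traveldelays}'', and your argument simply spells out the per-term bound $\tau_{v_i}^{v_{i-1},v_{i+1}}\le\Delta$ with equality precisely when the two incident edges share a label. The added detail about strict positivity of travel delays between differently labeled edges is accurate and fills in what the paper leaves implicit.
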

\begin{proof}
This follows directly from \Cref{lem:duration} and \Cref{obs:traveldelays}.
\end{proof}
Intuitively, \Cref{lem:maxdur} tells us which values in $D$ essentially do not put any constraints on a solution labeling. More formally, given a graph $G$ and two vertices $s,t$ of distance $k$ in $G$, then we say that the upper bound $D_{s,t}$ is \emph{trivial} if $D_{s,t}=(k-1)\cdot\Delta+1$. 

Next, we present variable gadgets and clause gadgets that we use in the reductions. The above properties will help us prove that the gadgets have the desired functionality.
From now on, assume that we are given an instance $(X,C)$ of \textsc{MonNAE3SAT}. We set $\Delta=2$.

For each variable $x\in X$ we create a \emph{variable gadget} $(G_x,D)$, where $G_x=(V_x,E_x)$ with $V_x=\{w_x,v_{x,1},v_{x,2}\}$ and $E_x=\{\{w_x,v_{x,1}\},\{w_x,v_{x,2}\}\}$. We set $D_{v_{x,1},v_{x,2}}=D_{v_{x,1},v_{x,2}}=2$ and make all other upper bounds trivial. For an illustration see \Cref{fig:gadgets}. Informally speaking, the upper bound enforces that the two edges of the variable gadgets have different labels.

For each clause $c\in C$ we create a \emph{clause gadget} $(G_c,D)$, where $G_c=(V_c,E_c)$. Let $c=(x,y,z)$. We define $V_c$ and $E_c$ as follows. 
\begin{itemize}
    \item $V_c=\{w_c,u_{c,1},u_{c,2},u_{c,3},u_{c,x,1},u_{c,x,2},u_{c,y,1},u_{c,y,2},u_{c,y,3},u_{c,y,4},u_{c,z,1},u_{c,z,2},$ $u_{c,z,3},u_{c,z,4},u_{c,z,5},u_{c,z,6}\}$, and
    \item $E_c=\{\{w_c,u_{c,1}\},\{u_{c,1},u_{c,2}\},\{u_{c,2},u_{c,3}\},\{w_c,u_{c,x,1}\},\{w_c,u_{c,x,2}\},\{w_c,u_{c,y,1}\},$ $\{w_c,u_{c,y,2}\},\{w_c,u_{c,z,1}\},\{w_c,u_{c,z,2}\},\{u_{c,1},u_{c,y,3}\},\{u_{c,1},u_{c,y,4}\},\{u_{c,1},u_{c,z,3}\},$ $\{u_{c,1},u_{c,z,4}\},\{u_{c,2},u_{c,z,5}\},\{u_{c,2},u_{c,z,6}\}\}$.
\end{itemize}
For an illustration see \Cref{fig:gadgets}. 
Furthermore, we set 
\vspace{-2ex}
\begin{multicols}{2}
\begin{itemize}
    \item $D_{u_{c,x,1},u_{c,x,2}}=D_{u_{c,x,2},u_{c,x,1}}=2$,
    \item $D_{u_{c,y,1},u_{c,y,2}}=D_{u_{c,y,2},u_{c,y,1}}=2$,
    \item $D_{u_{c,y,3},u_{c,y,4}}=D_{u_{c,y,4},u_{c,y,3}}=2$,
    \item $D_{u_{c,z,1},u_{c,z,2}}=D_{u_{c,z,2},u_{c,z,1}}=2$,
    \item $D_{u_{c,z,3},u_{c,z,4}}=D_{u_{c,z,4},u_{c,z,3}}=2$,
    \item $D_{u_{c,z,5},u_{c,z,6}}=D_{u_{c,z,6},u_{c,z,5}}=2$.
\end{itemize}
\end{multicols}
\vspace{-2ex}
\noindent We refer to the above as the \emph{first set of upper bounds}. We set 
\vspace{-2ex}
\begin{multicols}{2}
\begin{itemize}
    \item $D_{u_{c,x,2},u_{c,1}}=D_{u_{c,1},u_{c,x,2}}=2$,
\end{itemize}
\end{multicols}
\vspace{-2ex}
\noindent which we refer to as the \emph{second set of upper bounds}. Moreover, we set
\vspace{-2ex}
\begin{multicols}{2}
\begin{itemize}
    \item $D_{u_{c,y,1},u_{c,y,4}}=D_{u_{c,y,4},u_{c,y,1}}=4$,
    \item $D_{u_{c,y,2},u_{c,y,3}}=D_{u_{c,y,3},u_{c,y,2}}=4$,
    \item $D_{u_{c,y,4},u_{c,2}}=D_{u_{c,2},u_{c,y,4}}=2$,
    \item[\vspace{\fill}]
\end{itemize}
\end{multicols}
\vspace{-2ex}
\noindent We refer to the above as the \emph{third set of upper bounds}. Moreover, we set 
\vspace{-2ex}
\begin{multicols}{2}
\begin{itemize}
    \item $D_{u_{c,z,1},u_{c,z,4}}=D_{u_{c,z,4},u_{c,z,1}}=4$,
    \item $D_{u_{c,z,2},u_{c,z,3}}=D_{u_{c,z,3},u_{c,z,2}}=4$,
    \item $D_{u_{c,z,3},u_{c,z,6}}=D_{u_{c,z,6},u_{c,z,3}}=4$,
    \item $D_{u_{c,z,4},u_{c,z,5}}=D_{u_{c,z,5},u_{c,z,4}}=4$,
    \item $D_{u_{c,z,6},u_{c,3}}=D_{u_{c,3},u_{c,z,6}}=2$,
    \item[\vspace{\fill}]
\end{itemize}
\end{multicols}
\vspace{-2ex}
\noindent We refer to the above as the \emph{fourth set of upper bounds}. Finally, we set 
\vspace{-2ex}
\begin{multicols}{2}
\begin{itemize}
    \item $D_{w_{c},u_{c,3}}=D_{u_{c,3},w_{c}}=4$,
\end{itemize}
\end{multicols}
\vspace{-2ex}
\noindent which we refer to as the \emph{fifth set of upper bounds}. We make all other upper bounds trivial.

The intuition behind the clause gadget is as follows. For an illustration see \Cref{fig:gadgets}. The red edges in the figure are related to variable $x$, the green edges to variable $y$, and the blue edges to variable $z$. In particular, we can think of the subgraph induced on $w_c, u_{c,x,1},u_{c,x,2}$ as a copy of the variable gadget of~$x$. When we connect the gadgets, we will enforce that the label on the edge $\{w_c,u_{c,x,1}\}$ is the same as the label on the edge $\{w_x,v_{x,1}\}$ from the variable gadget for $x$. Similarly, we can think of the subgraphs induced on $w_c, u_{c,y,1},u_{c,y,2}$ and $u_{c,1}, u_{c,y,3},u_{c,y,4}$ as a copies of the variable gadget of $y$. We have an analogous situation for variable $z$, with three copies of the gadget. The first set of upper bounds ensures that the above-mentioned subgraphs actually behave in the same way as the variable gadgets they are supposed to copy.
The first two upper bounds in the second set of upper bounds copy the label of the edge $\{w_c,u_{c,x,1}\}$ (which encodes the truth value of variable $x$) to the edge $\{w_c,u_{c,1}\}$. The third set of upper bounds copies the labels of the subgraph induced on $w_c, u_{c,y,1},u_{c,y,2}$ to the subgraph induced on $u_{c,1}, u_{c,y,3},u_{c,y,4}$. Intuitively, it makes sure that those two copies of the variable gadget for $y$ have the same labeling. The last upper bound in the second set of upper bounds copies the label of the edge $\{u_{c,1},u_{c,y,3}\}$ (which encodes the truth value of variable $y$) to the edge $\{u_{c,1},u_{c,2}\}$. The fourth set of upper bounds plays an analogous role for variable $z$, and ultimately copies the label of the edge $\{u_{c,2},u_{c,z,5}\}$ (which encodes the truth value of variable $y$) to the edge $\{u_{c,2},u_{c,3}\}$.
Now we have that the labels on the edges $\{w_c,u_{c,1}\}$, $\{u_{c,1},u_{c,2}\}$, and 
$\{u_{c,2},u_{c,3}\}$ encode the truth values of variables $x$, $y$, and $z$, respectively. Furthermore, they form a path in the gadget. The fifth set of upper bounds enforces that the three edges in this path may not all have the same label. This corresponds to the clause being satisfied.

\begin{figure}[t]
\begin{center}
\ifarxiv
    \includegraphics[scale=1]{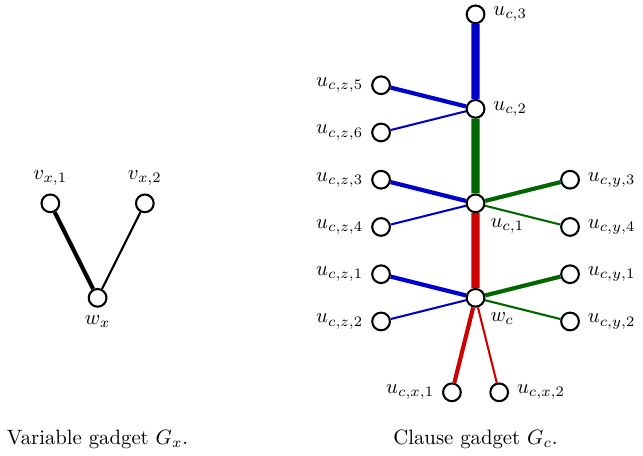}
\else
\begin{tikzpicture}[line width=1pt,scale=.8]
    \node[vert,label=below:$w_x$] (V1) at (2,0) {}; 
    \node[vert,label=above:$v_{x,1}$] (V2) at (1,2) {}; 
    \node[vert,label=above:$v_{x,2}$] (V3) at (3,2) {}; 

    \node (V4) at (2,-3) {Variable gadget $G_x$.};

    \draw[line width=2pt] (V1) -- (V2);
    \draw (V1) -- (V3);

    \node[vert,label=below right:$w_c$] (U1) at (10,0) {}; 
    \node[vert,label=left:$u_{c,z,2}$] (UU1) at (8,-.5) {}; 
    \node[vert,label=left:$u_{c,z,1}$] (UU2) at (8,.5) {}; 
    \node[vert,label=right:$u_{c,y,2}$] (UU3) at (12,-.5) {}; 
    \node[vert,label=right:$u_{c,y,1}$] (UU4) at (12,.5) {};
    \node[vert,label=left:$u_{c,x,1}$] (UU5) at (9.5,-2) {}; 
    \node[vert,label=right:$u_{c,x,2}$] (UU6) at (10.5,-2) {};
    
    \node[vert,label=below right:$u_{c,1}$] (U2) at (10,2) {}; 
    
    \node[vert,label=left:$u_{c,z,4}$] (U3) at (8,1.5) {}; 
    \node[vert,label=left:$u_{c,z,3}$] (U32) at (8,2.5) {}; 
    \node[vert,label=right:$u_{c,y,4}$] (U4) 
    at (12,1.5) {}; 
    \node[vert,label=right:$u_{c,y,3}$] (U42) 
    at (12,2.5) {}; 
    
    \node[vert,label=right:$u_{c,2}$] (U5) at (10,4) {}; 
    \node[vert,label=left:$u_{c,z,6}$] (U6) at (8,3.5) {};  
    \node[vert,label=left:$u_{c,z,5}$] (U62) at (8,4.5) {}; 
    \node[vert,label=right:$u_{c,3}$] (U7) at (10,6) {}; 

    \node (U8) at (10,-3) {Clause gadget $G_c$.};

    \draw[color=blue!80!black] (U1) -- (UU1);
    \draw[line width=2pt,color=blue!80!black] (U1) -- (UU2);
    \draw[color=green!40!black] (U1) -- (UU3);
    \draw[line width=2pt,color=green!40!black] (U1) -- (UU4);
    \draw[line width=2pt,color=red!80!black] (U1) -- (UU5);
    \draw[color=red!80!black] (U1) -- (UU6);
    \draw[line width=4pt,color=red!80!black] (U1) -- (U2);
    \draw[color=blue!80!black] (U2) -- (U3);
    \draw[line width=2pt,color=blue!80!black] (U2) -- (U32);
    \draw[color=green!40!black] (U2) -- (U4);
    \draw[line width=2pt,color=green!40!black] (U2) -- (U42);
    \draw[line width=4pt,color=green!40!black] (U2) -- (U5);
    \draw[color=blue!80!black] (U5) -- (U6);
    \draw[line width=2pt,color=blue!80!black] (U5) -- (U62);
    \draw[line width=4pt,color=blue!80!black] (U5) -- (U7);
\end{tikzpicture}
\fi
    \end{center}
    \caption{Visualization of the variable gadget for variable $x$ (left) and the clause gadget for clause $(x,y,z)$ (right). Informally, the label on the bold edge of the variable gadget models whether the variable is set to true or false. 
    In the clause gadget, the red edges are associated with variable $x$, the green ones with variable $y$, and the blue ones with variable $z$.
    The labels on the bold edges in the clause gadget are forced to have the same label as the bold edge in the variable gadget of the associated variable. In particular, this holds for the three very thick edges. Informally, the clause gadget forbids that all three of those edges obtain the same label, which corresponds to the clause being satisfied if not all three variables are set to the same truth value.}\label{fig:gadgets}
\end{figure}

For the two hardness results (the second and third statement in \Cref{thm:NPh}), we arrange the variable and clause gadgets in different ways. One way produces a tree with constant diameter, and a second way produces a tree with constant maximum degree. 

\medskip
\noindent\textbf{Reduction for trees of constant diameter.} 
We first formally describe how to create an instance $(G^{(1)},D^{(1)},\Delta)$ of \deltaUpperBound where $G^{(1)}$ is a tree with \textit{constant diameter}.
We start with a vertex $w^\star$. Now, for each variable $x$, we identify the vertex $w_x$ of the variable gadget~$G_x$ with~$w^\star$. For each clause $c$, we also identify the vertex $w_c$ of the clause gadget $G_c$ with~$w^\star$. This finishes the construction of $G^{(1)}$, for an illustration see \Cref{fig:hard1}.

\begin{figure}[t]
\begin{center}
\ifarxiv
    \includegraphics[scale=1]{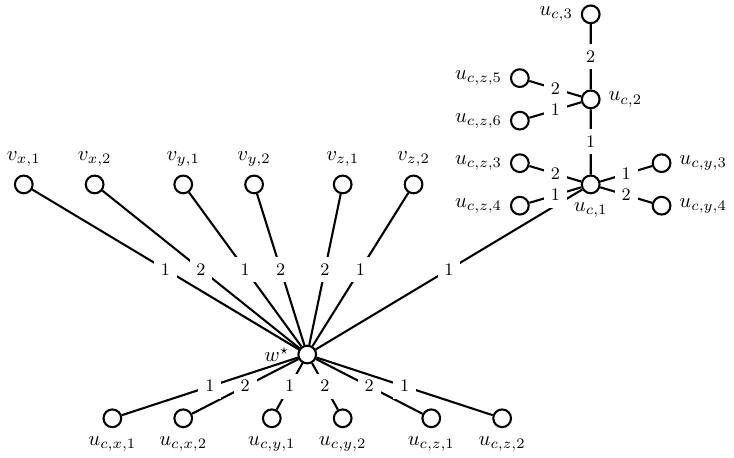}
    \else
\begin{tikzpicture}[line width=1pt,scale=.6,yscale=1.2]

    \node[vert,label=left:$w^\star$] (W) at (8,-2) {}; 
    
    \node[vert,label=above:$v_{x,1}$] (V2) at (0,2) {}; 
    \node[vert,label=above:$v_{x,2}$] (V3) at (2,2) {}; 
    
    \node[vert,label=above:$v_{y,1}$] (V22) at (4.5,2) {}; 
    \node[vert,label=above:$v_{y,2}$] (V23) at (6.5,2) {}; 

    \node[vert,label=above:$v_{z,1}$] (V32) at (9,2) {}; 
    \node[vert,label=above:$v_{z,2}$] (V33) at (11,2) {}; 



    \draw (W) --node[timelabel] {$1$} (V2);
    \draw (W) --node[timelabel] {$2$} (V3);
    \draw (W) --node[timelabel] {$1$} (V22);
    \draw (W) --node[timelabel] {$2$} (V23);
    \draw (W) --node[timelabel] {$2$} (V32);
    \draw (W) --node[timelabel] {$1$} (V33);

    \node[vert,label=below:$u_{c,1}$] (U2) at (16,2) {}; 
    \node[vert,label=left:$u_{c,z,4}$] (U3) at (14,1.5) {}; 
    \node[vert,label=left:$u_{c,z,3}$] (U32) at (14,2.5) {}; 
    \node[vert,label=right:$u_{c,y,4}$] (U4) at (18,1.5) {}; 
    \node[vert,label=right:$u_{c,y,3}$] (U42) at (18,2.5) {}; 
    \node[vert,label=right:$u_{c,2}$] (U5) at (16,4) {}; 
    \node[vert,label=left:$u_{c,z,6}$] (U6) at (14,3.5) {}; 
    \node[vert,label=left:$u_{c,z,5}$] (U62) at (14,4.5) {}; 
    \node[vert,label=left:$u_{c,3}$] (U7) at (16,6) {}; 
    \node[vert,label=below:$u_{c,x,1}$] (UU1) at (2.5,-3.5) {}; 
    \node[vert,label=below:$u_{c,x,2}$] (UU2) at (4.5,-3.5) {};
    \node[vert,label=below:$u_{c,y,1}$] (UU3) at (7,-3.5) {}; 
    \node[vert,label=below:$u_{c,y,2}$] (UU4) at (9,-3.5) {};
    \node[vert,label=below:$u_{c,z,1}$] (UU5) at (11.5,-3.5) {}; 
    \node[vert,label=below:$u_{c,z,2}$] (UU6) at (13.5,-3.5) {};


    \draw (W) --node[timelabel] {$1$} (U2);
    \draw (U2) --node[timelabel] {$1$} (U3);
    \draw (U2) --node[timelabel] {$2$} (U32);
    \draw (U2) --node[timelabel] {$2$} (U4);
    \draw (U2) --node[timelabel] {$1$} (U42);
    \draw (U2) --node[timelabel] {$1$} (U5);
    \draw (U5) --node[timelabel] {$1$} (U6);
    \draw (U5) --node[timelabel] {$2$} (U62);
    \draw (U5) --node[timelabel] {$2$} (U7);
    \draw (W) --node[timelabel] {$2$} (UU2);
    \draw (W) --node[timelabel] {$1$} (UU1);
    \draw (W) --node[timelabel] {$1$} (UU3);
    \draw (W) --node[timelabel] {$2$} (UU4);
    \draw (W) --node[timelabel] {$2$} (UU5);
    \draw (W) --node[timelabel] {$1$} (UU6);
    
\end{tikzpicture}
\fi
    \end{center}
    \caption{Illustration of $G^{(1)}$ with three variable gadgets for variables $x,y,z$, a clause gadget for clause $c=(x,y,z)$. A labeling $\lambda$ is illustrated that would be created (see proof of \Cref{lem:corrtd1}) if $x$ is set to true, $y$ is set to true, and $z$ is set to false.}\label{fig:hard1}
\end{figure}

Now we describe how to construct $D^{(1)}$. First of all, for each pair of vertices from the same gadget, the matrix $D^{(1)}$ adopts the values from the matrix of that gadget. 
Let $x$ be a variable that appears in clause $c$. Then we set $D^{(1)}_{v_{x,1},u_{c,x,2}}=D^{(1)}_{u_{c,x,2},v_{x,1}}=2$ and $D^{(1)}_{v_{x,2},u_{c,x,1}}=D^{(1)}_{u_{c,x,1},v_{x,2}}=2$.
We set all other entries of $D^{(1)}$ to trivial upper bounds.
This finishes the construction of the instance $(G^{(1)},D^{(1)},\Delta)$. 
Clearly, we have the following.
\begin{observation}\label{obs:red1}
    The instance $(G^{(1)},D^{(1)},\Delta)$ of \deltaUpperBound can be computed in polynomial time.
\end{observation}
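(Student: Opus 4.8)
The plan is to verify two things: that the graph $G^{(1)}$ has size polynomial in the size of the \textsc{MonNAE3SAT} instance $(X,C)$, and that every entry of the matrix $D^{(1)}$ can be filled in within polynomial time.

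First I would bound the size of $G^{(1)}$. Each variable gadget $G_x$ consists of exactly three vertices and two edges, and after identifying $w_x$ with $w^\star$ it contributes only two fresh vertices and two edges to $G^{(1)}$. Likewise, each clause gadget $G_c$ has a fixed number of vertices (the $16$ vertices listed in its definition) and $15$ edges, contributing $15$ fresh vertices and $15$ edges after identifying $w_c$ with $w^\star$. Summing over all gadgets, $G^{(1)}$ has $1 + 2|X| + 15|C|$ vertices and $2|X| + 15|C|$ edges, which is linear in $|X| + |C|$ and hence polynomial in the input size. Building this tree explicitly (creating the vertices and edges of each gadget and performing the identifications) clearly takes linear time.

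Next I would argue that $D^{(1)}$ can be computed in polynomial time. The matrix is $n \times n$ with $n = |V(G^{(1)})|$, so it has $O(n^2)$ entries. The non-trivial entries fall into two classes: (i) the within-gadget entries, which are copied verbatim from the (constant-size) gadget matrices, so each gadget supplies only a constant number of them; and (ii) the inter-gadget entries $D^{(1)}_{v_{x,1},u_{c,x,2}} = D^{(1)}_{v_{x,2},u_{c,x,1}} = 2$, of which there is a constant number per variable--clause incidence, with at most $3|C|$ incidences in total. All remaining entries are trivial upper bounds, which by definition equal $(k-1)\cdot \Delta + 1$, where $k$ is the distance in $G^{(1)}$ between the two endpoints. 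Since $G^{(1)}$ is a tree, all pairwise distances can be precomputed in polynomial time (for instance by running a breadth-first search from each vertex), after which each trivial entry is evaluated in constant time.

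Combining the two parts, the whole instance $(G^{(1)},D^{(1)},\Delta)$ is assembled in polynomial time, which is exactly the claim. No genuine obstacle arises here; the only point that requires any care is remembering that the default (trivial) entries of $D^{(1)}$ still have to be written down, and that doing so presupposes knowledge of the pairwise tree distances --- but these are cheap to obtain on a tree, so the bookkeeping never leaves polynomial time.
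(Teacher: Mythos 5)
Your proof is correct and matches what the paper intends: the paper states this observation without proof (``Clearly, \ldots''), and your argument simply spells out the obvious bookkeeping --- linear size of $G^{(1)}$, constantly many non-trivial matrix entries per gadget and per variable--clause incidence, and polynomial-time computation of tree distances for the trivial entries. Nothing more is needed.
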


Furthermore, the following is easy to observe.
\begin{observation}\label{obs:td}
    The tree $G^{(1)}$ has diameter $6$, constant pathwidth, and the matrix $D^{(1)}$ is symmetric.
\end{observation}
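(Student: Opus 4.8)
The plan is to verify the three claims separately, all by direct inspection of the construction of $G^{(1)}$ and $D^{(1)}$ (and referring to \Cref{fig:hard1} for intuition).

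For the diameter, the key structural fact is that all gadgets are glued together only at the single hub vertex $w^\star$, since every $w_x$ and every $w_c$ is identified with $w^\star$. Consequently, any shortest path between two vertices lying in distinct gadgets must pass through $w^\star$, so it suffices to bound the eccentricity of $w^\star$ inside each gadget type and then double it. First I would observe that inside a variable gadget the leaves $v_{x,1},v_{x,2}$ are at distance $1$ from $w^\star$, and inside a clause gadget the only vertices at distance $3$ from $w^\star=w_c$ are $u_{c,3}$, $u_{c,z,5}$, and $u_{c,z,6}$ (reached along the spine $w_c - u_{c,1} - u_{c,2} - u_{c,3}$ and its two pendant leaves), while all remaining clause-gadget vertices are closer. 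Hence the eccentricity of $w^\star$ is exactly $3$, every vertex lies within distance $3$ of $w^\star$, and the longest path in the tree has length at most $6$. This bound is attained, for instance, by the path from $u_{c,3}$ to $u_{c',3}$ for two distinct clauses $c,c'$, so the diameter equals exactly $6$.

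For the pathwidth, the crucial point I would stress is that, although the number of gadgets grows with $|X|+|C|$, each individual gadget is a tree on a constant number of vertices ($3$ for a variable gadget and $16$ for a clause gadget), and within $G^{(1)}$ every gadget is attached to the rest of the graph only through $w^\star$. In particular, the cross-gadget entries of $D^{(1)}$ (such as $D^{(1)}_{v_{x,1},u_{c,x,2}}=2$) are merely matrix constraints and do \emph{not} introduce any edges, so every edge of $G^{(1)}$ lies inside a single gadget. Given this, I would construct a path decomposition directly: fix an arbitrary ordering $H_1,\dots,H_m$ of all gadgets and take as the $i$-th bag $B_i := V(H_i)$, recalling that $w^\star \in V(H_i)$ after identification. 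Since $w^\star$ lies in every bag, its occurrences are contiguous; every other vertex lies in exactly one $B_i$; and every edge lies inside some $V(H_i)=B_i$. Thus this is a valid path decomposition of width $\max_i |B_i|-1 \le 15$, a constant.

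The symmetry of $D^{(1)}$ is immediate from the construction: every explicitly assigned entry is set in symmetric pairs $D^{(1)}_{a,b}=D^{(1)}_{b,a}$, the per-gadget submatrices are symmetric by definition, and each trivial upper bound $(k-1)\cdot\Delta+1$ depends only on the (symmetric) distance $k$ between its two endpoints. I would expect the only genuinely nontrivial step to be the pathwidth bound, and its single subtle ingredient is precisely the observation that the cross-gadget constraints in $D^{(1)}$ add no edges to $G^{(1)}$; once one sees that $G^{(1)}$ is a union of constant-size trees identified at one common vertex, the path decomposition essentially writes itself.
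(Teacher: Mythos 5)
Your proof is correct; the paper states this as an observation without any proof, and your direct-inspection argument (eccentricity of $w^\star$ equals $3$, a path decomposition with one bag $V(H_i)$ per gadget all sharing $w^\star$, and entries of $D^{(1)}$ assigned in symmetric pairs with trivial entries depending only on distance) is exactly the intended justification. The only pedantic caveat is that the diameter is exactly $6$ only when there are at least two clause gadgets (within a single clause gadget all depth-$3$ vertices descend through $u_{c,1}$, so the maximum intra-gadget distance is $4$); the upper bound of $6$, which is all the hardness claim needs, holds regardless.
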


In the following we show that $(G^{(1)},D^{(1)},\Delta)$ is a yes-instance of \deltaUpperBound if and only if~$(X,C)$ is a yes-instance of \textsc{MonNAE3SAT}. We begin by proving the following direction of the statement.

\begin{lemma}\label{lem:corrtd1}
    If $(X,C)$ is a yes-instance of \textsc{MonNAE3SAT}, then $(G^{(1)},D^{(1)},\Delta)$ is a yes-instance of \deltaUpperBound.
\end{lemma}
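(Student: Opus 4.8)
The plan is to turn a satisfying assignment of $(X,C)$ into an explicit $2$-periodic labeling $\lambda$ of $G^{(1)}$ and then check that every non-trivial upper bound of $D^{(1)}$ is met. Write $b(x)=1$ if $x$ is set to true and $b(x)=2$ if $x$ is set to false; informally, $b(x)$ is the label I place on the ``bold'' edge encoding $x$. For each variable gadget $G_x$ I set $\lambda(\{w^\star,v_{x,1}\})=b(x)$ and $\lambda(\{w^\star,v_{x,2}\})=3-b(x)$. For each clause gadget $G_c$ with $c=(x,y,z)$, I give the two edges of every embedded copy of a variable gadget the label pair $b(\cdot),3-b(\cdot)$ of the corresponding variable (so that the ``first'' edge of each copy carries $b(\cdot)$), and I label the path $\{w_c,u_{c,1}\},\{u_{c,1},u_{c,2}\},\{u_{c,2},u_{c,3}\}$ by $b(x),b(y),b(z)$, respectively. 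This is exactly the labeling illustrated in \Cref{fig:hard1}.

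I would then verify the upper bounds by sorting them into two uniform types and computing durations via \Cref{lem:duration} and \Cref{obs:traveldelays}. Since $\Delta=2$ and all labels lie in $\{1,2\}$, a length-$2$ path whose two edges receive distinct labels has travel delay $1$ at its middle vertex, hence duration $2$; this disposes of every bound of value $2$ (the variable-gadget bound $D_{v_{x,1},v_{x,2}}$, the inter-gadget connection bounds $D_{v_{x,1},u_{c,x,2}}$ and $D_{v_{x,2},u_{c,x,1}}$, the entire first and second set, and the length-$2$ bounds of the third and fourth set), because in each such case my labeling places distinct labels on the two relevant edges. For the value-$4$ bounds on length-$3$ paths (the remaining third, fourth, and fifth sets), \Cref{lem:maxdur} yields duration $5$ only when all three consecutive edges share a label, so it suffices to observe that the first and third edges of each such path carry $b(\cdot)$ and $3-b(\cdot)$ of one common variable and therefore differ.

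The one bound whose satisfaction is not merely structural is the fifth-set bound $D_{w_c,u_{c,3}}=4$, carried by the path with labels $b(x),b(y),b(z)$: here \Cref{lem:maxdur} forces duration at most $4$ precisely when $b(x),b(y),b(z)$ are not all equal, which holds exactly because the assignment satisfies the not-all-equal condition of the clause~$c$. This is the sole place where the hypothesis that $(X,C)$ is satisfiable is actually used. Finally, every remaining entry of $D^{(1)}$ is trivial, and by \Cref{lem:maxdur} a trivial bound on a pair at tree-distance $k$ equals the maximum possible duration $(k-1)\cdot\Delta+1$ of any fastest path of length $k$, hence is automatically met (temporal paths in a tree are forced onto the unique underlying path, so their length equals the tree-distance).

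I expect no genuine obstacle in this direction: the entire content lies in organizing the numerous explicit bounds into the two types above and confirming that the constructed labeling always assigns \emph{distinct} labels (for the value-$2$ bounds) or \emph{not-all-equal} labels (for the value-$4$ bounds). The only conceptual point is recognizing that the clause's not-all-equal property is precisely what guarantees the fifth-set bound, so the main care needed is bookkeeping to ensure no non-trivial bound is overlooked.
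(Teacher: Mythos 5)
Your proposal is correct and follows essentially the same route as the paper: it constructs the identical labeling (the one in \Cref{fig:hard1}) and verifies the bounds using \Cref{lem:duration}, \Cref{obs:traveldelays}, and \Cref{lem:maxdur}, with the not-all-equal property of each clause invoked exactly where the paper invokes it, for the fifth-set bound $D^{(1)}_{w^\star,u_{c,3}}=4$. The only difference is organizational — you group the checks by bound value (distinct labels for the value-$2$ bounds on length-$2$ paths, not-all-equal labels for the value-$4$ bounds on length-$3$ paths) rather than walking through the five sets of upper bounds one by one — which is a harmless and arguably cleaner presentation of the same argument.
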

\begin{proof}
Let $(X,C)$ be a yes-instance of \textsc{MonNAE3SAT} and let $\phi:X\rightarrow \{\text{true},\text{false}\}$ be a solution for $(X,C)$. Then we construct a solution $\lambda$ for $(G^{(1)},D^{(1)},\Delta)$ as follows. See \Cref{fig:hard1} for an illustration of the solution we create.

Consider the variable gadget for variable $x$. If $\phi(x)=\text{true}$, then we set $\lambda(\{w^\star,v_{x,1}\})=1$ and $\lambda(\{w^\star,v_{x,2}\})=2$. Otherwise, we set $\lambda(\{w^\star,v_{x,1}\})=2$ and $\lambda(\{w^\star,v_{x,2}\})=1$.
Note that in both cases the upper bound $D^{(1)}_{v_{x,1},v_{x,2}}=D^{(1)}_{v_{x,2},v_{x,1}}=2$ is respected (recall the we identified $w_x$ with $w^\star$). Now we have given labels to all edges corresponding to variable gadgets.

Next, consider clause $c=(x_1,x_2,x_3)$. Then for all $i\in\{1,2,3\}$ we set $\lambda(\{w^\star,u_{c,x_i,1}\})=\lambda(\{w^\star,v_{x_i,1}\})$ and $\lambda(\{w^\star,u_{c,x_i,2}\})=\lambda(\{w^\star,v_{x_i,2}\})$. It is straightforward to verify that this ensures that the upper bounds $D^{(1)}_{v_{x_i,1},u_{c,x_i,2}}=D^{(1)}_{u_{c,x_i,2},v_{x_i,1}}=2$ and $D^{(1)}_{v_{x_i,2},u_{c,x_i,1}}=D^{(1)}_{u_{c,x_i,1},v_{x_i,2}}=2$ are respected. 
Now for all  $i\in\{2,3\}$, we set $\lambda(\{u_{c,1},u_{c,x_i,3}\})=\lambda(\{w^\star,u_{c,x_i,1}\})$ and $\lambda(\{u_{c,1},u_{c,x_i,4}\})=\lambda(\{w^\star,u_{c,x_i,2}\})$. Furthermore, we set $\lambda(\{u_{c,2},u_{c,x_3,5}\})=\lambda(\{w^\star,u_{c,x_3,1}\})$ and $\lambda(\{u_{c,2},u_{c,x_3,6}\})=\lambda(\{w^\star,u_{c,x_3,2}\})$.
Finally, we set $\lambda(\{w^\star,u_{c,1}\})=\lambda(\{w^\star,u_{c,x_1,1}\})$, $\lambda(\{u_{c,1},u_{c,2}\})=\lambda(\{w^\star,u_{c,x_2,1}\})$, and $\lambda(\{u_{c,2},u_{c,3}\})=\lambda(\{w^\star,u_{c,x_3,1}\})$.

It is straightforward to observe that the first set of upper bounds of the clause gadgets is respected. The second set of upper bounds is respected since we have set $\lambda(\{w^\star,u_{c,1}\})=\lambda(\{w^\star,u_{c,x_1,1}\})$ and we have $\lambda(\{w^\star,u_{c,x_1,1}\})\neq\lambda(\{w^\star,u_{c,x_1,2}\})$.
The third set of upper bounds is respected since we have the following. Note that we have set $\lambda(\{u_{c,1},u_{c,x_2,3}\})=\lambda(\{w^\star,u_{c,x_2,1}\})$ and $\lambda(\{u_{c,1},u_{c,x_2,4}\})=\lambda(\{w^\star,u_{c,x_2,2}\})$ and that $\lambda(\{u_{c,1},u_{c,x_2,3}\})\neq\lambda(\{u_{c,1},u_{c,x_2,4}\})$. Hence, independently from which label we set on edge $\{w^\star,u_{c,1}\}$, we have that the paths from $u_{c,x_2,1}$ to $u_{c,x_2,4}$ and $u_{c,x_2,2}$ to $u_{c,x_2,3}$, respectively, have three edges that do not all have the same label. It follows that the upper bounds $D^{(1)}_{u_{c,x_2,1},u_{c,x_2,4}}=D^{(1)}_{u_{c,x_2,4},u_{c,x_2,1}}=4$ and $D^{(1)}_{u_{c,x_2,2},u_{c,x_2,3}}=D^{(1)}_{u_{c,x_2,3},u_{c,x_2,2}}=4$ are respected. Finally, we have set $\lambda(\{u_{c,1},u_{c,2}\})=\lambda(\{w^\star,u_{c,x_2,1}\})$ and hence we have that $\lambda(\{u_{c,1},u_{c,2}\})\neq\lambda(\{u_{c,1},u_{c,x_2,4}\})$. It follows that the upper bounds $D^{(1)}_{u_{c,x_2,4},u_{c,2}}=D^{(1)}_{u_{c,2},u_{c,x_2,4}}=2$ is respected. 
By an analogous argument, we can also show that the fourth set of upper bounds is respected. 
Lastly, assume for contradiction that the fifth set of upper bounds is not respected. Then we have that the upper bound $D^{(1)}_{w^\star,u_{c,3}}=D^{(1)}_{u_{c,3},w^\star}=4$ is not respected. Note that then the temporal path from $w^\star$ to $u_{c,3}$ must have a duration of at least 5. This implies that we must have $\lambda(\{w^\star,u_{c,1}\})=\lambda(\{u_{c,1},u_{c,2}\})=\lambda(\{u_{c,2},u_{c,3}\})$. However, by the construction of~$\lambda$, this implies that $\phi(x_1)=\phi(x_2)=\phi(x_3)$, a contradiction to the assumption that clause $c$ is satisfied.

 Hence, since all other upper bounds in $D^{(1)}$ are trivial, we can conclude that the constructed labeling $\lambda$ is indeed a solution for $(G^{(1)},D^{(1)},\Delta)$.
\end{proof}

We proceed with showing the other direction of the correctness.

\begin{lemma}\label{lem:corrtd2}
    If $(G^{(1)},D^{(1)},\Delta)$ is a yes-instance of \deltaUpperBound, then $(X,C)$ is a yes-instance of \textsc{MonNAE3SAT}.
\end{lemma}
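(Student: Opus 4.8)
The plan is to prove the converse direction of the correctness by extracting a satisfying assignment from a realizing labeling. Given a solution labeling $\lambda$ for $(G^{(1)},D^{(1)},\Delta)$ with $\Delta=2$, I first read off the truth value of each variable $x$ from its variable gadget. Since $D^{(1)}_{v_{x,1},v_{x,2}}=2$ is a nontrivial upper bound on a path of length $2$, \Cref{lem:maxdur} forces $\lambda(\{w^\star,v_{x,1}\})\neq\lambda(\{w^\star,v_{x,2}\})$ (otherwise the fastest path would have duration $(2-1)\cdot\Delta+1=3>2$). So exactly one of the two edges has label $1$ and the other label $2$. I define $\phi(x)=\text{true}$ if $\lambda(\{w^\star,v_{x,1}\})=1$ and $\phi(x)=\text{false}$ otherwise, matching the convention in the proof of \Cref{lem:corrtd1}.

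Next I must verify that $\phi$ is a satisfying assignment for $(X,C)$, \ie that no clause has all three variables equal. The key step is to trace, for a fixed clause $c=(x_1,x_2,x_3)$, how the nontrivial upper bounds force the labels of the three ``spine'' edges $\{w^\star,u_{c,1}\}$, $\{u_{c,1},u_{c,2}\}$, $\{u_{c,2},u_{c,3}\}$ to encode $\phi(x_1),\phi(x_2),\phi(x_3)$. I will invoke \Cref{cor:forcing} repeatedly: each nontrivial upper bound of the form $D_{s,t}+D_{t,s}=(k-1)\cdot\Delta+2$ on a length-$k$ path pins the label of one end edge as a function of the other. Concretely, the cross-gadget upper bounds $D^{(1)}_{v_{x_1,1},u_{c,x_1,2}}=D^{(1)}_{v_{x_1,2},u_{c,x_1,1}}=2$ force the local copy of the variable gadget inside $G_c$ to receive the same labeling as the true variable gadget, and then the second set of upper bounds forces $\lambda(\{w^\star,u_{c,1}\})$ to equal $\lambda(\{w^\star,u_{c,x_1,1}\})$, hence to encode $\phi(x_1)$. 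The third and fourth sets of upper bounds propagate $\phi(x_2)$ to $\{u_{c,1},u_{c,2}\}$ and $\phi(x_3)$ to $\{u_{c,2},u_{c,3}\}$ through the intermediate copies of the variable gadget, using the length-$3$ forcing bounds (value $4=(3-1)\cdot\Delta$) to equalize labelings across copies and the length-$2$ bounds to transfer the result onto the spine.

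Finally I appeal to the fifth set of upper bounds: $D^{(1)}_{w^\star,u_{c,3}}=4$ is a nontrivial bound on the length-$3$ spine path $w^\star,u_{c,1},u_{c,2},u_{c,3}$. By \Cref{lem:maxdur}, if all three spine edges had the same label, the fastest path would have duration $(3-1)\cdot\Delta+1=5>4$, violating the bound; hence the three spine labels are not all equal. Since these labels encode $\phi(x_1),\phi(x_2),\phi(x_3)$ and label $1$ corresponds to true and $2$ to false, not all three variables of $c$ receive the same truth value, so $c$ is satisfied. As $c$ was arbitrary, $\phi$ satisfies every clause and $(X,C)$ is a yes-instance of \textsc{MonNAE3SAT}.

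The main obstacle I anticipate is the bookkeeping in the second paragraph: carefully justifying each forcing step in the correct order so that the chain of equalities from the variable gadgets to the spine edges is airtight, in particular verifying that the length-$3$ forcing bounds indeed equalize the relative labeling of the two copies of each variable gadget (not just a single edge), and that \Cref{cor:forcing} applies because the relevant fastest paths traverse edges with distinct consecutive labels. Once the label of each spine edge is shown to be a fixed function of the corresponding variable's truth value, the conclusion via \Cref{lem:maxdur} is immediate.
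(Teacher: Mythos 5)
Your proposal is correct and follows essentially the same route as the paper: read off $\phi$ from the variable-gadget labels, propagate labels through the clause gadget to the three spine edges via the forcing bounds, and use the fifth bound $D^{(1)}_{w^\star,u_{c,3}}=4$ together with \Cref{lem:maxdur} to rule out three equal spine labels. The only (minor) difference is that the paper argues by contradiction, assuming all three variables of a clause are true so that every relevant label is pinned to a concrete value, whereas your direct version relies on the slightly stronger unconditional claim that the two length-$3$ bounds of the third set \emph{together} force the two copies of each variable gadget to agree regardless of the spine label --- a claim that does hold and that you correctly flag as the point needing verification.
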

\begin{proof}
Let $(G^{(1)},D^{(1)},\Delta)$ be a yes-instance of \deltaUpperBound and let labeling $\lambda$ be a solution for the instance. We construct a satisfying assignment for the variables in $X$ as follows.

For each variable $x$, if $\lambda(\{w^\star,v_{x,1}\})=1$, then we set the variable $x$ to true. Otherwise, we set the variable $x$ to false. We claim that this yields a satisfying assignment for $(X,C)$.

Assume for contradiction that it is not. Then there is a clause $c\in C$ that is not satisfied, that is, either all three variables in the clause are set to true or all three variables are set to false. Assume that $c=(x_1,x_2,x_3)$ and that all variables in $c$ are set to true (the case where all of them are set to false is symmetric). This means that $\lambda(\{w^\star,v_{x_1,1}\})=\lambda(\{w^\star,v_{x_2,1}\})=\lambda(\{w^\star,v_{x_3,1}\})=1$. 
Note that for all $\ell\in\{1,2,3\}$ we have that $D^{(1)}_{v_{x_\ell,1},v_{x_\ell,2}}=D^{(1)}_{v_{x_\ell,1},v_{x_\ell,2}}=2$. It follows by \Cref{cor:forcing} that we must have $\lambda(\{w^\star,v_{x_1,2}\})=\lambda(\{w^\star,v_{x_2,2}\})=\lambda(\{w^\star,v_{x_3,2}\})=2$. 
Furthermore, for all $\ell\in\{1,2,3\}$ we have $D^{(1)}_{v_{x_\ell,1},u_{c,x_\ell,2}}=D^{(1)}_{u_{c,x_\ell,2},v_{x_\ell,1}}=2$ and $D^{(1)}_{v_{x_\ell,2},u_{c,x_\ell,1}}=D^{(1)}_{u_{c,x_\ell,1},v_{x_\ell,2}}=2$. 
By \Cref{cor:forcing}, this means that we must have $\lambda(\{w^\star,u_{c,x_1,1}\})=\lambda(\{w^\star,u_{c,x_2,1}\})=\lambda(\{w^\star,u_{c,x_3,1}\})=1$ and $\lambda(\{w^\star,u_{c,x_1,2}\})=\lambda(\{w^\star,u_{c,x_2,2}\})=\lambda(\{w^\star,u_{c,x_3,2}\})=2$.
Since we have $D^{(1)}_{u_{c,x_1,2},u_{c,1}}=D^{(1)}_{u_{c,1},u_{c,x_1,2}}=2$, by \Cref{cor:forcing} we have that $\lambda(\{w^\star,u_{c,1}\})=1$.
Now consider the third set of upper bounds for the clause gadget corresponding to $c$. We have that $D^{(1)}_{u_{c,x_2,1},u_{c,x_2,4}}=D^{(1)}_{u_{c,x_2,4},u_{c,x_2,1}}=4$. Since $\lambda(\{w^\star,u_{c,x_2,1}\})=\lambda(\{w^\star,u_{c,1}\})=1$, we must have that $\lambda(\{u_{c,1},u_{c,x_2,4}\})=2$ to respect the above upper bound. 
Furthermore, we have that $D^{(1)}_{u_{c,x_2,4},u_{c,2}}=D^{(1)}_{u_{c,2},u_{c,x_2,4}}=2$. By \Cref{cor:forcing} we must have that $\lambda(\{u_{c,1},u_{c,2}\})=1$.

By an analogous argument for the labels on edges in the clause gadget corresponding to $c$ that are associated with $x_3$, we get that $\lambda(\{u_{c,2},u_{c,3}\})=1$.
Summarizing, we get that 
\begin{equation*}
\lambda(\{w^\star,u_{c,1}\})=\lambda(\{u_{c,1},u_{c,2}\})=\lambda(\{u_{c,2},u_{c,3}\})=1.
\end{equation*}
This implies that the duration of a fastest path from $w^\star$ to $u_{c,3}$ in $(G^{(1)},\lambda)$ is $5$. However, we have $D^{(1)}_{w^\star,u_{c,3}}=4$ (recall that we identify $w_c$ with $w^\star$), a contradiction to the assumption that $\lambda$ is a solution for $(G^{(1)},D^{(1)},\Delta)$.
\end{proof}

\Cref{lem:corrtd1,lem:corrtd2} together with \Cref{obs:td,obs:red1} prove the second statement of \Cref{thm:NPh}, namely that \deltaUpperBound is NP-hard even if the input tree $G$ has constant diameter and $\Delta=2$. In the remainder, we prove that \deltaUpperBound is NP-hard even if the input tree $G$ has constant maximum degree and $\Delta=2$. We provide a second reduction from \textsc{MonNAE3SAT} that uses the same variable and clause gadgets, but arranges them in a different way. Intuitively, we replace the vertex~$w^\star$ in~$G^{(1)}$ with a long path.

\medskip
\noindent\textbf{Reduction for trees of constant maximum degree.} 
We now formally describe how to create an instance $(G^{(2)},D^{(2)},\Delta)$ of \deltaUpperBound where $G^{(2)}$ is a tree with \textit{constant maximum degree}.
For each variable $x$, we introduce a new vertex $w'_x$  and connect it to the vertex $w_x$ of the variable gadget~$G_x$, that is, we add edge $\{w_x,w'_x\}$. For each clause $c$, we introduce a new vertex $w'_c$ and connect it to the vertex $w_c$ of the clause gadget $G_c$, that is, we add edge $\{w_c,w'_c\}$. 
Next, we order the variables in some fixed, but arbitrary, way and we order the clauses in some fixed, but arbitrary, way.
Let $x$ and $y$ be two variables that are consecutive in the ordering, where $y$ is ordered after $x$. Then we connect $w_x$ with $w'_y$, that is, we add edge $\{w_x,w'_y\}$.
Let $c$ and $c'$ be two clauses that are consecutive in the ordering, where $c'$ is ordered after $c$. Then we connect $w_c$ with $w'_{c'}$, that is, we add edge $\{w_c,w'_{c'}\}$.
Finally, let $z$ be the variable that is ordered last and let $c''$ be the clause that is ordered first. Then we connect $w_z$ with $w'_{c''}$, that is, we add edge $\{w_z,w'_{c''}\}$.
This finishes the construction of $G^{(2)}$, for an illustration see \Cref{fig:hard2}. 

\begin{figure}[t]
\begin{center}
\ifarxiv
    \includegraphics[scale=1]{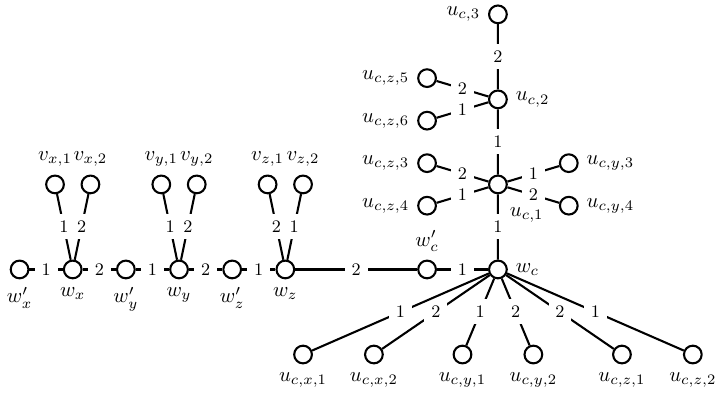}
    \else
\begin{tikzpicture}[line width=1pt,scale=.6,yscale=1.2] 
    
    \node[vert,label=below:$w'_x$] (V0) at (-1.5,0) {}; 
    \node[vert,label=below:$w_x$] (V1) at (0,0) {}; 
    \node[vert,label=above:$v_{x,1}$] (V2) at (-.5,2) {}; 
    \node[vert,label=above:$v_{x,2}$] (V3) at (.5,2) {}; 
    
    \node[vert,label=below:$w'_y$] (V20) at (1.5,0) {}; 
    \node[vert,label=below:$w_y$] (V21) at (3,0) {}; 
    \node[vert,label=above:$v_{y,1}$] (V22) at (2.5,2) {}; 
    \node[vert,label=above:$v_{y,2}$] (V23) at (3.5,2) {}; 

    \node[vert,label=below:$w'_z$] (V30) at (4.5,0) {}; 
    \node[vert,label=below:$w_z$] (V31) at (6,0) {}; 
    \node[vert,label=above:$v_{z,1}$] (V32) at (5.5,2) {}; 
    \node[vert,label=above:$v_{z,2}$] (V33) at (6.5,2) {}; 


    \draw (V0) --node[timelabel] {$1$} (V1);
    \draw (V1) --node[timelabel] {$1$} (V2);
    \draw (V1) --node[timelabel] {$2$} (V3);
    \draw (V1) --node[timelabel] {$2$} (V20);
    \draw (V20) --node[timelabel] {$1$} (V21);
    \draw (V21) --node[timelabel] {$1$} (V22);
    \draw (V21) --node[timelabel] {$2$} (V23);
    \draw (V21) --node[timelabel] {$2$} (V30);
    \draw (V30) --node[timelabel] {$1$} (V31);
    \draw (V31) --node[timelabel] {$2$} (V32);
    \draw (V31) --node[timelabel] {$1$} (V33);

    \node[vert,label=above:$w'_c$] (U0) at (10,0) {};
    \node[vert,label=right:$w_c$] (U1) at (12,0) {}; 
    \node[vert,label={[label distance=.1cm]-75:$u_{c,1}$}] (U2) at (12,2) {}; 
    \node[vert,label=left:$u_{c,z,4}$] (U3) at (10,1.5) {}; 
    \node[vert,label=left:$u_{c,z,3}$] (U32) at (10,2.5) {}; 
    \node[vert,label=right:$u_{c,y,4}$] (U4) at (14,1.5) {}; 
    \node[vert,label=right:$u_{c,y,3}$] (U42) at (14,2.5) {}; 
    \node[vert,label=right:$u_{c,2}$] (U5) at (12,4) {}; 
    \node[vert,label=left:$u_{c,z,6}$] (U6) at (10,3.5) {}; 
    \node[vert,label=left:$u_{c,z,5}$] (U62) at (10,4.5) {}; 
    \node[vert,label=left:$u_{c,3}$] (U7) at (12,6) {}; 
    \node[vert,label=below:$u_{c,x,1}$] (UU1) at (6.5,-2) {}; 
    \node[vert,label=below:$u_{c,x,2}$] (UU2) at (8.5,-2) {};
    \node[vert,label=below:$u_{c,y,1}$] (UU3) at (11,-2) {}; 
    \node[vert,label=below:$u_{c,y,2}$] (UU4) at (13,-2) {};
    \node[vert,label=below:$u_{c,z,1}$] (UU5) at (15.5,-2) {}; 
    \node[vert,label=below:$u_{c,z,2}$] (UU6) at (17.5,-2) {}; 
    


    \draw (V31) --node[timelabel] {$2$} (U0);
    \draw (U0) --node[timelabel] {$1$} (U1);
    \draw (U1) --node[timelabel] {$1$} (U2);
    \draw (U2) --node[timelabel] {$1$} (U3);
    \draw (U2) --node[timelabel] {$2$} (U32);
    \draw (U2) --node[timelabel] {$2$} (U4);
    \draw (U2) --node[timelabel] {$1$} (U42);
    \draw (U2) --node[timelabel] {$1$} (U5);
    \draw (U5) --node[timelabel] {$1$} (U6);
    \draw (U5) --node[timelabel] {$2$} (U62);
    \draw (U5) --node[timelabel] {$2$} (U7);
    \draw (U1) --node[timelabel] {$2$} (UU2);
    \draw (U1) --node[timelabel] {$1$} (UU1);
    \draw (U1) --node[timelabel] {$1$} (UU3);
    \draw (U1) --node[timelabel] {$2$} (UU4);
    \draw (U1) --node[timelabel] {$2$} (UU5);
    \draw (U1) --node[timelabel] {$1$} (UU6);
\end{tikzpicture}
\fi
    \end{center}
    \caption{Illustration of $G^{(2)}$. Depicted are three variable gadgets for variables $x,y,z$ where $x$ is ordered first, $y$ is ordered second, and $z$ is ordered third. Additionally, a clause gadget for clause~$c=(x,y,c)$ is depicted. A labeling $\lambda$ is illustrated that would be created (see proof of \Cref{lem:corrmd1}) if $x$ is set to true, $y$ is set to true, and $z$ is set to false.}\label{fig:hard2}
\end{figure}

Now we describe how to construct $D^{(2)}$. First of all, for each pair of vertices from the same gadget, the matrix $D^{(2)}$ adopts the values from the matrix of that gadget. 
For each pair of variables $x,y$ that are ordered consecutively, we set $D^{(2)}_{w_x,w_y}=D^{(2)}_{w_y,w_x}=2$ and $D^{(2)}_{w'_x,w'_y}=D^{(2)}_{w'_y,w'_x}=2$.
For each pair of variables $c,c'$ that are ordered consecutively, we set $D^{(2)}_{w_{c},w_{c'}}=D^{(2)}_{w_{c'},w_c}=2$ and $D^{(2)}_{w'_{c},w'_{c'}}=D^{(2)}_{w'_{c'},w'_c}=2$.
Finally, let $z$ be the variable that is ordered last and let $c''$ be the clause that is ordered first. We set $D^{(2)}_{w_{z},w_{c''}}=D^{(2)}_{w_{c''},w_z}=2$ and $D^{(2)}_{w'_{z},w'_{c''}}=D^{(2)}_{w'_{c''},w'_z}=2$.

Let $c=(x_1,x_2,x_3)$. For $\ell\in\{1,2,3\}$, let $x_\ell$ be the $i_\ell$th variable in the ordering and let $c$ be the $j$th clause in the ordering. Denote $|X|=n$. 
Then we set the following values in~$D^{(2)}$. For each $\ell\in\{1,2,3\}$ we set 
\[
D^{(2)}_{v_{x_\ell,1},u_{c,x_\ell,1}}=D^{(2)}_{u_{c,x_\ell,1},v_{x_\ell,1}}= 2(n-i_\ell+j)+3,
\]
and
\[
D^{(2)}_{v_{x_\ell,2},u_{c,x_\ell,2}}=D^{(2)}_{u_{c,x_\ell,2},v_{x_\ell,2}}= 2(n-i_\ell+j)+3.
\]
We set all other entries of $D^{(2)}$ to trivial upper bounds.

This finishes the construction of the instance $(G^{(2)},D^{(2)},\Delta)$. Clearly, we have the following.
\begin{observation}\label{obs:red2}
    The instance $(G^{(2)},D^{(2)},\Delta)$ of \deltaUpperBound can be computed in polynomial time.
\end{observation}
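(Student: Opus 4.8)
The plan is to bound the size of the constructed instance and to argue that every part of it is computable in polynomial time by direct inspection of the construction. First I would count the vertices and edges of $G^{(2)}$. Each variable gadget $G_x$ contributes exactly three vertices and two edges, and each clause gadget $G_c$ contributes exactly sixteen vertices and fifteen edges; in addition we introduce one fresh vertex $w'_x$ (resp.\ $w'_c$) together with one incident edge per variable (resp.\ clause), as well as one connecting edge between each pair of consecutive variables, each pair of consecutive clauses, and the single edge linking the last variable to the first clause. Hence $|V(G^{(2)})|$ and $|E(G^{(2)})|$ are both $O(|X|+|C|)$, and $G^{(2)}$ can be assembled in time polynomial in $|X|+|C|$ by iterating once over the (arbitrarily fixed) orderings of the variables and clauses.

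Next I would bound the matrix $D^{(2)}$. It is an $n' \times n'$ matrix with $n' = |V(G^{(2)})| = O(|X|+|C|)$, so it has $O((|X|+|C|)^2)$ entries, which is polynomial in the input size. It therefore suffices to show that each individual entry is computable in polynomial time and has polynomially bounded magnitude. The entries fall into three classes: (i)~the intra-gadget entries, which are the constants prescribed by the variable and clause gadgets (all equal to $2$ or $4$); (ii)~the explicitly prescribed cross-gadget entries, namely the value $2$ on the $w_\bullet w_\bullet$- and $w'_\bullet w'_\bullet$-type pairs and the values $2(n-i_\ell+j)+3$ linking variable and clause gadgets, all of which are $O(|X|+|C|)$ and directly computable from the orderings; and (iii)~the trivial upper bounds $(k-1)\cdot\Delta+1$ assigned to every remaining pair.

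The only mildly non-obvious step, which I expect to be the main (though modest) obstacle, is the treatment of the trivial upper bounds in class~(iii), since their value depends on the pairwise distance $k$ in $G^{(2)}$. However, $G^{(2)}$ is a tree on $O(|X|+|C|)$ vertices, so all pairwise distances can be computed in polynomial time (for instance by running a breadth-first search from each vertex), and each such distance is at most $|V(G^{(2)})| - 1 = O(|X|+|C|)$; consequently every trivial entry $(k-1)\cdot\Delta+1$ is polynomially bounded and computable in polynomial time. Combining the three classes, the entire matrix $D^{(2)}$, and hence the whole instance $(G^{(2)},D^{(2)},\Delta)$, is computed in polynomial time.
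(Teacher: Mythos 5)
Your proof is correct, and it simply fills in the routine size-and-computability accounting that the paper leaves implicit (the paper states this observation without proof, prefaced only by ``Clearly''). Your handling of the trivial upper bounds via all-pairs distances in the tree is exactly the one non-immediate detail worth spelling out, and it is handled correctly.
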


Furthermore, the following is easy to observe.
\begin{observation}\label{obs:md}
    The tree $G^{(2)}$ has maximum degree $8$, constant pathwidth, and the matrix $D^{(2)}$ is symmetric.
\end{observation}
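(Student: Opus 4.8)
The plan is to verify the three asserted properties of $(G^{(2)},D^{(2)},\Delta)$ separately, each by a direct inspection of the construction. Symmetry of $D^{(2)}$ is the most immediate: every non-trivial entry was introduced through an assignment of the shape $D^{(2)}_{a,b}=D^{(2)}_{b,a}=\text{value}$ (both inside each gadget, where the variable- and clause-gadget matrices were already built symmetrically, and for the inter-gadget bounds such as $D^{(2)}_{v_{x_\ell,1},u_{c,x_\ell,1}}=D^{(2)}_{u_{c,x_\ell,1},v_{x_\ell,1}}$ and the backbone bounds $D^{(2)}_{w_x,w_y}=D^{(2)}_{w_y,w_x}$). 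Every remaining entry is a trivial upper bound $(k-1)\cdot\Delta+1$ determined solely by the distance $k$ between the two vertices in the tree $G^{(2)}$; since the distance function is symmetric, all trivial entries are symmetric as well, and hence so is $D^{(2)}$.

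For the maximum-degree bound I would compute the degree of each vertex of $G^{(2)}$ by combining its degree inside the gadget it belongs to with the (at most two) backbone edges added in this second construction. Leaves of the gadgets keep degree $1$; the internal vertices $u_{c,1},u_{c,2},u_{c,3}$ of a clause gadget have degrees $6,4,1$; the auxiliary chain vertices $w'_x,w'_c$ have degree at most $2$; and each variable centre $w_x$ has degree $2+2=4$. The only vertices that need care are the clause centres $w_c$, which carry all seven of their gadget edges together with their incident backbone edges; tallying these incident edges bounds the degree of each clause centre by a constant, and this value dominates the maximum degree of $G^{(2)}$.

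The pathwidth bound is where I expect the only real work, and I would obtain it by exhibiting an explicit path decomposition of constant width. The key structural observation is that $G^{(2)}$ consists of a single backbone path $P=(w'_{x_1},w_{x_1},w'_{x_2},\ldots,w_{x_n},w'_{c_1},w_{c_1},\ldots,w_{c_m})$ to which pairwise vertex-disjoint gadgets are attached, each meeting $P$ in exactly one of its centre vertices $w_x$ or $w_c$ and each having at most $16$ vertices. I would build the decomposition by walking along $P$, alternating the size-$2$ bags $\{p_i,p_{i+1}\}$ that cover the backbone edges with, immediately after the bag that first reaches a centre vertex $p_i$, a single bag consisting of \emph{all} vertices of the gadget attached at $p_i$ (a bag that already contains $p_i$). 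I would then check the three validity conditions: every backbone edge lies in its size-$2$ bag; every gadget edge, including the edges joining a centre to its gadget, lies in the corresponding gadget bag; and each vertex occupies a contiguous block of bags, since a centre $p_i$ appears exactly in $\{p_{i-1},p_i\}$, in its gadget bag, and in $\{p_i,p_{i+1}\}$, while every other gadget vertex appears in a single bag. As the backbone bags have size $2$ and each gadget bag has size at most $16$, the width is bounded by a constant independent of the numbers of variables and clauses, giving constant pathwidth. The main obstacle is purely in arranging the bags so that the contiguity condition holds at the centre vertices; once they are ordered as above, the remaining checks are routine.
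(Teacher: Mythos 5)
Your verification is correct, and in fact the paper supplies no proof at all for this statement -- it is introduced with ``the following is easy to observe'' -- so your write-up fills in exactly the routine checks the authors omit: symmetry because every non-trivial entry is assigned in matched pairs and trivial entries depend only on the (symmetric) tree distance; the degree tally vertex by vertex; and the standard caterpillar-style path decomposition (size-$2$ backbone bags interleaved with one bag per attached gadget, each gadget having at most $16$ vertices, giving width at most $15$), whose contiguity check at the centre vertices you handle correctly.

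One point your careful count actually surfaces: a clause centre $w_c$ has $7$ gadget edges plus the edge to $w'_c$ plus, for every clause except the last in the ordering, the edge to $w'_{c'}$ -- so its degree is $9$, not $8$, whenever there are at least two clauses. Your hedge (``bounded by a constant'') is therefore the safer claim than the literal statement of the observation; the discrepancy is an off-by-one in the paper's stated constant and is immaterial to \Cref{thm:NPh}, which only needs constant maximum degree, but if you wanted to prove the observation exactly as written you would have to either adjust the constant to $9$ or reroute the backbone edge (e.g.\ attach $w'_{c'}$ to $w'_c$ instead of to $w_c$).
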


In the following we show that $(G^{(2)},D^{(2)},\Delta)$ is a yes-instance of \deltaUpperBound if and only if~$(X,C)$ is a yes-instance of \textsc{MonNAE3SAT}. This is done in a similar way as in \Cref{lem:corrtd1,lem:corrtd2}. We begin by proving the following direction of the statement.

\begin{lemma}\label{lem:corrmd1}
    If $(X,C)$ is a yes-instance of \textsc{MonNAE3SAT}, then $(G^{(2)},D^{(2)},\Delta)$ is a yes-instance of \deltaUpperBound.
\end{lemma}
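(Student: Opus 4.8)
The plan is to follow the same strategy as in the proof of \Cref{lem:corrtd1}: given a satisfying assignment $\phi \colon X \to \{\text{true},\text{false}\}$ of $(X,C)$, I will construct an explicit labeling $\lambda$ of $G^{(2)}$ and verify that it respects every non-trivial entry of $D^{(2)}$. The labelings inside the variable gadgets and the copying inside the clause gadgets are defined exactly as before: if $\phi(x)=\text{true}$ I set $\lambda(\{w_x,v_{x,1}\})=1$ and $\lambda(\{w_x,v_{x,2}\})=2$, and symmetrically otherwise; then I propagate these labels into each clause gadget by setting $\lambda(\{w_c,u_{c,x_\ell,1}\})=\lambda(\{w_{x_\ell},v_{x_\ell,1}\})$ and $\lambda(\{w_c,u_{c,x_\ell,2}\})=\lambda(\{w_{x_\ell},v_{x_\ell,2}\})$, filling in the remaining edges of $G_c$ precisely as in \Cref{lem:corrtd1}. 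Since each clause gadget is attached to the rest of $G^{(2)}$ only at its root $w_c$, every path witnessing a gadget-internal entry of $D^{(2)}$ stays inside $G_c$; hence the verification of the first through fifth sets of upper bounds --- including the decisive bound $D^{(2)}_{w_c,u_{c,3}}=4$, which is where the clause being satisfied is used --- is verbatim the argument of \Cref{lem:corrtd1}.

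The only genuinely new ingredient is the labeling of the \emph{spine}, the path $w'_{x_1}-w_{x_1}-w'_{x_2}-\dots-w_{x_n}-w'_{c_1}-w_{c_1}-\dots-w_{c_m}$ that replaces the single vertex $w^\star$, together with the new long-range bounds $D^{(2)}_{v_{x_\ell,i},u_{c,x_\ell,i}}=2(n-i_\ell+j)+3$. I will label every ``attachment'' edge $\{w'_\alpha,w_\alpha\}$ with $1$ and every ``bridging'' edge $\{w_\alpha,w'_\beta\}$ with $2$, so that the spine strictly alternates $1,2,1,2,\dots$; crucially this makes the labels at each junction independent of its position in the spine. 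Every spine bound is between two vertices at distance two (such as $w'_x-w_x-w'_y$ or $w_x-w'_y-w_y$), and alternation makes the two edges of each such path differ, so each fastest path has duration $2\le D^{(2)}=2$; thus by \Cref{obs:traveldelays} and \Cref{lem:duration} all spine bounds hold.

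It remains to verify the cross bounds. I will first compute that the distance in $G^{(2)}$ between $v_{x_\ell,i}$ and $u_{c,x_\ell,i}$ equals $k:=2(n-i_\ell+j)+2$, obtained by summing one edge inside the variable gadget, the $2(n-i_\ell)+2j=2(n-i_\ell+j)$ spine edges from $w_{x_\ell}$ to $w_{c_j}$, and one edge inside the clause gadget; hence the relevant bound is exactly $D^{(2)}=k+1$. By \Cref{lem:duration} the duration of this fastest path is $1+\sum\tau$, where for $\Delta=2$ each travel delay equals $1$ if the two incident edges differ and $2$ if they coincide (\Cref{obs:traveldelays}). Along the interior of the path the spine alternates, so every internal spine transition contributes $\tau=1$; the only two remaining transitions are at the junctions $w_{x_\ell}$ and $w_{c_j}$. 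Here the path enters $w_{x_\ell}$ via a bridging edge (label $2$) and $w_{c_j}$ via an attachment edge (label $1$), while the variable- and clause-gadget edges carry a common copied label $b\in\{1,2\}$. Since $b$ is compared against $2$ at $w_{x_\ell}$ and against $1$ at $w_{c_j}$, exactly one of the two comparisons yields equality; thus exactly one of the $k-1$ travel delays equals $2$ and all others equal $1$, so $\sum\tau=(k-1)+1=k$ and the duration is $1+\sum\tau=k+1=D^{(2)}$, meeting the bound (tightly). As all remaining entries of $D^{(2)}$ are trivial and hence never violated (\Cref{lem:maxdur}), $\lambda$ realizes $D^{(2)}$.

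The main obstacle is the cross-bound verification: one must pin down the exact spine distance so that $D^{(2)}=k+1$, and then argue that the alternating spine forces precisely one equal-label junction for every variable occurrence and both indices $i\in\{1,2\}$ simultaneously. This is exactly where the uniform choice ``attachment $=1$, bridge $=2$'' pays off, since it renders the junction labels position-independent and reduces the whole analysis to comparing a single copied label $b\in\{1,2\}$ against the fixed values $2$ and $1$.
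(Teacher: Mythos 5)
Your proposal is correct and follows essentially the same route as the paper: the same labeling (attachment edges $1$, bridging edges $2$, gadget labels copied from the variable gadgets), the same reduction of the gadget-internal bounds to the argument of \Cref{lem:corrtd1}, and the same alternating-spine analysis of the cross bounds $D^{(2)}_{v_{x_\ell,i},u_{c,x_\ell,i}}=2(n-i_\ell+j)+3$. Your junction analysis is in fact slightly sharper than the paper's (you pin down exactly one equal-label transition and hence duration exactly $k+1$, where the paper only argues ``at most one''), but this is a refinement, not a different approach.
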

\begin{proof}
Let $(X,C)$ be a yes-instance of \textsc{MonNAE3SAT} and let $\phi:X\rightarrow \{\text{true},\text{false}\}$ be a solution for $(X,C)$. Then we construct a solution $\lambda$ for $(G^{(2)},D^{(2)},\Delta)$ as follows. See \Cref{fig:hard2} for an illustration of the solution we create.

Consider the variable gadget for variable $x$. We set $\lambda(\{w'_x,w_x\})=1$. If $\phi(x)=\text{true}$, when we set $\lambda(\{w_x,v_{x,1}\})=1$ and $\lambda(\{w_x,v_{x,2}\})=2$. Otherwise, we set $\lambda(\{w_x,v_{x,1}\})=1$ and $\lambda(\{w_x,v_{x,2}\})=2$. Note that in both cases the upper bound $D^{(2)}_{v_{x,1},v_{x,2}}=D^{(2)}_{v_{x,2},v_{x,1}}=2$ is respected. Lastly, if $x$ is not the last variable in the ordering, we set $\lambda(\{w_x,w'_y\})=2$, where $y$ is the variable directly after $x$ in the ordering, otherwise, we set $\lambda(\{w_x,w'_c\})=2$, where $c$ is the first clause in the ordering. 
Now we have given labels to all edges corresponding to variable gadgets. It is easy to check that for each pair of variables $x,y$ that are ordered consecutively, the upper bounds $D^{(2)}_{w_x,w_y}=D^{(2)}_{w_y,w_x}=2$ and $D^{(2)}_{w'_x,w'_y}=D^{(2)}_{w'_y,w'_x}=2$ are respected. Furthermore, the upper bounds $D^{(2)}_{w_{z},w_{c}}=D^{(2)}_{w_{c},w_z}=2$ and $D^{(2)}_{w'_{z},w'_{c}}=D^{(2)}_{w'_{c},w'_z}=2$ are respected, there $z$ is the last variable in the ordering and $c$ is the first clause in the ordering.

Next, consider clause $c=(x_1,x_2,x_3)$, and let $c$ be the $j$th clause in the ordering. 
We set $\lambda(\{w'_c,w_c\})=1$ and $\lambda(\{w_c,w'_{c'}\})=2$, where $c'$ is the $(j+1)$st clause (if it exists). It is easy to check that for each pair of clauses $c,c'$ that are ordered consecutively, the upper bounds $D^{(2)}_{w_c,w_{c'}}=D^{(2)}_{w_{c'},w_c}=2$ and $D^{(2)}_{w'_c,w'_{c'}}=D^{(2)}_{w'_{c'},w'_c}=2$ are respected.
Furthermore, for all $\ell\in\{1,2,3\}$ we set $\lambda(\{w_c,u_{c,x_\ell,1}\})=\lambda(\{w_{x_\ell},v_{x_\ell,1}\})$ and $\lambda(\{w_c,u_{c,x_\ell,2}\})=\lambda(\{w_{x_\ell},v_{x_\ell,2}\})$. 
For each $\ell\in\{1,2,3\}$, let $x_\ell$ be the $i_\ell$th variable in the ordering. 
Now we argue that for each $\ell\in\{1,2,3\}$, the upper bounds 
\[
D^{(2)}_{v_{x_\ell,1},u_{c,x_\ell,1}}=D^{(2)}_{u_{c,x_\ell,1},v_{x_\ell,1}}= 2(n-i_\ell+j)+3
\]
and
\[
D^{(2)}_{v_{x_\ell,2},u_{c,x_\ell,2}}=D^{(2)}_{u_{c,x_\ell,2},v_{x_\ell,2}}= 2(n-i_\ell+j)+3
\]
are respected.
Note that the distance from $w_{x_\ell}$ to $w_c$ in $G^{(2)}$ is $2(n-i_\ell+j)$. Furthermore, we have that along this path, all edges are labeled alternatingly with 1 and 2. It follows that the duration of a temporal path from $w_{x_\ell}$ to $w_c$ is $2(n-i_\ell+j)$. By construction, the first edge of this path is labeled with 2 and the last edge is labeled with 1.
Additionally, recall that by construction $\lambda(\{w_c,u_{c,x_\ell,1}\})=\lambda(\{w_{x_\ell},v_{x_\ell,1}\})$ and $\lambda(\{w_c,u_{c,x_\ell,2}\})=\lambda(\{w_{x_\ell},v_{x_\ell,2}\})$. 
If follows that both the temporal paths from $v_{x_\ell,1}$ to $u_{c,x_\ell,1}$ and from $v_{x_\ell,2}$ to $u_{c,x_\ell,2}$ have at most one pair of consecutive edges that are labeled with the same label. Furthermore, the temporal paths have length $2(n-i_\ell+j)+2$. It follows that their duration is at most $2(n-i_\ell+j)+3$ and the upper bounds are respected.

From now on the argument is essentially the same as in the proof of \Cref{lem:corrtd1}.
Now for all  $i\in\{2,3\}$, we set $\lambda(\{u_{c,1},u_{c,x_i,3}\})=\lambda(\{w_c,u_{c,x_i,1}\})$ and $\lambda(\{u_{c,1},u_{c,x_i,4}\})=\lambda(\{w_c,u_{c,x_i,2}\})$. Furthermore, we set $\lambda(\{u_{c,2},u_{c,x_3,5}\})=\lambda(\{w_c,u_{c,x_3,1}\})$ and $\lambda(\{u_{c,2},u_{c,x_3,6}\})=\lambda(\{w_c,u_{c,x_3,2}\})$.
Finally, we set $\lambda(\{w_c,u_{c,1}\})=\lambda(\{w_c,u_{c,x_1,1}\})$, $\lambda(\{u_{c,1},u_{c,2}\})=\lambda(\{w_c,u_{c,x_2,1}\})$, and $\lambda(\{u_{c,2},u_{c,3}\})=\lambda(\{w_c,u_{c,x_3,1}\})$.

It is straightforward to observe that the first set of upper bounds of the clause gadgets is respected. The second set of upper bounds is respected since we have set $\lambda(\{w_c,u_{c,1}\})=\lambda(\{w_c,u_{c,x_1,1}\})$ and we have $\lambda(\{w_c,u_{c,x_1,1}\})\neq\lambda(\{w_c,u_{c,x_1,2}\})$.
The third set of upper bounds is respected since we have the following. Note that we have set $\lambda(\{u_{c,1},u_{c,x_2,3}\})=\lambda(\{w_c,u_{c,x_2,1}\})$ and $\lambda(\{u_{c,1},u_{c,x_2,4}\})=\lambda(\{w_c,u_{c,x_2,2}\})$ and that $\lambda(\{u_{c,1},u_{c,x_2,3}\})\neq\lambda(\{u_{c,1},u_{c,x_2,4}\})$. Hence, independently from which label we set on edge $\{w_c,u_{c,1}\}$, we have that the paths from $u_{c,x_2,1}$ to $u_{c,x_2,4}$ and $u_{c,x_2,2}$ to $u_{c,x_2,3}$, respectively, have three edges that do not all have the same label. It follows that the upper bounds $D^{(2)}_{u_{c,x_2,1},u_{c,x_2,4}}=D^{(2)}_{u_{c,x_2,4},u_{c,x_2,1}}=4$ and $D^{(2)}_{u_{c,x_2,2},u_{c,x_2,3}}=D^{(2)}_{u_{c,x_2,3},u_{c,x_2,2}}=4$ are respected. Finally, we have set $\lambda(\{u_{c,1},u_{c,2}\})=\lambda(\{w_c,u_{c,x_2,1}\})$ and hence we have that $\lambda(\{u_{c,1},u_{c,2}\})\neq\lambda(\{u_{c,1},u_{c,x_2,4}\})$. It follows that the upper bounds $D^{(2)}_{u_{c,x_2,4},u_{c,2}}=D^{(2)}_{u_{c,2},u_{c,x_2,4}}=2$ is respected. 
By an analogous argument, we can also show that the fourth set of upper bounds is respected. 
Lastly, assume for contradiction that the fifth set of upper bounds is not respected. Then we have that the upper bound $D^{(2)}_{w_c,u_{c,3}}=D^{(2)}_{u_{c,3},w_c}=4$ is not respected. Note that then the temporal path from $w_c$ to $u_{c,3}$ must have a duration of at least 5. This implies that we must have $\lambda(\{w_c,u_{c,1}\})=\lambda(\{u_{c,1},u_{c,2}\})=\lambda(\{u_{c,2},u_{c,3}\})$. However, by the construction of~$\lambda$, this implies that $\phi(x_1)=\phi(x_2)=\phi(x_3)$, a contradiction to the assumption that clause $c$ is satisfied.

 Hence, since all other upper bounds in $D^{(2)}$ are trivial, we can conclude that the constructed labeling $\lambda$ is indeed a solution for $(G^{(2)},D^{(2)},\Delta)$.
\end{proof}

We proceed with showing the other direction of the correctness.

\begin{lemma}
\label{lem:corrmd2}
    If $(G^{(2)},D^{(2)},\Delta)$ is a yes-instance of \deltaUpperBound, then $(X,C)$ is a yes-instance of \textsc{MonNAE3SAT}.
\end{lemma}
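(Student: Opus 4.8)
The plan is to mirror the argument of \Cref{lem:corrtd2}, with the extra work being to show that the truth value of each variable is faithfully transported along the spine to the clause gadgets. Given a solution $\lambda$, I would set $x$ to \emph{true} iff $\lambda(\{w_x,v_{x,1}\})=1$, assume for contradiction that some clause $c=(x_1,x_2,x_3)$ is unsatisfied, and treat the case where all three variables are \emph{true} (the all-\emph{false} case being symmetric). Concretely this means $\lambda(\{w_{x_\ell},v_{x_\ell,1}\})=1$ for $\ell\in\{1,2,3\}$, and I would aim to derive $\lambda(\{w_c,u_{c,1}\})=\lambda(\{u_{c,1},u_{c,2}\})=\lambda(\{u_{c,2},u_{c,3}\})=1$, contradicting $D^{(2)}_{w_c,u_{c,3}}=4$ exactly as before.

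First I would analyze the spine. For every pair of consecutive vertices on the spine the matrix carries an upper bound of value $2$ on a length-$2$ path (the bounds $D^{(2)}_{w_x,w_y}=D^{(2)}_{w'_x,w'_y}=2$ and their clause and junction analogues), so by \Cref{cor:forcing} the two edges incident to each internal spine vertex receive distinct labels. Hence the entire spine path from $w_{x_\ell}$ to $w_c$ is labeled alternately over $\{1,2\}$; since its length is $2(n-i_\ell+j)$, which is even, its first and last edges carry distinct labels.

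The crux is the transmission step, and this is where I expect the main difficulty. Fix an occurrence of variable $x_\ell$ in $c$ and let $k=2(n-i_\ell+j)+2$ be the length of the unique (tree) path from $v_{x_\ell,1}$ to $u_{c,x_\ell,1}$. By \Cref{lem:duration} and the travel-delay computation for $\Delta=2$, each consecutive pair of edges contributes delay $1$ if the labels differ and delay $2$ if they agree, so the minimum possible duration of this path is $k$. The transmission bound $D^{(2)}_{v_{x_\ell,1},u_{c,x_\ell,1}}=k+1$ is therefore exactly one above the minimum and permits \emph{at most one} pair of equally-labeled consecutive edges. As the spine is already alternating, the only two candidate non-alternating junctions are at $w_{x_\ell}$ and at $w_c$, so at most one of them may be non-alternating; the same holds for the index-$2$ path via $D^{(2)}_{v_{x_\ell,2},u_{c,x_\ell,2}}=k+1$. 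Writing $a=\lambda(\{w_{x_\ell},v_{x_\ell,1}\})$ and $b=\lambda(\{w_c,u_{c,x_\ell,1}\})$, and recalling that the within-gadget bounds $D^{(2)}_{v_{x_\ell,1},v_{x_\ell,2}}=2$ and $D^{(2)}_{u_{c,x_\ell,1},u_{c,x_\ell,2}}=2$ make the index-$2$ endpoint edges carry the complementary labels $\bar a$ and $\bar b$, the two constraints each read ``not both the $w_{x_\ell}$-junction and the $w_c$-junction are non-alternating''; a short case distinction on whether $a$ equals the first spine label then forces $b=a$. This is precisely the step that does \emph{not} follow from \Cref{cor:forcing} applied to a single bound (that bound is too loose to pin a label on its own); using the index-$1$ and index-$2$ bounds \emph{jointly} is what recovers the forcing.

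Having established $\lambda(\{w_c,u_{c,x_\ell,1}\})=\lambda(\{w_{x_\ell},v_{x_\ell,1}\})=1$ for all $\ell$, the remainder is verbatim the argument of \Cref{lem:corrtd2} with $w_c$ playing the role of $w^\star$: the second, third, and fourth sets of upper bounds (together with \Cref{cor:forcing}) propagate these three transmitted labels to the edges $\{w_c,u_{c,1}\}$, $\{u_{c,1},u_{c,2}\}$, and $\{u_{c,2},u_{c,3}\}$, forcing all three to equal $1$. Then the fifth-set bound $D^{(2)}_{w_c,u_{c,3}}=4$ is violated, since a path whose three edges all share a label has a fastest-path duration of $5$, a contradiction. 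As all remaining entries of $D^{(2)}$ are trivial, the constructed assignment is satisfying, completing the proof.
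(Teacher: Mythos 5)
Your proposal is correct and follows essentially the same route as the paper's proof: the spine is forced to alternate by the consecutive-pair bounds via \Cref{cor:forcing}, the transmission bounds $2(n-i_\ell+j)+3$ allow at most one non-alternating junction on each of the two endpoint-to-endpoint paths, and combining the index-$1$ and index-$2$ constraints (with the within-gadget complementarity) forces $\lambda(\{w_c,u_{c,x_\ell,1}\})=\lambda(\{w_{x_\ell},v_{x_\ell,1}\})$ before reusing the argument of \Cref{lem:corrtd2}. The only cosmetic difference is that you case-split on whether $\lambda(\{w_{x_\ell},v_{x_\ell,1}\})$ equals the first spine label, whereas the paper splits on the label of $\{w'_x,w_x\}$; these are equivalent.
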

\begin{proof}
Let $(G^{(2)},D^{(2)},\Delta)$ be a yes-instance of \deltaUpperBound and let labeling $\lambda$ be a solution for the instance. We construct a satisfying assignment for the variables in $X$ as follows.

For each variable $x$, if $\lambda(\{w_c,v_{x,1}\})=1$, then we set the variable $x$ to true. Otherwise, we set the variable $x$ to false. We claim that this yields a satisfying assignment for $(X,C)$.

Assume for contradiction that it is not. Then there is a clause $c\in C$ that is not satisfied, that is, either all three variables in the clause are set to true or all three variables are set to false. Assume that $c=(x_1,x_2,x_3)$ and that all variables in $c$ are set to true (the case where all of them are set to false is symmetric). This means that $\lambda(\{w_{x_1},v_{x_1,1}\})=\lambda(\{w_{x_2},v_{x_2,1}\})=\lambda(\{w_{x_3},v_{x_3,1}\})=1$. 
Note that for all $\ell\in\{1,2,3\}$ we have $D^{(2)}_{v_{x_\ell,1},v_{x_\ell,2}}=D^{(2)}_{v_{x_\ell,1},v_{x_\ell,2}}=2$. It follows by \Cref{cor:forcing} that we must have $\lambda(\{w_{x_1},v_{x_1,2}\})=\lambda(\{w_{x_2},v_{x_2,2}\})=\lambda(\{w_{x_2},v_{x_3,2}\})=2$. 

Furthermore, we can observe the following. Assume that $\lambda(\{w'_x,w_x\})=1$ (the case where $\lambda(\{w'_x,w_x\})=2$ is very similar, we discuss it afterwards). Then the upper bound $D^{(2)}_{w'_x,w'_y}=D^{(2)}_{w'_y,w'_x}=2$, where $y$ is the second variable in the ordering, enforces that $\lambda(\{w_x,w'_y\})=2$. The upper bound $D^{(2)}_{w_x,w_y}=D^{(2)}_{w_y,w_x}=2$ enforces that $\lambda(\{w'_y,w'_y\})=1$.
By iterating this argument, we obtain that for all variables $x$, we have that $\lambda(\{w'_x,w_x\})=1$ and $\lambda(\{w_x,w'_y\})=2$, where $y$ is the variable directly after $x$ in the ordering. Similarly, we get that $\lambda(\{w_x,w'_c\})=2$, if $x$ is the last variable in the ordering and $c$ is the first clause in the ordering. Finally, for each clause $c$ we get that $\lambda(\{w'_c,w_c\})=1$ and $\lambda(\{w_c,w'_{c'}\})=2$, where $c'$ is the clause directly after $c$ in the ordering.
Furthermore, for all $\ell\in\{1,2,3\}$ we have 
\[
D^{(2)}_{v_{x_\ell,1},u_{c,x_\ell,1}}=D^{(2)}_{u_{c,x_\ell,1},v_{x_\ell,1}}= 2(n-i_\ell+j)+3,
\]
and
\[
D^{(2)}_{v_{x_\ell,2},u_{c,x_\ell,2}}=D^{(2)}_{u_{c,x_\ell,2},v_{x_\ell,2}}= 2(n-i_\ell+j)+3.
\]
Note that the distance from $w_{x_\ell}$ to $w_c$ in $G^{(2)}$ is $2(n-i_\ell+j)$. Furthermore, by the arguments above, we have that along this path, all edges are labeled alternatingly with 1 and 2. It follows that the duration of a temporal path from $w_{x_\ell}$ to $w_c$ is $2(n-i_\ell+j)$. 
Moreover, the first edge of this path is labeled with 2 and the last edge is labeled with 1.
Both the temporal paths from $v_{x_\ell,1}$ to $u_{c,x_\ell,1}$ and from $v_{x_\ell,2}$ to $u_{c,x_\ell,2}$ have length $2(n-i_\ell+j)+2$. It follows that they can have at most one pair of consecutive edges that are labeled with the same label.
Since by assumption, we have that $\lambda(\{w_{x_\ell},v_{x_\ell,2}\})=2$, we must have that $\lambda(\{w_{c},u_{c,x_\ell,2}\})=2$, since otherwise the temporal path from $v_{x_\ell,2}$ to $u_{c,x_\ell,2}$ has two pairs of consecutive edges (the first two and the last two) that have the same label and this would violate the upper bound $D^{(2)}_{v_{x_\ell,2},u_{c,x_\ell,2}}=D^{(2)}_{u_{c,x_\ell,2},v_{x_\ell,2}}= 2(n-i_\ell+j)+3$. Now the upper bound $D^{(2)}_{u_{c,x_\ell,1},u_{c,x_\ell,2}}=D^{(2)}_{u_{c,x_\ell,1},u_{c,x_\ell,2}}=2$ forces by \Cref{cor:forcing} that $\lambda(\{w_{c},u_{c,x_\ell,1}\})=1$.

Note that if $\lambda(\{w'_x,w_x\})=2$, the main difference is that the first edge of the path from $w_{x_\ell}$ to $w_c$ is labeled with 1 and the last edge is labeled with 2. Since by assumption, we have that $\lambda(\{w_{x_\ell},v_{x_\ell,1}\})=1$, we must have that $\lambda(\{w_{c},u_{c,x_\ell,1}\})=1$, since otherwise the temporal path from $v_{x_\ell,2}$ to $u_{c,x_\ell,2}$ has two pairs of consecutive edges (the first two and the last two) that have the same label and this would violate the upper bound $D^{(2)}_{v_{x_\ell,1},u_{c,x_\ell,1}}=D^{(2)}_{u_{c,x_\ell,1},v_{x_\ell,1}}= 2(n-i_\ell+j)+3$. Now the upper bound $D^{(2)}_{u_{c,x_\ell,1},u_{c,x_\ell,2}}=D^{(2)}_{u_{c,x_\ell,1},u_{c,x_\ell,2}}=2$ forces by \Cref{cor:forcing} that $\lambda(\{w_{c},u_{c,x_\ell,2}\})=2$.

From now on the argument is essentially the same as in the proof of \Cref{lem:corrtd2}.
Since we have $D^{(2)}_{u_{c,x_1,2},u_{c,1}}=D^{(2)}_{u_{c,1},u_{c,x_1,2}}=2$, by \Cref{cor:forcing} we have that $\lambda(\{w_c,u_{c,1}\})=1$.
Now consider the third set of upper bounds for the clause gadget corresponding to $c$. We have that $D^{(2)}_{u_{c,x_2,1},u_{c,x_2,4}}=D^{(1)}_{u_{c,x_2,4},u_{c,x_2,1}}=4$. Since $\lambda(\{w_c,u_{c,x_2,1}\})=\lambda(\{w^\star,u_{c,1}\})=1$, we must have that $\lambda(\{u_{c,1},u_{c,x_2,4}\})=2$ to respect the above upper bound. 
Furthermore, we have that $D^{(2)}_{u_{c,x_2,4},u_{c,2}}=D^{(2)}_{u_{c,2},u_{c,x_2,4}}=2$. By \Cref{cor:forcing} we must have that $\lambda(\{u_{c,1},u_{c,2}\})=1$.

By an analogous argument for the labels on edges in the clause gadget corresponding to $c$ that are associated with $x_3$, we get that $\lambda(\{u_{c,2},u_{c,3}\})=1$.
Summarizing, we get that 
\begin{equation*}
\lambda(\{w_c,u_{c,1}\})=\lambda(\{u_{c,1},u_{c,2}\})=\lambda(\{u_{c,2},u_{c,3}\})=1.
\end{equation*}
This implies that the duration of a fastest path from $w_c$ to $u_{c,3}$ in $(G^{(2)},\lambda)$ is $5$. However, we have $D^{(2)}_{w_c,u_{c,3}}=4$, a contradiction to the assumption that $\lambda$ is a solution for $(G^{(2)},D^{(2)},\Delta)$.
\end{proof}

Now we have all ingredients to prove \Cref{thm:NPh}.

\begin{proof}[Proof of \Cref{thm:NPh}]
The first statement of \Cref{thm:NPh} follows from \Cref{prop:star}.
For the second and third statements, consider the following. Recall that we set $\Delta=2$.
\Cref{lem:corrtd1,lem:corrtd2} together with \Cref{obs:td,obs:red1} prove the second statement of \Cref{thm:NPh}, namely that \deltaUpperBound is NP-hard even if the input tree $G$ has constant diameter and $\Delta=2$.
\Cref{lem:corrmd1,lem:corrmd2} together with \Cref{obs:md,obs:red2} prove the third statement of \Cref{thm:NPh}, namely that \deltaUpperBound is NP-hard even if the input tree $G$ has constant maximum degree and $\Delta=2$.
\end{proof}

As we now show, the hardness reduction on trees for~$\Delta\geq 3$ is tight in the sense that for~$\Delta= 2$, the problem is polynomial-time solvable on trees.
More generally, we show the following.

\begin{theorem}\label{thm:fptvc}
For~$\Delta = 2$, \deltaUpperBound can be solved in $4^{\vc} \cdot n^{O(1)}$~time, where~$\vc$ denotes the vertex cover number of the input tree.
\end{theorem}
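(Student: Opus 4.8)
The plan is to exploit the fact that for $\Delta=2$ the duration of a fastest temporal path has an exceptionally simple form, and then to branch over a structurally bounded set of ``core'' edges before finishing with 2-SAT. First I would rewrite the constraints. For $\Delta=2$, \Cref{obs:traveldelays} gives that the travel delay at an internal vertex equals $1$ when its two incident path-edges carry different labels and $2$ when they carry the same label; in particular it is independent of direction, so all durations are symmetric. Combining this with \Cref{lem:duration}, the fastest $(v_i,v_j)$-path (whose vertex sequence is the unique tree path) has duration $d=\mathrm{dist}(v_i,v_j)+m$, where $m$ is the number of \emph{monochromatic} internal vertices, i.e.\ internal vertices whose two incident path-edges share a label. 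Hence each requirement $d\le D_{i,j}$ becomes a bound $m\le D_{i,j}-\mathrm{dist}(v_i,v_j)$ on the number of monochromatic internal vertices along a fixed path, and requirements with $D_{i,j}\ge 2\,\mathrm{dist}(v_i,v_j)-1$ are vacuous.

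Next I would fix a minimum vertex cover $S$ with $|S|=\vc$, chosen so that no leaf of $G$ lies in $S$ (a leaf in the cover can always be replaced by its neighbour without increasing the size), and set $I=V\setminus S$, which is independent. Let $T_S$ be the minimal subtree (Steiner tree) of $G$ spanning $S$. Its leaves are terminals, so every vertex of $V(T_S)\setminus S$ lies in $I$ and has degree at least $2$ in $T_S$, and these vertices are pairwise non-adjacent. Writing $q=|V(T_S)\setminus S|$, independence forces all incident edges of these $q$ vertices to go to $S$, and they are distinct, so $T_S$ has at least $2q$ edges; since $|E(T_S)|=\vc+q-1$, this yields $2q\le \vc+q-1$, hence $q\le \vc-1$ and $|E(T_S)|\le 2\vc-2$. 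Moreover, any vertex $u\notin V(T_S)$ lies in $I$, and if it had degree $\ge 2$ it would join two $S$-vertices and thus lie on $T_S$; so $u$ is a leaf attached to a vertex of $S$. Consequently every edge of $G$ is either a \emph{core} edge of $T_S$ or a \emph{pendant} (leaf) edge.

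I would then branch over all $2^{|E(T_S)|}\le 2^{2\vc-2}=4^{\vc-1}$ labelings of the core edges. Fixing these determines the monochromaticity of every internal vertex whose two path-edges are both core edges; in particular every constraint between two core vertices is fully evaluated and can be checked directly, discarding the branch on failure. Every surviving constraint concerns a path with at least one leaf endpoint, and since a leaf is incident to a single pendant edge and can only be a path endpoint, such a path contains at most two pendant edges. Thus, with the core fixed, each surviving constraint is a threshold on at most two indicator terms of the form $[x_e\neq a]$ (a pendant-vs-core flip, with $a$ the fixed core label) or $[x_e\neq x_{e'}]$ (two pendants at the same $S$-vertex), where the $x$'s are the unknown leaf-edge labels. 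A threshold on at most two such literals is exactly a 2-SAT clause (or a unit clause, a tautology, or an immediate contradiction), so the residual problem is a 2-SAT instance over the leaf-edge variables, solvable in polynomial time. The instance is a yes-instance iff some branch yields a satisfiable 2-SAT instance, giving total running time $4^{\vc-1}\cdot n^{O(1)}=4^{\vc}\cdot n^{O(1)}$.

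The main obstacle is the structural step: establishing, via the independence of $I$, that the Steiner tree $T_S$ has at most $2\vc-2$ edges and that every non-core vertex is a pendant leaf, together with the verification that once the core edges are fixed no constraint references more than two free variables. Given that, the duration reformulation and the 2-SAT solve are routine.
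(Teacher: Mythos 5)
Your proposal is correct and follows essentially the same route as the paper: bound the number of internal (non-leaf-incident) edges by $O(\vc)$, branch over their $2$-labelings, and finish with a 2-SAT instance over the pendant-edge labels, using that any remaining constraint touches at most two pendant edges. The only differences are cosmetic --- you derive the bound on the internal edges via a Steiner-tree count over the cover (getting $2\vc-2$ instead of the paper's $2\vc$) and you phrase durations as distance plus the number of monochromatic internal vertices, which is exactly the quantity the paper's case analysis controls.
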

\begin{proof}
Let~$I:= (G=(V,E),D,\Delta)$ be an instance of of \deltaUpperBound with~$\Delta = 2$.
Note that we can assume that~$D$ is symmetrical, since for~$\Delta=2$, the duration of a path~$P$ and the duration of its reverse path are identical.
Moreover, we can assume that the for each vertex pair, the duration required by~$D$ is at least the distance between these vertices in~$G$, as otherwise, $I$ is a trivial no-instance.
Additionally, we can assume that~$G$ contains at least one internal vertex, as otherwise, $G$ contains at most two vertices and can thus be solved in polynomial time.

Let~$E'$ denote the internal edges of~$G$, that is, edges $\{u,v\}\in E$ such that neither $u$ nor $v$ is a leaf in $G$. 
Note that~$E'$ has size at most two times the vertex cover number of~$G$.\footnote{
This can be seen as follows:
Let~$Y$ denote a smallest vertex cover of~$G$ and let~$v^*$ be an arbitrary internal vertex of~$G$.
We define~$M$ to be the set of vertices of~$Y$ plus all vertices that are parents of vertices of~$Y$, when rooting the tree~$G$ in vertex~$v^*$.
Clearly, $M$ has size at most~$2\cdot |Y| = 2 \cdot \vc$.
Now, for each internal vertex~$v$, $v$ is a vertex of~$Y$, or some parent of some vertex in~$Y$.
In other words, each internal vertex is contained in~$M$, which implies that~$G[M]$ contains at most~$|M|-1 < 2 \cdot \vc$ edges, since~$G$ is a tree.} 
Our algorithm works as follows:
We iterate over all possible labelings~$\lambda' \colon E' \to \{1,2\}$ and then check whether we can assign labels to the remaining edges (namely, the edges incident with the leaves of~$G$), such that a resulting labeling~$\lambda$ realizes the desired upper bounds of~$D$.
If this is true for some labeling~$\lambda'$, we output the found labeling~$\lambda$.
Otherwise, we output that~$I$ is a no-instance.
Surely, this algorithm is correct.
It thus remains to show the running time of $4^{\vc} \cdot n^{O(1)}$~time.

Since there are only~$2^{|E'|} \leq 2^{2\cdot \vc}$ such labelings~$\lambda'$, it suffices to show that for a given labeling~$\lambda' \colon E' \to \{1,2\}$, we can decide in polynomial time whether there is a labeling~$\lambda \colon E \to \{1,2\}$ that agrees with~$\lambda$ on the labels of all edges of~$E'$.
Let~$\lambda' \colon E' \to \{1,2\}$ and let $G'$ denote the subgraph of $G$ induced on the internal vertices, where a vertex of $G$ is internal if it is not a leaf.
First, we check whether for each two distinct internal vertices~$u$ and~$v$, the duration of the unique temporal~$(u,v)$-path in~$(G',\lambda')$ is at most as specified in~$D$.
If this is not the case, we correctly detect that there is no extension~$\lambda$ of~$\lambda'$ that realizes~$D$.
So suppose that the above property holds for every two distinct internal vertices.
Hence, it remains to ensure that all paths are fast enough between vertex pairs for which at least one vertex is a leaf.
To solve this task, we construct a 2-SAT formula~$F$ and show that~$F$ is satisfiable if and only if there is an extension~$\lambda$ of~$\lambda'$ that realizes~$D$. 

The formula~$F$ contains one variable~$x_e$ for each edge~$e \in E \setminus E'$ (that is, for each edge incident with a leaf).
Intuitively, the truth value of~$x_e$ should be identical to the label that edge~$e$ shall receive (modulo 2).

Since we only checked whether all durations between any two internal vertices are realized, we only have to ensure the durations between vertex pairs for which (i)~one vertex is an internal vertex and one vertex is a leaf or (ii)~both vertices are leaves.

Let~$u$ be an internal vertex, let~$v$ be a leaf, and let~$v'$ be the unique neighbor of~$v$ in~$G$.
If~$u$ and~$v$ are neighbors in~$G$, each labeling realizes their desired duration.
Thus assume that~$u$ and~$v$ are not adjacent.
Moreover, let~$P$ be the subpath from~$u$ to~$v'$ and let~$v''$ be the unique neighbor of~$v'$ in~$P$.
Note that all edges of~$P$ are internal edges.
Let~$q := D(u,v)-d(P)$, where~$d(P)$ is the duration of~$P$ under~$\lambda'$.
If~$q \leq 0$, then no extension of~$\lambda'$ can realize the desired duration on the path from~$u$ to~$v$.
If~$q \geq 2$, then each extension of~$\lambda'$ realizes the desired duration on the path from~$u$ to~$v$.
If~$q = 1$, then the desired duration on the path from~$u$ to~$v$ is realized in exactly those extensions~$\lambda$ of~$\lambda'$ with~$\lambda(\{v',v''\}) \neq \lambda(\{v',v\})$.
To ensure this, we add the clause $(x_{\{v,v'\}})$ to~$F$ if~$\lambda'(\{v',v''\}) = 2$, and we add the clause~$(\neg x_{\{v,v'\}})$ to~$F$, otherwise.
By the above argumentation, this is correct and ensures that in each satisfying assignment for~$F$, the truth value of~$x_{\{v,v'\}}$ is distinct from the label of~$\{v,v'\}$ (modulo 2).

Let~$u$ and~$v$ be leaves and let~$u'$ and~$v'$ be their unique neighbors in~$G$.
We essentially do the same as above.

\textbf{Case 1:}~$u' = v'$\textbf{.}
Since~$u$ and~$v$ have distance two, $D(u,v) \geq 2$.
If~$D(u,v) \geq 3$, then each extension of~$\lambda'$ realizes the desired duration on the path from~$u$ to~$v$.
If~$q = 2$, then the desired duration on the path from~$u$ to~$v$ is realized in exactly those extensions~$\lambda$ of~$\lambda'$ with~$\lambda(\{u,u'\}) \neq \lambda(\{v',v\})$.
To ensure this, we add the clauses $(x_{\{u,u'\}} \lor x_{\{v',v\}})$ and~$(\neg x_{\{u,u'\}} \lor \neg x_{\{v',v\}})$ to~$F$.
By the above argumentation, this is correct and ensures that in each satisfying assignment for~$F$, the truth value of~$x_{\{u,u'\}}$ and~$x_{\{v',v\}}$ are distinct.

\textbf{Case 2:}~$u' \neq v'$\textbf{.}
Let~$P$ be the unique subpath from~$u'$ to~$v'$ in~$G$.
Note that each edge of~$P$ is an internal edge and that~$P$ contains at least one edge, since~$u' \neq v'$. 
Let~$u''$ be the unique neighbor of~$u'$ in~$P$ and let~$v''$ be the unique neighbor of~$v'$ in~$P$.
Let~$q := D(u,v)-d(P)$, where~$d(P)$ is the duration of~$P$ under~$\lambda'$.
If~$q \leq 1$, then no extension of~$\lambda'$ can realize the desired duration on the path from~$u$ to~$v$.
If~$q \geq 4$, then each extension of~$\lambda'$ realizes the desired duration on the path from~$u$ to~$v$.

\textbf{Case 2.1:} $q = 2$\textbf{.}
Then, the desired duration on the path from~$u$ to~$v$ is realized in exactly those extensions~$\lambda$ of~$\lambda'$ with~$\lambda(\{u',u''\}) \neq \lambda(\{u',u\})$ and~$\lambda(\{v',v''\}) \neq \lambda(\{v',v\})$.
To ensure this, we add the clause $(x_{\{u,u'\}})$ to~$F$ if~$\lambda'(\{u',u''\}) = 2$, and we add the clause~$(\neg x_{\{u,u'\}})$ to~$F$, otherwise.
Additionally, we add the clause $(x_{\{v,v'\}})$ to~$F$ if~$\lambda'(\{v',v''\}) = 2$, and we add the clause~$(\neg x_{\{v,v'\}})$ to~$F$, otherwise.
By the above argumentation, this is correct and ensures that in each satisfying assignment for~$F$, (i)~the truth value of~$x_{\{u,u'\}}$ is distinct from the label of~$\{u,u'\}$ (modulo 2) and (ii)~the truth value of~$x_{\{v,v'\}}$ is distinct from the label of~$\{v,v'\}$ (modulo 2).

\textbf{Case 2.2:} $q = 3$\textbf{.}
Then, the desired duration on the path from~$u$ to~$v$ is realized in exactly those extensions~$\lambda$ of~$\lambda'$ with~$\lambda(\{u',u''\}) \neq \lambda(\{u',u\})$ or~$\lambda(\{v',v''\}) \neq \lambda(\{v',v\})$.
To ensure this, we add the clause $$C := \begin{cases}
(x_{\{u,u'\}} \lor x_{\{u,u'\}}) & \lambda'(\{u',u''\}) = 2, \lambda'(\{v',v''\}) = 2 \\
(x_{\{u,u'\}} \lor \neg x_{\{u,u'\}}) & \lambda'(\{u',u''\}) = 2, \lambda'(\{v',v''\}) = 1 \\
(\neg x_{\{u,u'\}} \lor x_{\{u,u'\}}) & \lambda'(\{u',u''\}) = 1, \lambda'(\{v',v''\}) = 2 \\
(\neg x_{\{u,u'\}} \lor \neg x_{\{u,u'\}}) & \lambda'(\{u',u''\}) = 1, \lambda'(\{v',v''\}) = 1 
\end{cases}$$ to~$F$.
By the above argumentation, this is correct and ensures that in each satisfying assignment for~$F$, the truth value of~$x_{\{u,u'\}}$ is distinct from the label of~$\{u,u'\}$ (modulo 2), or the truth value of~$x_{\{v,v'\}}$ is distinct from the label of~$\{v,v'\}$ (modulo 2).

This completes the definition of~$F$.
Note that~$F$ contains at most~$n^2$ clauses and can be constructed in polynomial time and recall that an instance of 2-SAT can be solved in linear time~\cite{aspvall1979linear}.
Hence, by the above argumentation, we can correctly check in polynomial time whether there is an extension~$\lambda$ of~$\lambda'$ that realizes all durations of~$D$.
This completes the proof.
\end{proof}

\section{Main FPT-algorithm for \deltaUpperBound Based on MILPs}\label{sec:FPT}

Our hardness results from the previous section exclude FPT-algorithms for general~$\Delta$ for nearly all reasonable parameters on trees like maximum degree or vertex cover number. 
In this section, we consider the essentially only remaining parameter on trees: the number of leaves of the tree. 
This parameter is larger than the maximum degree and incomparable to the vertex cover number.

\begin{theorem}\label{thm:FPT}
    \deltaUpperBound\ is fixed-parameter tractable when parameterized by the number $\ell$ of leaves of the input tree $G$.
\end{theorem}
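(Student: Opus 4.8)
The plan is to follow the route sketched in the introduction: transform \deltaUpperBound into a bounded number of \textsc{MILP} instances, each with only $f(\ell)$ integer variables and a continuous part whose constraint matrix is totally unimodular, and then invoke the FPT algorithm for \textsc{MILP} parameterized by the number of integer variables~\cite{Lenstra1983Integer,dadush2011enumerative}.

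\emph{Step 1 (skeleton decomposition).} First I would exploit that a tree with $\ell$ leaves has at most $\ell-2$ vertices of degree at least $3$, so that suppressing all maximal paths of degree-$2$ vertices yields a \emph{skeleton} with $O(\ell)$ nodes (the leaves and branching vertices) and $O(\ell)$ \emph{chains} joining them. Since $\sum_{\deg(v)\ge 3}(\deg(v)-2)\le \ell$, the number of edges incident to a skeleton node, the \emph{boundary edges}, is also $O(\ell)$. The labels of these boundary edges, together with one integer \emph{wrap count} per chain, will be the only integer variables.

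\emph{Step 2 (reparameterising chains by travel delays).} Using \Cref{def:travelDelays} I would reparameterise the labeling of a chain $c_0,\dots,c_m$ by its forward travel delays $\tau_1,\dots,\tau_{m-1}\in\{1,\dots,\Delta\}$: any choice of the $\tau_i$ together with the label $\lambda(f_1)$ of the first edge reconstructs a unique labeling (set $\lambda(f_{i+1})=((\lambda(f_i)+\tau_i-1)\bmod\Delta)+1$), and by \Cref{lem:duration} the duration of the forced temporal path from $c_a$ to $c_b$ equals $1+\sum_{i=a+1}^{b-1}\tau_i$. The two boundary labels are tied by $\lambda(f_m)\equiv\lambda(f_1)+\sum_i\tau_i \pmod\Delta$, which I would linearise as $\lambda(f_m)=\lambda(f_1)+\sum_i\tau_i-\Delta\,W_C$ with an integer wrap count $W_C\ge 0$. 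Introducing prefix-sum variables $S_t=\sum_{i\le t}\tau_i$ turns every within-chain duration bound into a \emph{difference constraint} $S_{b-1}-S_a\le D_{c_a,c_b}-1$, and each box constraint $\tau_i\in[1,\Delta]$ into $1\le S_i-S_{i-1}\le\Delta$.

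\emph{Step 3 (assembling the MILP and the TU argument).} For a distant pair $v_i,v_j$ the unique path decomposes into a suffix of one end-chain, several entire middle chains, and a prefix of the other end-chain, separated by travel delays at branching vertices; by \Cref{obs:traveldelays} a reverse traversal replaces $\tau_i$ by $\Delta-\tau_i$. Since the entire middle chains contribute their fixed totals $R_C=\sum_{i\in C}\tau_i$ (an integer-linear expression in the boundary labels and $W_C$), every duration constraint again becomes a difference constraint $S_p-S_q\le \mathrm{const}$ over the prefix-sum variables, where the additive constant collects the branching-vertex delays (integer-linear in the boundary labels, once a comparison outcome is fixed). Fixing all integer variables, the residual system over the continuous prefix-sum variables is a pure system of difference constraints, whose matrix is the vertex--arc incidence matrix of a digraph (equivalently, a network matrix) and hence totally unimodular; as all right-hand sides are then integral, feasibility of the LP relaxation yields an integral solution, i.e.\ integral values of all $\tau_i$ and thus a genuine $\Delta$-periodic labeling. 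With $O(\ell)$ integer variables (boundary labels and wrap counts) and only $O(n^2)$ constraints, each \textsc{MILP} is solvable in FPT time by~\cite{Lenstra1983Integer,dadush2011enumerative}.

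\emph{Step 4 (branching) and the main obstacle.} To make the branching-vertex delays and the reverse-direction terms genuinely linear I would branch over the $O(\ell!)=f(\ell)$ relative orders of the boundary labels (plus, per chain, a constant amount of extra discrete data), fixing every sign and comparison in advance; the instance is a yes-instance iff some branch's \textsc{MILP} is feasible. I expect the main obstacle to be exactly the modular, wrap-around nature of the travel delays, and in particular the exceptional case of equal consecutive labels, where by \Cref{obs:traveldelays} the delay equals $\Delta$ in \emph{both} directions rather than splitting as $\tau$ and $\Delta-\tau$. This case threatens both the linearity of reverse-direction bounds (for asymmetric $D$) and the integral-right-hand-side property underpinning the TU argument; by \Cref{lem:maxdur} such equalities force a duration to its maximum, so the crux is to show, exploiting $\Delta\ge 3$ (with $\Delta=2$ treated separately), that they can be avoided or absorbed into the bounded branching without destroying total unimodularity.
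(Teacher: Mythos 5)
Your architecture matches the paper's almost exactly: branch over the $f(\ell)$ many ways the boundary labels at branching vertices can relate to each other, set up one \textsc{MILP} per branch with integer variables only for the branching structure and continuous travel-delay variables along the degree-2 chains, prove integrality of the continuous part via total unimodularity, and invoke~\cite{Lenstra1983Integer,dadush2011enumerative}. However, the issue you yourself flag as ``the crux'' --- the case where the two edges incident to a degree-2 vertex receive equal labels, so that the delay is $\Delta$ in both directions rather than splitting as $\tau$ and $\Delta-\tau$ --- is a genuine gap, and neither of your two proposed escape routes works as stated. Absorbing it into the branching is impossible: there are up to $n-O(\ell)$ degree-2 vertices, so guessing which of them are ``equal-label'' costs $2^{\Theta(n)}$ branches, not $f(\ell)$. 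What is actually needed is an exchange argument (the paper's \Cref{lem:labels}): if a solution labels both edges at a degree-2 vertex $v$ equally, shift every label in one of the two subtrees hanging off $v$ cyclically by $+1$; this leaves all travel delays unchanged except the two at $v$, which both strictly decrease, so no fastest-path duration increases. Hence one may always restrict attention to solutions in which every degree-2 vertex has distinctly labelled incident edges, which is what licenses the box constraint $\tau_v\in[1,\Delta-1]$ and the clean $\tau$ versus $\Delta-\tau$ split. This works uniformly for all $\Delta\ge 2$; no separate treatment of $\Delta=2$ is required.

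A secondary inaccuracy: your claim that, after fixing the integer variables, every duration constraint becomes a pure difference constraint $S_p-S_q\le c$ fails when both endpoints of the pair are degree-2 vertices lying in \emph{different} chains. Then two end-chains are each partially traversed and the row has up to four entries $\pm 1$ spread over two chains' prefix-sum variables, so the residual matrix is not a vertex--arc incidence matrix of a digraph. The paper avoids this by working directly in the delay variables and showing the resulting constraint matrix is a network matrix with respect to a directed tree built from the (contracted) line graph of $G$. Your prefix-sum system is obtained from that one by a unimodular integer change of variables, so integrality of the polyhedron could still be rescued, but the TU justification as you wrote it does not apply to those cross-chain rows and would need to be replaced by an argument of the paper's kind.
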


To show \Cref{thm:FPT} we present a Turing-reduction from \deltaUpperBound\ to \textsc{Mixed Integer Linear Program} (\textsc{MILP}).

\optproblemdef{\textsc{Mixed Integer Linear Program} (\textsc{MILP})}{A vector $x$ of $n$ variables of which some are considered integer variables, a constraint matrix $A\in\mathbb{R}^{m\times n}$, and two vectors $b\in\mathbb{R}^m$, $c\in \mathbb{R}^n$.}{Compute an assignment to the variables (if one exists) such that all integer variables are set to integer values, $Ax\le b$, $x\ge 0$, and $c^\intercal x$ is maximized.}

Given an instance $(G,D,\Delta)$ of \deltaUpperBound, we will produce several \textsc{MILP} instances. We will prove that $(G,D,\Delta)$ is a yes-instance if and only if at least one of the \textsc{MILP} instances admits a feasible solution. The number of produced instances is upper-bounded by a function of the number $\ell$ of leaves in $G$.
Each of the produced \textsc{MILP} instances will have a small number of integer variables. More precisely, the number of integer variables will be upper-bounded by a function of the number $\ell$ of leaves in $G$. This will allow us to upper-bound the running time necessary to solve the \textsc{MILP} instances using the following known result.

\begin{theorem}[\hspace{-0.0001cm}\cite{Lenstra1983Integer,dadush2011enumerative}]\label{thm:MILP}
    \textsc{MILP} is fixed-parameter tractable when parameterized by the number of integer variables.
\end{theorem}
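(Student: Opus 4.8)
The plan is to prove the classical result of Lenstra, as refined by Dadush et al.: deciding feasibility of, and optimizing over, a \textsc{MILP} with $p$ integer variables can be done in time $f(p)\cdot s^{O(1)}$, where $s$ is the encoding length of the instance. Throughout, write the variable vector as $x=(y,z)$ with $y\in\ZZ^{p}$ the integer variables and $z\in\RR^{q}$ the continuous ones, and let $P=\{x=(y,z) : Ax\le b,\ x\ge 0\}$ be the feasible polyhedron.

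First I would reduce the optimization question to feasibility and eliminate the continuous variables. Maximizing $c^\intercal x$ over $P$ is handled by binary search on a threshold $\gamma$: since all data are rational with bounded encoding length, a finite optimum is attained at a vertex whose value has polynomially bounded numerator and denominator, so $O(s)$ rounds of testing feasibility of $P\cap\{c^\intercal x\ge \gamma\}$ suffice, while unboundedness is detected by an auxiliary linear program. It then remains to decide, for a rational polyhedron, whether it contains a point whose $y$-coordinates are integral. Projecting $P$ onto the integer coordinates yields a rational polyhedron $Q=\{y\in\RR^{p} : \exists z,\ (y,z)\in P\}$, and the \textsc{MILP} is feasible if and only if $Q$ contains a lattice point of $\ZZ^{p}$. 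Crucially, I would never compute a facet description of $Q$ (which may be exponentially large); instead $Q$ is accessed through two polynomial-time oracles built from linear programming: a \emph{separation/membership} oracle (test whether a given $y$ lifts to a feasible $z$ by solving an LP, reading off a separating hyperplane from LP duality when it does not) and a \emph{linear-optimization} oracle (optimize any objective $d^\intercal y$ over $Q$ by optimizing $d^\intercal y$ over $P$ and projecting).

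The core is Lenstra's algorithm applied to $Q\subseteq\RR^{p}$, which searches for a point of $\ZZ^{p}$ inside $Q$ by recursion on the dimension. At each node I would first \emph{well-round} the current polyhedron: using lattice basis reduction together with the optimization oracle, compute a unimodular change of lattice basis after which $Q$ is sandwiched between two concentric balls whose radius ratio $R/r$ is bounded by a function of $p$ only. Then I would invoke Khinchine's \emph{flatness theorem}: there is a dimension-only constant $\omega(p)$ such that either the inradius of the rounded body is large enough to force a lattice point inside (which is then located directly), or there is a nonzero lattice direction $c\in\ZZ^{p}$ whose \emph{width} $\max_{y\in Q} c^\intercal y-\min_{y\in Q} c^\intercal y$ is at most $\omega(p)$. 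Such a thin direction is extracted as a short vector in the dual lattice of the rounding, again via lattice reduction, and its extreme values are read off with the optimization oracle. In the flat case every lattice point of $Q$ satisfies $c^\intercal y=k$ for one of at most $\omega(p)+1$ integers $k$; intersecting $Q$ with each such hyperplane produces $O(\omega(p))$ rational polyhedra of dimension $p-1$, each still equipped with oracles, and I would recurse on them. The recursion has depth $p$ and branching at most $\omega(p)+1$, so it spawns $g(p):=\prod_{i=1}^{p}(\omega(i)+1)$ subproblems, each performing only polynomial LP and lattice-reduction work; the total running time is $g(p)\cdot s^{O(1)}$, i.e.\ \FPT{} in $p$. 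For an integer point $y^\star$ returned by the search, a single LP recovers an accompanying $z^\star$, yielding the desired feasible \textsc{MILP} solution.

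The main obstacle is the geometric heart of the argument: establishing the flatness theorem with a \emph{dimension-only} width bound $\omega(p)$ and turning it into an \emph{algorithm} that actually produces the thin lattice direction rather than merely asserting its existence. This is precisely where lattice basis reduction does the heavy lifting, once for well-rounding the body and once more for finding a short dual vector certifying flatness, and it is the step whose polynomial-time constructivity demands the most care; the refinement of Dadush et al.\ replaces this machinery with enumeration built on the shortest-vector problem, improving the constant $\omega(p)$ and hence $g(p)$. A secondary, more routine difficulty, orthogonal to the geometry, is that $Q$ is available only implicitly through projecting out the continuous variables, so separation, width evaluation, and rounding must all be realized by polynomial-time linear programs, and one must separately dispose of unbounded and lower-dimensional polyhedra and convert the optimization objective into feasibility tests via bounded binary search.
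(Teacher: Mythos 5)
The paper does not prove this theorem itself but cites it to Lenstra and to Dadush et al., and your proposal is a faithful reconstruction of exactly that cited argument: projecting out the continuous variables, accessing the projection via LP-based separation/optimization oracles, and running Lenstra's basis-reduction-plus-flatness recursion (with the Dadush et al.\ enumerative refinement) to get $f(p)\cdot s^{O(1)}$ time. Your sketch is correct and takes essentially the same route as the sources the paper relies on.
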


Furthermore, we build our \textsc{MILP} formulations in a specific way that ensures that there always exist optimal solutions where \emph{all} variables are set to integer values. Informally, we ensure that the constraint matrix for the rational variables is totally unimodular. This allows us to use the following result.

\begin{lemma}[\hspace{-0.0001cm}\cite{CMZ24}]\label{lem:MILP}
    Let $A_{\text{frac}} \in \mathbb{R}^{m \times n_2}$ be totally unimodular.
    Then for any $A_{\text{int}} \in \mathbb{R}^{m \times n_1}$, $b \in \mathbb{R}^m$, and $c \in \mathbb{R}^{n_1 + n_2}$, the MILP
    \(
    \max c^\intercal x \text{ subject to } (A_{\text{int}} \ A_{\text{frac}})x\le b, x\ge 0,
    \)
    where $x = (x_{\text{int}} \ x_{\text{frac}})^\intercal$ with the first $n_1$ variables (i.e., $x_{\text{int}}$) being the integer variables, has an optimal solution where all variables are integer.
\end{lemma}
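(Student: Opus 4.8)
The plan is to fix the integer variables at an optimal integral assignment and then argue that the residual linear program over the continuous variables attains its optimum at an integral point, using the total unimodularity of $A_{\text{frac}}$.

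First I would dispose of the degenerate cases: if the MILP is infeasible or unbounded it has no optimal solution and the claim holds vacuously, so assume it is feasible and bounded and let $x^\star = (x_{\text{int}}^\star\ x_{\text{frac}}^\star)^\intercal$ be an optimal solution, where by definition $x_{\text{int}}^\star$ is integral. Substituting $x_{\text{int}} = x_{\text{int}}^\star$ turns the problem into the pure linear program
\begin{equation*}
\max\ c_{\text{frac}}^\intercal x_{\text{frac}} \quad\text{s.t.}\quad A_{\text{frac}}\,x_{\text{frac}} \le b - A_{\text{int}}\,x_{\text{int}}^\star,\ \ x_{\text{frac}} \ge 0,
\end{equation*}
where $c = (c_{\text{int}}\ c_{\text{frac}})^\intercal$. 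By optimality of $x^\star$, the vector $x_{\text{frac}}^\star$ is feasible and optimal for this LP, so in particular the LP is feasible and bounded.

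Next I would invoke the Hoffman--Kruskal theorem: since $A_{\text{frac}}$ is totally unimodular, for every integral right-hand side $b'$ the polyhedron $\{x_{\text{frac}} \ge 0 : A_{\text{frac}}\,x_{\text{frac}} \le b'\}$ has only integral vertices. Applying this to $b' = b - A_{\text{int}}\,x_{\text{int}}^\star$, and using that the nonnegativity constraints $x_{\text{frac}} \ge 0$ make the feasible region pointed, a bounded feasible LP over a pointed polyhedron attains its maximum at a vertex; taking such an optimal vertex $\hat{x}_{\text{frac}}$ gives an integral optimal solution of the residual LP with objective value equal to that of $x_{\text{frac}}^\star$. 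Re-attaching $x_{\text{int}}^\star$ then yields an optimal solution $(x_{\text{int}}^\star\ \hat{x}_{\text{frac}})^\intercal$ of the MILP in which every variable is integral, as required.

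The step I expect to be the crux is guaranteeing that $b' = b - A_{\text{int}}\,x_{\text{int}}^\star$ is integral, which is precisely the hypothesis needed to apply Hoffman--Kruskal; this is where one relies on the integrality of the data $b$ and $A_{\text{int}}$ (note that $A_{\text{frac}}$ is automatically integral, as every entry of a totally unimodular matrix lies in $\{-1,0,1\}$, so it poses no difficulty). The remaining technical points---feasibility, boundedness, and attainment of the optimum at a vertex---are inherited from $x_{\text{frac}}^\star$ and are standard once the polyhedron is seen to be pointed.
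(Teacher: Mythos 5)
The paper does not actually prove this lemma: it imports it verbatim from~\cite{CMZ24}, so there is no in-paper argument to compare against. Your proof is the standard derivation and it is essentially correct: fix an optimal $x_{\text{int}}^\star$, observe that the residual LP over $x_{\text{frac}}$ is feasible and bounded with $x_{\text{frac}}^\star$ optimal, use that the nonnegativity constraints make the polyhedron pointed so the optimum is attained at a vertex, and invoke Hoffman--Kruskal to conclude that vertex is integral; re-attaching $x_{\text{int}}^\star$ preserves optimality because the residual optimum cannot exceed the MILP optimum minus $c_{\text{int}}^\intercal x_{\text{int}}^\star$. You have also put your finger on the one real issue, which is worth stating more forcefully: as literally written, the lemma quantifies over \emph{arbitrary} $A_{\text{int}} \in \mathbb{R}^{m \times n_1}$ and $b \in \mathbb{R}^m$, and under that reading the claim is false --- take $n_1 = 0$, $A_{\text{frac}} = (1)$, $b = (1/2)$, $c = (1)$; the unique optimum is $x = 1/2$. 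So the integrality of $b - A_{\text{int}}x_{\text{int}}^\star$ that you flag as ``the crux'' is not derivable from the stated hypotheses; it must be added as an assumption (integral $b$ and $A_{\text{int}}$, as in the source and as holds in this paper's application, where all constraint data are integers). Two smaller points you may want to make explicit: (i) ``feasible and bounded implies an optimal solution exists'' is not automatic for MILPs with irrational data, though it holds for rational data (Meyer's theorem), and treating nonexistence as vacuous is acceptable; (ii) Hoffman--Kruskal gives integrality of all vertices of $\{x \ge 0 : A_{\text{frac}}x \le b'\}$ for every \emph{integral} $b'$, which is exactly the form you use. With the integrality hypothesis made explicit, your argument is complete.
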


The main idea for the reduction is the following. Given an instance $(G,D,\Delta)$ of \deltaUpperBound, we create variables for the travel delays at the vertices of $G$. A labelling can easily be computed from the travel delays. We use the constraints to ensure that the durations of the fastest paths respect the upper bounds given in $D$.

Before describing the reduction, we make a straightforward observation. 

\begin{observation}\label{obs:trees}
    Let $G$ be a tree with $\ell$ leaves. Let $V_{>2}$ be the set of vertices in $G$ with degree larger than two. We have that $|V_{>2}|\le \ell$. Furthermore, we have for ever $v\in V_{>2}$ that the degree of $v$ is at most $\ell$.
\end{observation}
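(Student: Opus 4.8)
The plan is to prove the two bounds separately, since both are standard structural facts about trees that follow from elementary counting arguments and require no machinery beyond the handshake identity.

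For the first claim, $|V_{>2}| \le \ell$, I would start from the fact that a tree on $n$ vertices has exactly $n-1$ edges, so by the handshake identity $\sum_{v \in V} \deg(v) = 2(n-1)$, and hence $\sum_{v \in V}(\deg(v)-2) = -2$. I would then split the vertices into the $\ell$ leaves (each of degree $1$, contributing $-1$), the degree-$2$ vertices (contributing $0$), and the vertices of $V_{>2}$ (each contributing $\deg(v)-2 \ge 1$). The leaves contribute exactly $-\ell$ in total, so $\sum_{v \in V_{>2}}(\deg(v)-2) = \ell - 2$. Since every summand on the left is at least $1$, this immediately yields $|V_{>2}| \le \ell - 2 \le \ell$, which is in fact slightly stronger than what the statement asks.

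For the second claim, I would fix a vertex $v \in V_{>2}$ and consider the forest $G - v$. Because $G$ is a tree, deleting $v$ leaves exactly $\deg(v)$ connected components, one for each edge incident with $v$. The key step is to show that each such component $T_i$ contains at least one leaf of the original tree $G$. To establish this, I would pick within $T_i$ a vertex $u$ at maximum distance from $v$ in $G$: the unique $u$--$v$ path uses exactly one neighbor of $u$, and any additional neighbor of $u$ would lie at strictly larger distance from $v$, contradicting the maximal choice, so $u$ has degree $1$ in $G$ and is a leaf. As the $\deg(v)$ components are pairwise vertex-disjoint, they supply $\deg(v)$ distinct leaves of $G$, whence $\deg(v) \le \ell$.

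I do not expect any substantial obstacle; the only point that needs a little care is the leaf-existence argument in the second part, where one must guarantee that the extracted vertex $u$ is a leaf of $G$ itself and not merely a degree-$1$ vertex of the subtree $T_i$. The maximum-distance choice of $u$ is precisely what rules out this pitfall, since it forces $u$ to have no neighbor other than the one on its path toward $v$. Everything else reduces to routine degree counting.
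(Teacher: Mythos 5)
Your proof is correct. The paper states this as an unproved \emph{Observation} (the text just before it calls it ``a straightforward observation''), so there is no official argument to compare against; your write-up supplies exactly the kind of elementary justification that is being taken for granted. Both halves are sound: the handshake computation $\sum_{v}(\deg(v)-2)=-2$ does give $\sum_{v\in V_{>2}}(\deg(v)-2)=\ell-2$ and hence the slightly stronger bound $|V_{>2}|\le \ell-2$ (valid for every tree with at least two vertices; the only exception is the degenerate one-vertex tree, where $\ell=0$ and the claimed bound $|V_{>2}|\le\ell$ still holds trivially), and your maximum-distance argument correctly produces a genuine leaf of $G$ inside each of the $\deg(v)$ components of $G-v$, since any neighbor of the extremal vertex $u$ other than its path-neighbor would lie at strictly greater distance from $v$. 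The only stylistic remark is that the first half of your argument already yields the second: from $\sum_{v\in V_{>2}}(\deg(v)-2)=\ell-2$ one reads off $\deg(v)\le \ell$ for each individual $v\in V_{>2}$ directly (indeed $\deg(v)\le\ell-|V_{>2}|+1$ when $|V_{>2}|\ge 1$), so the component-deletion argument, while perfectly valid and perhaps more intuitive, is not strictly needed.
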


Furthermore, we show that there always exist solutions to yes-instances of \deltaUpperBound, where the two edges incident with \degtwo vertices have different labels.
\begin{lemma}\label{lem:labels}
    Let $(G,D,\Delta)$ be a yes-instance of \deltaUpperBound. Let $V_2$ be the set of vertices in~$G$ with degree two. Then there exists a solution $\lambda$ for $(G,D,\Delta)$ such that for all $v\in V_2$ we have the following.
    Let $u,w$ denote the two neighbors of $v$ in $G$, then $\lambda(\{u,v\})\neq \lambda(\{v,w\})$.
\end{lemma}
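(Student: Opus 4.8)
The plan is to start from an arbitrary solution $\lambda$ of $(G,D,\Delta)$ and repeatedly repair the degree-$2$ vertices whose two incident edges carry the same label (call these \emph{bad} vertices), each time strictly decreasing the number of bad vertices while keeping the labeling a solution; we may assume $\Delta\ge 2$, since $\Delta=1$ makes the problem trivial. The engine of the repair is a local relabeling that alters the travel delay at a single chosen bad vertex and at no other vertex. The crucial monotonicity fact, which I would record first, is that if a relabeling only \emph{decreases} some travel delays and leaves all others untouched, then by \Cref{lem:duration} every fastest-path duration can only decrease (in a tree the underlying $v_i$--$v_j$ path is unique, so its duration is exactly $1+\sum_x \tau_x$ along it); hence the new labeling is again a solution.

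Now the operation. Let $v$ be a bad degree-$2$ vertex with neighbors $u,w$ and $\lambda(\{u,v\})=\lambda(\{v,w\})$. Deleting the edge $\{v,w\}$ splits the tree into two components; let $B$ be the vertex set of the component containing $w$. I would define $\lambda'$ by shifting, by $+1$ modulo $\Delta$ (keeping labels in $\{1,\dots,\Delta\}$), the label of every edge incident to a vertex of $B$ — that is, all edges inside $B$ together with the cut edge $\{v,w\}$ — and leaving all other labels unchanged. The key claim, and the step I expect to need the most care, is that this changes the travel delay at $v$ only. Indeed, a value $\tau_x$ for an ordered pair of edges incident to a vertex $x$ (see \Cref{def:travelDelays}) changes only if exactly one of the two edges is shifted; but every vertex of $B$ has all of its incident edges shifted, every vertex of $A:=V\setminus B$ other than $v$ has all of its incident edges unshifted, and $v$ is the only vertex incident to both a shifted edge ($\{v,w\}$) and an unshifted edge ($\{u,v\}$). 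Hence only the pair $\{u,v\},\{v,w\}$ at $v$ is affected.

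Finally I would verify that the repair behaves as intended. Since originally $\lambda(\{u,v\})=\lambda(\{v,w\})$, \Cref{obs:traveldelays} gives $\tau_v^{u,w}=\tau_v^{w,u}=\Delta$; after the shift the two labels differ, so again by \Cref{obs:traveldelays} the two delays become a value in $\{1,\dots,\Delta-1\}$ and its complement to $\Delta$ — in particular both \emph{strictly decrease}. By the monotonicity fact, $\lambda'$ is still a solution, $v$ is now good, and since no other travel delay changed, the good/bad status of every other degree-$2$ vertex (which depends only on whether its two incident labels coincide) is preserved, so no new bad vertex is created. Thus the number of bad degree-$2$ vertices drops by exactly one, and iterating terminates with the desired solution. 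The only genuine obstacle is the ``single vertex changes'' claim: a naive fix that merely relabels the one edge $\{v,w\}$ would also perturb the delays at $w$ and could violate tight upper bounds there; shifting the \emph{entire} pendant component $B$, so that all label \emph{differences} within $B$ and at $w$ are preserved, is precisely what confines the change to $v$.
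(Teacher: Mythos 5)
Your proposal is correct and follows essentially the same approach as the paper: the paper likewise takes a bad \degtwo vertex $v$, cyclically shifts (by $+1$ modulo $\Delta$) all labels in one of the two subtrees hanging off $v$, observes that this changes only the travel delays $\tau_v^{u,w}$ and $\tau_v^{w,u}$ (strictly decreasing them from $\Delta$), and iterates. Your additional remarks on why the shift must cover the whole pendant component and why the iteration terminates are just slightly more explicit versions of what the paper leaves implicit.
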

\begin{proof}
    Assume there is a solution $\lambda$ for $(G,D,\Delta)$ such that for some \degtwo vertex $v$ in $G$ with neighbors $u,w$ we have that $\lambda(\{u,v\})= \lambda(\{v,w\})$. Then we modify $\lambda$ as follows. Let $G_u$ be the subtree of $G$ rooted at $v$ that contains $u$. For each edge $e$ in $G_u$, if $\lambda(e)=\Delta$ the we set $\lambda(e)=1$ and otherwise we increase $\lambda(e)$ by one. By \Cref{def:travelDelays} we have that this leaves all travel delays except $\tau_v^{u,w}$ and $\tau_v^{w,u}$ unchanged and it decreases $\tau_v^{u,w}$ and $\tau_v^{w,u}$ each by at least one. It follows no duration of a fastest temporal path increases. By iterating this procedure, we can obtain a solution $\lambda'$ with the desired property.
\end{proof}

From now on assume that we are given an instance $(G,D,\Delta)$ of \deltaUpperBound. Assume the vertices in $G$ are ordered in an arbitrary but fixed way. Each \textsc{MILP} instance we create will use the same set of variables.
For each vertex $v$ in $G$, we create the following variables:
\begin{itemize}
    \item If $v$ has degree two, we create a \emph{fractional} variable $x_v$. This variable will correspond to the travel delay $\tau_v^{u,w}$, where $u,w$ are the two neighbors of $v$ in $G$ and $u$ is ordered before~$w$.
    \item If $v$ has degree larger than two, then for every pair $u,w$ of neighbors of $v$ in $G$, we create an \emph{integer} variable $y_v^{u,w}$. This variable will correspond to the travel delay $\tau_v^{u,w}$.

    Furthermore, we create an \emph{integer} variable $z_e$ for every edge $e$ that is incident with $v$. This variable will correspond to the label $\lambda(e)$ of $e$.
\end{itemize}
With \Cref{obs:trees} we can upper-bound the number of created integer variables.
\begin{observation}\label{obs:vars}
    Each \textsc{MILP} instance has $O(\ell^3)$ integer variables.
\end{observation}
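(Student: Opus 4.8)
The plan is to count the two families of integer variables separately and bound each of the resulting sums using \Cref{obs:trees}. Recall that the only integer variables are (i)~the variables $y_v^{u,w}$, created for each vertex $v$ of degree larger than two and each pair $u,w$ of neighbors of $v$, and (ii)~the variables $z_e$, created for each vertex $v$ of degree larger than two and each edge $e$ incident with $v$. Hence it suffices to show that there are $O(\ell^3)$ variables of the first type and $O(\ell^2)$ variables of the second type, since the total is then $O(\ell^3)$.

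For the variables of type~(i), I would fix a vertex $v\in V_{>2}$ and observe that the number of pairs of neighbors of $v$ is at most $\deg(v)^2$. By \Cref{obs:trees} we have $\deg(v)\le \ell$, so $v$ contributes at most $\ell^2$ such variables. Summing over all $v\in V_{>2}$ and using the bound $|V_{>2}|\le \ell$ (also from \Cref{obs:trees}) gives at most $|V_{>2}|\cdot \ell^2 \le \ell^3$ variables of type~(i). For the variables of type~(ii), each $v\in V_{>2}$ contributes exactly $\deg(v)\le \ell$ variables, so summing over $V_{>2}$ yields at most $|V_{>2}|\cdot \ell \le \ell^2$ variables. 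Adding the two bounds gives $O(\ell^3)$ integer variables in total.

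Since the argument is a direct double-counting estimate, there is no genuine obstacle; the only points requiring a little care are to invoke \emph{both} parts of \Cref{obs:trees} simultaneously (the cardinality bound $|V_{>2}|\le \ell$ together with the per-vertex degree bound $\deg(v)\le \ell$), and to note that an edge $e$ whose two endpoints both lie in $V_{>2}$ may be counted twice in the type-(ii) sum. The latter only inflates the count, so the upper bound of $O(\ell^2)$ for type-(ii) variables—and hence the overall $O(\ell^3)$ bound—remains valid.
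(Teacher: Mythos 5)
Your proposal is correct and matches the argument the paper intends: the observation is stated as an immediate consequence of \Cref{obs:trees}, bounding the $y_v^{u,w}$ variables by $|V_{>2}|\cdot \ell^2 \le \ell^3$ and the $z_e$ variables by $|V_{>2}|\cdot \ell \le \ell^2$. Your added remarks about invoking both parts of \Cref{obs:trees} and about possible double-counting of the $z_e$ variables are sound and do not change the conclusion.
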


Now we consider all possibilities of how the labels of edges incident with vertices of degree larger than two can relate to each other. Formally, let $v$ be a vertex in $G$ that has degree larger than two and let $E_v$ be the set of edges incident with $v$. By \Cref{obs:trees} we have that $|E_v|\le\ell$. Any labeling $\lambda$ partitions $E_v$ into at most $\min(\ell,\Delta)$ sets of edges with equal labels, and the values of the labels define an ordering of those sets. We call a partitioning of~$E_v$ into at most $\min(\ell,\Delta)$ sets together with the ordering for those sets a \emph{label configuration} for~$v$. A set of label configurations for all vertices $v$ in $G$ that have degree larger than two is called a \emph{global label configuration}. By \Cref{obs:trees} we get the following.
\begin{observation}\label{obs:configs}
    There are $O(\ell^{\ell^2})$ global label configurations.
\end{observation}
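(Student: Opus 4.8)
The plan is to count, for a single high-degree vertex, how many label configurations it admits, and then take the product of these counts over all such vertices.

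First, I would invoke \Cref{obs:trees}: the set $V_{>2}$ of vertices of degree larger than two satisfies $|V_{>2}| \le \ell$, and every $v \in V_{>2}$ has degree at most $\ell$, so the edge set $E_v$ incident with $v$ has $|E_v| \le \ell$. This reduces the whole estimate to bounding the number of label configurations at a single vertex and then raising that bound to the power $|V_{>2}| \le \ell$.

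Second, I would bound the number of label configurations of a fixed vertex $v \in V_{>2}$. Recall that such a configuration is a partition of $E_v$ into at most $\min(\ell,\Delta) \le \ell$ nonempty blocks \emph{together with} an ordering of these blocks. I would encode each configuration by the function that sends every edge of $E_v$ to the index (in the ordering) of the block containing it; this is a function from a set of size at most $\ell$ into $\{1,2,\ldots,\ell\}$. Since distinct configurations induce distinct such functions, the number of label configurations of $v$ is at most $\ell^{|E_v|} \le \ell^{\ell}$.

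Finally, a global label configuration is an independent choice of a label configuration at each of the at most $\ell$ vertices of $V_{>2}$, so multiplying the per-vertex bound gives at most $(\ell^{\ell})^{\ell} = \ell^{\ell^2}$ global label configurations, which is $O(\ell^{\ell^2})$ as claimed. The only point that needs a little care is the per-vertex count: one must bound \emph{ordered} rather than unordered partitions, and use that at most $\min(\ell,\Delta) \le \ell$ blocks appear so that each block index ranges over a set of size at most $\ell$; beyond this the statement is a direct consequence of \Cref{obs:trees} together with an elementary product bound, and I expect no real obstacle.
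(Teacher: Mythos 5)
Your proposal is correct and matches the paper's (implicit) argument: the paper derives this observation directly from \Cref{obs:trees} without spelling out the count, and your per-vertex bound of $\ell^{\ell}$ ordered partitions raised to the power $|V_{>2}|\le \ell$ is exactly the intended calculation. The injective encoding of an ordered partition as a map $E_v \to \{1,\dots,\ell\}$ is a clean way to justify the per-vertex bound.
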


For each global label configuration, we create an \textsc{MILP} instance. From now on fix a global label configuration $\sigma$. We describe how to construct an \textsc{MILP} instance $I_\sigma$. Consider a vertex $v$ in $G$ that has degree larger than two. Let $u,w$ be a pair of neighbors of $v$ in $G$. Let $e=\{u,v\}$ and $e'=\{v,w\}$. 
If according to the global label configuration, we have that $e$ and $e'$ are in different parts of $E_v$ and the part of $e$ is ordered before the part of $e'$ (that is,~$e$ has a smaller label than~$e'$), then we add the following constraint.
\begin{equation}\label{constr:1}
    y_v^{u,w} = z_{e'}-z_e
\end{equation}
 If the part of $e$ is ordered after the part of $e'$ (that is, $e$ has a larger label than~$e'$), then we add the following constraint.
\begin{equation}\label{constr:2}
    y_v^{u,w} = z_{e'}-z_e+\Delta
\end{equation}
In both above cases, we additionally add the following constraint.
\begin{equation}\label{constr:3}
    1\le y_v^{u,w}\le \Delta-1
\end{equation}

If according to the global label configuration, we have that $e$ and $e'$ are in the same part of~$E_v$ (that is, they have the same label), then we add the following constraints.
\vspace{-0.5cm}

\begin{center}
$%
\begin{array}{cccc}
 \ \ z_e = z_{e'}\text{ \ (\ref{constr:4})}\refstepcounter{equation}\label{constr:4}& 
\hspace{0.9cm} y_v^{u,w} = \Delta\text{ \ (\ref{constr:5})}\refstepcounter{equation}\label{constr:5} & 
\hspace{0.9cm} 1\le z_e \le \Delta\text{ \ (\ref{constr:6})}\refstepcounter{equation}\label{constr:6} & 
\hspace{0.9cm} 1\le z_{e'} \le \Delta\text{ \ (\ref{constr:7})}\refstepcounter{equation}\label{constr:7}%
\end{array}%
$
\end{center}

Now we consider all vertex pairs $s,t$ in $G$ that have distance at least two in~$G$, that is, the (unique) path from $s$ to $t$ in $G$ has at least one internal vertex. 
Let $P_{s,t}$ denote the path from $s$ to $t$ in $G$.
Then $V^{(s,t)}_{\text{int}}=V(P_{s,t})\setminus\{s,t\}$ is the set of internal vertices of $P_{s,t}$. 
We partition $V^{(s,t)}_{\text{int}}$ into two sets. Let $V^{(s,t)}_{\text{int}, \text{deg}2}\subseteq V^{(s,t)}_{\text{int}}$ be the internal vertices of $P_{s,t}$ with degree two and let $V^{(s,t)}_{\text{int}, \text{deg}>2}\subseteq V^{(s,t)}_{\text{int}}$ be the internal vertices with degree larger than two. Now we further partition the two sets of vertices. 

We define $V^{(s,t)}_{\text{int}, \text{deg}2, \text{f}}\subseteq V^{(s,t)}_{\text{int}, \text{deg}2}$ such that for all $v\in V^{(s,t)}_{\text{int}, \text{deg}2, \text{f}}$, the neighbor of $v$ that is closer to $s$ is ordered before the neighbor of $v$ that is closer to $t$.
Analogously, we define $V^{(s,t)}_{\text{int}, \text{deg}2, \text{b}}\subseteq V^{(s,t)}_{\text{int}, \text{deg}2}$ such that for all $v\in V^{(s,t)}_{\text{int}, \text{deg}2, \text{b}}$, the neighbor of $v$ that is closer to $t$ is ordered before the neighbor of $v$ that is closer to $s$.

For the vertices in $V^{(s,t)}_{\text{int}, \text{deg}>2}$ we additionally define the following. For $v\in V^{(s,t)}_{\text{int}, \text{deg}>2}$ let~$v'$ denote the neighbor of $v$ that is on the path from $v$ to $s$ and let $v''$ denote the neighbor of $v$ that is on the path from $v$ to $t$.
We define $u^{(s,t)}(v)=v'$ and $w^{(s,t)}(v)=v''$.

We add the following constraint for the vertex pair $s,t$.
\begin{equation}\label{constr:8}
    \sum_{v\in V^{(s,t)}_{\text{int}, \text{deg}2, \text{f}}} x_v \ + \sum_{v\in V^{(s,t)}_{\text{int}, \text{deg}2, \text{b}}} (\Delta - x_v) \ +  \sum_{v\in V^{(s,t)}_{\text{int}, \text{deg}>2}} y^{u^{(s,t)}(v),w^{(s,t)}(v)}_v \ \le \ D_{s,t}-1
\end{equation}

Finally, for all fractional variables $x_v$ we add the following constraint.
\begin{equation}\label{constr:9}
    1\le x_v\le \Delta-1
\end{equation}

This finishes the description of the constraints. Since we only want to check whether a feasible solution exists, we arbitrarily choose the vector $c$ for the objective function to be the all-one vector. This finishes the description of the \textsc{MILP} instance $I_\sigma$. Clearly, for a given global label configuration, the \textsc{MILP} instance can be constructed in polynomial time. 
\begin{observation}\label{obs:instance}
    Given a global label configuration $\sigma$, the \textsc{MILP} instance $I_\sigma$ can be computed in polynomial time.
\end{observation}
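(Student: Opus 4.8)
The plan is to bound the total number of variables and constraints of $I_\sigma$ by a polynomial in $n$, and to argue that, given the fixed vertex ordering of $G$ and the configuration $\sigma$, each variable and each constraint can be written down in polynomial time. This is essentially a bookkeeping argument, so the work is in verifying that every ingredient of the construction is locally computable.

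First I would bound the variable count. By \Cref{obs:vars} there are $O(\ell^3)$ integer variables, and the fractional variables $x_v$ are in bijection with the degree-two vertices, of which there are at most $n$. Hence $I_\sigma$ has $n^{O(1)}$ variables in total, each of which can be created in constant time once the vertex degrees and the set $E_v$ of edges incident with each high-degree vertex $v$ have been computed (a single traversal of $G$).

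Second I would handle the \emph{local} constraints \eqref{constr:1}--\eqref{constr:7}. For each of the $O(\ell^3)$ pairs $u,w$ of neighbors of a degree-$>2$ vertex $v$, we add only a constant number of constraints, and which of them we add depends solely on whether $\sigma$ places the edges $\{u,v\}$ and $\{v,w\}$ in the same part of $E_v$ and, if not, on the relative order of their parts. Since $\sigma$ is given explicitly as a partition-with-ordering of each $E_v$, this membership-and-order test is a direct lookup, so all these constraints are produced in polynomial time. The range constraints \eqref{constr:9} are added once per fractional variable, at most $n$ in total.

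The only part requiring slightly more care is the \emph{global} constraint \eqref{constr:8}, created once for each ordered pair $s,t$ at distance at least two in $G$; there are $O(n^2)$ such pairs. For a fixed pair, I would compute the unique $s$--$t$ path $P_{s,t}$ by a standard tree traversal in $O(n)$ time and classify each internal vertex by its degree. For each internal degree-two vertex $v$ with path-neighbor $v'$ towards $s$ and $v''$ towards $t$, a single comparison in the fixed vertex ordering decides whether $v$ contributes the term $x_v$ (if $v'$ precedes $v''$, i.e.\ $v \in V^{(s,t)}_{\text{int},\text{deg}2,\text{f}}$) or the term $\Delta - x_v$; for each internal vertex $v$ of degree $>2$ I read off $u^{(s,t)}(v)=v'$ and $w^{(s,t)}(v)=v''$ and select the already-existing variable $y_v^{v',v''}$, while the right-hand side $D_{s,t}-1$ is copied directly from the input. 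Each such constraint has $O(n)$ terms and is built in $O(n)$ time, so summing over all $O(n^2)$ pairs gives a polynomial bound. Altogether $I_\sigma$ has $n^{O(1)}$ variables and constraints, each produced in polynomial time, which establishes the claim. I expect no genuine obstacle here; the only mildly delicate step is the orientation bookkeeping in \eqref{constr:8}, namely correctly classifying each internal vertex relative to $s$ and $t$ and matching it to the right variable, and this is resolved by a single path traversal together with the fixed vertex ordering.
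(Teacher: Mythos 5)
Your proposal is correct and matches the paper, which states this observation without proof (treating it as immediate from the construction); your variable and constraint counts, the lookup into $\sigma$ for Constraints~(\ref{constr:1})--(\ref{constr:7}), and the per-pair $O(n)$ path traversal for Constraint~(\ref{constr:8}) are exactly the bookkeeping the paper leaves implicit.
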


In the following, we prove the correctness of the algorithm. 
We first show that if the given instance of \deltaUpperBound\ is a yes-instance, then at least one of the created \textsc{MILP} instances has a feasible solution.
\begin{lemma}
\label{lem:corr1}
    Given a yes-instance $(G,D,\Delta)$ of \deltaUpperBound, there exists a global label configuration~$\sigma$ such that the \textsc{MILP} instance $I_\sigma$ admits a feasible solution.
\end{lemma}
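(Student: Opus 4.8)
The plan is to take a solution labeling $\lambda$ of the yes-instance, read off from it both a global label configuration $\sigma$ and a feasible assignment to the variables of $I_\sigma$, and then verify that every constraint is satisfied. By \Cref{lem:labels} I may assume that $\lambda$ is a solution in which every \degtwo vertex has its two incident edges labeled differently; this is what makes the fractional variables $x_v$ well-behaved, since for such $v$ the travel delay $\tau_v^{u,w}$ lies strictly between $0$ and $\Delta$.

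First I would define $\sigma$: for each vertex $v$ of degree larger than two, the labeling $\lambda$ partitions the incident edge set $E_v$ according to equal labels, and the label values order these parts; this partition-with-order is exactly the label configuration of $v$, and taking all of them together gives the global configuration $\sigma$. Then I set the variables from $\lambda$ directly: for a \degtwo vertex $v$ with neighbors $u,w$ ($u$ before $w$) put $x_v := \tau_v^{u,w}$; for a vertex $v$ of degree larger than two put $y_v^{u,w} := \tau_v^{u,w}$ for each neighbor pair and $z_e := \lambda(e)$ for each incident edge $e$.

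Next I would check the constraints one family at a time, in the order they were introduced. Constraints \eqref{constr:1}--\eqref{constr:7} are purely local at a single high-degree vertex $v$, and each is an immediate restatement of \Cref{def:travelDelays}: when $\lambda(e)<\lambda(e')$ the travel delay is $\lambda(e')-\lambda(e)$, matching \eqref{constr:1}; when $\lambda(e)>\lambda(e')$ it is $\lambda(e')-\lambda(e)+\Delta$, matching \eqref{constr:2}; and when $\lambda(e)=\lambda(e')$ \Cref{obs:traveldelays} gives $\tau_v^{u,w}=\Delta$, matching \eqref{constr:4}--\eqref{constr:5}. The bound \eqref{constr:3} holds because distinct labels in $\{1,\dots,\Delta\}$ give a delay in $\{1,\dots,\Delta-1\}$, and \eqref{constr:6}--\eqref{constr:7}, \eqref{constr:9} hold because all labels lie in $\{1,\dots,\Delta\}$ and (using the assumption from \Cref{lem:labels}) the \degtwo delays lie in $\{1,\dots,\Delta-1\}$. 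The consistency of $\sigma$ with these assignments is automatic, since $\sigma$ was \emph{read off} from the very same $\lambda$.

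The main work — and the step I expect to be the crux — is constraint \eqref{constr:8}, the duration bound for a vertex pair $s,t$. Here I would invoke \Cref{lem:duration}, which expresses the duration of the fastest $(s,t)$-path as $1+\sum_{i}\tau_{v_i}^{v_{i-1},v_{i+1}}$ summed over the internal vertices $v_i$ of $P_{s,t}$. I must show that the left-hand side of \eqref{constr:8} equals this sum, so that the constraint reads $d(P_{s,t})-1 \le D_{s,t}-1$, i.e.\ $d(P_{s,t})\le D_{s,t}$, which holds because $\lambda$ realizes $D$. For an internal \degtwo vertex $v$ whose $s$-side neighbor is ordered first, $x_v=\tau_v^{u,w}$ contributes the correct delay directly; for one whose $t$-side neighbor is ordered first, the stored value $x_v=\tau_v^{w,u}$ is the \emph{reverse} delay, so the summand $\Delta-x_v$ recovers $\tau_v^{u,w}$ by \Cref{obs:traveldelays} (this is precisely why the partition into the ``f'' and ``b'' sets is needed). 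For an internal high-degree vertex $v$, the term $y_v^{u^{(s,t)}(v),w^{(s,t)}(v)}$ is by definition $\tau_v^{v',v''}$ in the direction from $s$ to $t$. Summing these three groups reproduces $\sum_i \tau_{v_i}^{v_{i-1},v_{i+1}} = d(P_{s,t})-1$ exactly, completing the verification. The only subtlety to watch is orientation bookkeeping — making sure the ``f/b'' case split and the $u^{(s,t)},w^{(s,t)}$ definitions consistently give the delay oriented from $s$ toward $t$ — but once that is tracked carefully the identity is a direct application of \Cref{lem:duration} and \Cref{obs:traveldelays}.
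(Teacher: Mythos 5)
Your proposal is correct and follows essentially the same route as the paper's proof: extract $\sigma$ and the variable assignment from a solution $\lambda$ normalized via \Cref{lem:labels}, verify the local constraints from \Cref{def:travelDelays} and \Cref{obs:traveldelays}, and reduce Constraint~(\ref{constr:8}) to $d(P_{s,t})\le D_{s,t}$ via \Cref{lem:duration} with the same three-case orientation bookkeeping for the ``f''/``b'' degree-2 vertices. No gaps.
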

\begin{proof}
    Let $(G,D,\Delta)$ be a yes-instance of \deltaUpperBound. Let $V_2$ be the set of vertices in $G$ with degree two. Then by \Cref{lem:labels} there exists a solution $\lambda$ for $(G,D,\Delta)$ such that for all $v\in V_2$ we have the following. Let $u,w$ denote the two neighbors of $v$ in $G$. We have that $\lambda(\{u,v\})\neq \lambda(\{v,w\})$. 

    Furthermore, let $\sigma$ be the global label configuration defined by the labeling function $\lambda$. We construct a feasible solution for the \textsc{MILP} instance $I_\sigma$ as follows.

    First, consider all edges $e$ that are incident with vertices of degree larger than two. We set~$z_e=\lambda(e)$. Note that, since Constraints~(\ref{constr:1}),~(\ref{constr:2}), and~(\ref{constr:5}) are equality constraints, there is a uniquely defined value for each variable $y_v^{u,w}$ that we need to assign to that variable in order to fulfill those constraints. Furthermore, since $\lambda$ defines the global label configuration~$\sigma$, we also have that the remaining Constraints~(\ref{constr:3}),~(\ref{constr:4}),~(\ref{constr:6}), and~(\ref{constr:7}) are fulfilled by the assignment of the variables. Lastly, it is obvious that all integer variables are assigned integer values.

    Now we assign values to the fractional variables. Let $V_2$ denote the set of all \degtwo vertices in $G$. We have a variable $x_v$ for each $v\in V_2$. Consider some $v\in V_2$ and let $u,w$ denote the two neighbors of $v$, where $u$ is ordered before~$w$. Then we set 
    \begin{equation*}
        x_v = \begin{cases}
        \lambda (\{v,w\}) - \lambda(\{u,v\}), & \text{if } \lambda (\{v,w\}) > \lambda(\{u,v\}),\\
        \lambda (\{v,w\}) - \lambda(\{u,v\}) + \Delta, & \text{otherwise}.
        \end{cases}
    \end{equation*}
    Recall that due to \Cref{lem:labels}, we have that $\lambda(\{u,v\})\neq \lambda(\{v,w\})$. It follows that Constraint~(\ref{constr:9}) is fulfilled. It remains to show that Constraint~(\ref{constr:8}) is fulfilled.

    To this end, consider a vertex pair $s,t$ and let $P=\left(\left(v_{i-1},v_i,t_i\right)\right)_{i=1}^k$ be a fastest temporal $(s,t)$-path in the $\Delta$-periodic temporal graph $(G,\lambda)$. Since $G$ is a tree, we have that $V(P)$ is uniquely determined. In fact, also the order in which the vertices in $V(P)$ are visited by $P$ is uniquely determined.
    By \Cref{lem:duration} we have that
        \begin{equation*}
        d(P)=1+\sum_{i\in[k-1]}\tau_{v_i}^{v_{i-1},v_{i+1}}.
    \end{equation*}
    Furthermore, we have the following.
    \begin{itemize}
        \item If $v_i$ has degree two in $G$ and $v_{i-1}$ is ordered before $v_{i+1}$, then we have $x_{v_i}=\tau_{v_i}^{v_{i-1},v_{i+1}}$.
        \item If $v_i$ has degree two in $G$ and $v_{i+1}$ is ordered before $v_{i-1}$, then by \Cref{obs:traveldelays} we have $x_{v_i}=\Delta-\tau_{v_i}^{v_{i-1},v_{i+1}}$.
        \item If $v_i$ has degree larger than two in $G$, then we have $y^{v_{i-1},v_{i+1}}_{v_i}=\tau_{v_i}^{v_{i-1},v_{i+1}}$.
    \end{itemize}
    Lastly, since $\lambda$ is a solution for $(G,D,\Delta)$, we have that $d(P)\le D_{s,t}$. From this, it follows that Constraint~(\ref{constr:8}) is fulfilled. Hence, we have constructed a feasible solution to the \textsc{MILP} instance $I_\sigma$.
\end{proof}

Next, we show that if one of the produced \textsc{MILP} instances admits a feasible solution, then we are facing a yes-instance of \deltaUpperBound. To this end, we first show that the constraint matrix for the fractional variables of the produced \textsc{MILP} instances is totally unimodular. Then by \Cref{lem:MILP} we have that if one of the \textsc{MILP} instances admits a feasible solution, then there also is a feasible solution for that instance where all fractional variables are set to integer values. 
To do this, we use a sufficient condition for matrices to be totally unimodular. Precisely, we use that every so-called \emph{network matrix} is totally unimodular.
\begin{definition}[network matrix]\label{def:networkmatrix}
     Let $T=(V,A)$ be a directed tree, that is, a tree where each arc has an arbitrary orientation and let $A'$ be a set of directed arcs on $V$, the same vertex set as $T$. Let $M$ be a matrix with $|A|$ rows and $|A'|$ columns. Then $M$ is a \emph{network matrix} if the following holds.
     For all $i,j$ let $e_i=(a,b)\in A$ and $e'_j=(s,t)\in A'$.
     \begin{itemize}
         \item $M_{i,j}=1$ if arc $(a,b)$ appears forwards in the path in $T$ from $s$ to $t$.
         \item $M_{i,j}=-1$ if arc $(a,b)$ appears backwards in the path in $T$ from $s$ to $t$.
         \item $M_{i,j}=0$ if arc $(a,b)$ does not appear in the path in $T$ from $s$ to $t$.
     \end{itemize}
\end{definition}
\begin{lemma}[\hspace{-0.0001cm}\cite{schrijver2003combinatorial}]\label{lem:TU}
    Network matrices are totally unimodular.
\end{lemma}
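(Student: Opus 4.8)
The plan is to prove this classical fact (see~\cite{schrijver2003combinatorial}) by reducing it to the total unimodularity of node--arc incidence matrices of directed graphs. Recall that a matrix is totally unimodular if every square submatrix has determinant in $\{-1,0,1\}$.

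First I would establish the auxiliary fact that the node--arc incidence matrix $N$ of any directed graph is totally unimodular, where $N$ has one row per vertex, one column per arc, and the column of an arc $(p,q)$ has a $-1$ in row $p$, a $+1$ in row $q$, and zeros elsewhere. This follows by induction on the order of a square submatrix $B$: if $B$ has an all-zero column then $\det B = 0$; if $B$ has a column with a single nonzero entry (necessarily $\pm 1$), I expand the determinant along it and apply the induction hypothesis; and if every column of $B$ has exactly one $+1$ and one $-1$, then the rows of $B$ sum to the zero vector, so $\det B = 0$. In all cases $\det B \in \{-1,0,1\}$.

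Next I would connect the network matrix $M$ to such an incidence matrix. Let $T=(V,A)$ and $A'$ be as in \Cref{def:networkmatrix}, and form the directed graph $\widehat G = (V, A \cup A')$ with incidence matrix $N = [\,N_A \mid N_{A'}\,]$, the columns split according to the tree arcs $A$ and the remaining arcs $A'$; by the previous paragraph $N$ is totally unimodular. Fix a root $r$ and delete the row of $N$ indexed by $r$, obtaining $\widetilde N = [\,\widetilde N_A \mid \widetilde N_{A'}\,]$. Since $T$ spans $V$, the matrix $\widetilde N_A$ is square of order $|V|-1$ and nonsingular with $\det \widetilde N_A \in \{-1,1\}$ (ordering the vertices so that each non-root vertex precedes its parent makes $\widetilde N_A$ triangular with $\pm 1$ diagonal). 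The crucial identity is $\widetilde N_{A'} = \widetilde N_A\, M$, equivalently $M = \widetilde N_A^{-1}\widetilde N_{A'}$: the column of $N$ indexed by an arc $(s,t)\in A'$ equals $\chi_t - \chi_s$, and walking along the unique tree path $s=v_0,\ldots,v_\ell=t$ writes this vector as $\sum_i c_i\,(\chi_{v_i}-\chi_{v_{i-1}})$, where $c_i=+1$ if the $i$th tree arc is traversed forwards and $c_i=-1$ if backwards; this signed sum telescopes to $\chi_t-\chi_s$, and the coefficients $c_i$ are exactly the entries prescribed for $M$ in \Cref{def:networkmatrix}. Verifying this path/telescoping decomposition is the step I would check most carefully.

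Finally I would deduce that $M$ is totally unimodular from $M = \widetilde N_A^{-1}\widetilde N_{A'}$ together with $\det \widetilde N_A = \pm 1$. Any square minor of $M$ indexed by a row set $R\subseteq A$ and a column set $S\subseteq A'$ can be expressed, via the standard Jacobi/Cramer identity for minors of $B^{-1}C$, as $\pm\det \widetilde N[\,\cdot\,;\,(A\setminus R)\cup S\,]/\det \widetilde N_A$, i.e.\ as a single $(|V|-1)\times(|V|-1)$ minor of the totally unimodular matrix $\widetilde N$ divided by $\pm 1$. Since that minor lies in $\{-1,0,1\}$, so does the minor of $M$; as this holds for every square submatrix, $M$ is totally unimodular. (Alternatively, one transforms $[\,\widetilde N_A \mid \widetilde N_{A'}\,]$ into $[\,I \mid M\,]$ by a sequence of $\pm 1$-pivots and invokes the fact that a pivot preserves total unimodularity.) The main obstacle is verifying the minor-ratio identity and the telescoping decomposition of the incidence column along the tree path; the incidence-matrix induction itself is routine.
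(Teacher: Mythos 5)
Your proof is correct. Note that the paper itself offers no proof of this lemma---it is quoted as a known result with a citation to Schrijver---so there is nothing internal to compare against; your argument is precisely the standard one from the cited source. All three steps check out: the induction establishing total unimodularity of node--arc incidence matrices is the classical one; the telescoping identity $\widetilde N_{A'} = \widetilde N_A M$ correctly matches the $\pm 1$ forward/backward coefficients of \Cref{def:networkmatrix} to the signed tree-arc columns along the unique $s$--$t$ path; and $\det \widetilde N_A = \pm 1$ (via the child-before-parent triangularization) together with the minor identity for $B^{-1}C$ (equivalently, the pivot-preservation argument) yields that every minor of $M$ lies in $\{-1,0,1\}$.
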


We show that the produced \textsc{MILP} instances have the following property.
\begin{lemma}\label{lem:fracTU}
    Let $\sigma$ be a global label configuration and let $I_\sigma$ be the corresponding \textsc{MILP} instance.
    Let $M$ be the constraint matrix for the fractional variables, that is, the constraint matrix obtained by treating the integer variables $y_v^{u,w}$ and~$z_e$ in all constraints of $I_\sigma$ as arbitrary constants, deleting all constraints that do not involve any fractional variables, and deleting all duplicate rows.
    We have that $M$ is a network matrix.
\end{lemma}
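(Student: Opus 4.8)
The plan is to show that the fractional constraint matrix $M$ of $I_\sigma$ is a network matrix by exhibiting a directed tree $T$ and a set $A'$ of arcs whose incidence structure (\Cref{def:networkmatrix}) reproduces exactly the $\pm 1$/$0$ pattern of $M$'s columns. The fractional variables are precisely the $x_v$ for degree-two vertices $v$, so $M$ has one column per such variable. After treating all integer variables as constants and discarding constraints without fractional variables, the surviving rows come only from Constraint~(\ref{constr:8}) (the path-duration constraints), since Constraints~(\ref{constr:9}) give the trivial box rows and all of (\ref{constr:1})--(\ref{constr:7}) involve only integer variables. In Constraint~(\ref{constr:8}), each degree-two vertex $v$ on the path $P_{s,t}$ contributes $+x_v$ if $v\in V^{(s,t)}_{\text{int},\text{deg}2,\text{f}}$ and $-x_v$ if $v\in V^{(s,t)}_{\text{int},\text{deg}2,\text{b}}$ (the $\Delta$ being absorbed into the constant side). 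Thus a given row for the pair $(s,t)$ has entry $+1$, $-1$, or $0$ in the column of $x_v$ according to whether $v$ lies on $P_{s,t}$ and, if so, whether $s$'s side of $v$ or $t$'s side carries the smaller-ordered neighbor.

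First I would build the directed tree $T$. Its vertex set will be (a suitable orientation-aware version of) the degree-two vertices of $G$, arranged so that each fractional variable $x_v$ corresponds to one \emph{arc} of $T$, and each retained row of $M$ corresponds to one arc in $A'$. Concretely, the natural choice is to take the maximal paths of degree-two vertices in $G$: the degree-two vertices of $G$ form disjoint chains, and orienting each chain consistently with the fixed vertex ordering makes each $x_v$ an arc of $T$. The tree $T$ records these chains together with the branch vertices (degree $\ge 3$), giving a directed tree whose arcs are exactly the fractional variables. Then, for each row of (\ref{constr:8}) associated to a pair $(s,t)$, I would let the corresponding arc in $A'$ run from the (endpoint identifying the) $s$-side to the $t$-side, so that traversing the $T$-path from its tail to its head passes through exactly the degree-two vertices of $P_{s,t}$, forwards when the chain orientation agrees with the $s\to t$ direction (giving $+1$, matching $\text{f}$) and backwards when it disagrees (giving $-1$, matching $\text{b}$).

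The crux is verifying that the $\text{f}$/$\text{b}$ classification in Constraint~(\ref{constr:8}) coincides exactly with the forward/backward traversal in \Cref{def:networkmatrix}. A degree-two vertex $v$ with neighbors ordered $u$ before $w$ has its fractional arc oriented (say) from the $u$-end to the $w$-end in $T$. When $P_{s,t}$ enters $v$ from the $s$-side and leaves toward $t$: if the $s$-side neighbor is $u$ (the earlier one), then $v\in V^{(s,t)}_{\text{int},\text{deg}2,\text{f}}$ and the $s\to t$ traversal goes $u\to w$, i.e.\ forward along the arc, giving $+1$; if the $s$-side neighbor is $w$, then $v\in V^{(s,t)}_{\text{int},\text{deg}2,\text{b}}$ and the traversal goes $w\to u$, i.e.\ backward, giving $-1$. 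A vertex $v$ not on $P_{s,t}$ lies off the $T$-path and contributes $0$. I expect \textbf{this alignment check to be the main obstacle}: it requires fixing the arc orientations and the tail/head assignment of the $A'$-arcs consistently so that the two notions of direction match on every chain, and being careful that branch (degree $\ge 3$) vertices on $P_{s,t}$ do not introduce spurious fractional entries (they do not, since such vertices contribute only integer $y$-variables).

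Finally I would address the ``deleting duplicate rows'' clause: distinct vertex pairs $(s,t)$ can induce identical fractional rows (when their paths share the same set of degree-two vertices with the same orientations), but \Cref{def:networkmatrix} permits the arc set $A'$ to be any set of arcs on $V(T)$, so after removing duplicates we simply keep one arc in $A'$ per distinct row; this does not affect the network-matrix property. With $T$, $A'$, and the orientation dictionary in hand, the entries of $M$ match \Cref{def:networkmatrix} entry by entry, so $M$ is a network matrix, and \Cref{lem:TU} then yields total unimodularity as needed for the subsequent application of \Cref{lem:MILP}.
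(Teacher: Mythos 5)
Your overall strategy coincides with the paper's: build a directed tree $T$ whose arcs are in bijection with the degree-2 vertices (equivalently, with the fractional variables $x_v$), orient the arc of $v$ according to which of $v$'s two neighbours is ordered first, realize each retained row of $M$ as a directed path in $T$, and check that membership in $V^{(s,t)}_{\text{int},\text{deg}2,\text{f}}$ versus $V^{(s,t)}_{\text{int},\text{deg}2,\text{b}}$ matches forwards versus backwards traversal --- that sign check is exactly the paper's. The genuine gap is in the construction of $T$ itself. ``Maximal chains of degree-two vertices together with the branch vertices'' does not define a tree whose arcs are exactly the fractional variables: a chain of $k$ degree-2 vertices must contribute $k$ arcs and hence $k-1$ intermediate $T$-vertices (which are neither chains nor branch vertices --- in the paper they are the edges of $G$ joining consecutive degree-2 vertices, i.e.\ the unmerged vertices of the line graph $L(G)$); and, more seriously, whenever two vertices of degree at least three are adjacent in $G$, or a leaf hangs directly off a branch vertex, the connecting edge carries no fractional variable, so your $T$ is either disconnected or must contain arcs that are not columns of $M$. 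The paper resolves this by contracting every maximal clique of $L(G)$ on more than two vertices, which collapses each maximal subtree of degree-$\geq 3$ vertices (together with its pendant leaf edges) into a single $T$-vertex.

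This omission is not cosmetic, because it is exactly what makes the alignment check you flag as the crux go through: when $P_{s,t}$ crosses a block of adjacent branch vertices between two consecutive degree-2 vertices $v_{k-1}$ and $v_k$, the arcs $e_{v_{k-1}}$ and $e_{v_k}$ must share a $T$-endpoint (the contracted block) for the row's support to form a single directed path; the paper proves this by induction along $P_{s,t}$ with precisely this case distinction, whereas you assert the path property rather than establish it. Two smaller points: the rows arising from Constraint~(\ref{constr:9}) are \emph{not} removed for lacking fractional variables --- each has a single $\pm 1$ entry in the column of some $x_v$ and is discarded only because it duplicates the Constraint~(\ref{constr:8}) row for the pair of neighbours of $v$ (alternatively, such a row is itself realizable by a one-arc path, so either justification suffices, but one must be given); and ``orienting each chain consistently with the vertex ordering'' should be read per arc, since consecutive degree-2 vertices on the same chain may receive opposite orientations under an arbitrary ordering --- your later per-vertex description (from the $u$-end to the $w$-end, $u$ ordered before $w$) is the correct one.
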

\begin{proof}
    First, notice that Constraints~(\ref{constr:1}),~(\ref{constr:2}),~(\ref{constr:3}),~(\ref{constr:4}),~(\ref{constr:5}),~(\ref{constr:6}), and~(\ref{constr:7}) do not involve fractional variables. Hence, from now on we only consider Constraints~(\ref{constr:8}) and~(\ref{constr:9}), where in Constraint~(\ref{constr:8}) we treat the integer variables as constants. Let $M$ be the resulting constraint matrix where also all duplicate rows are deleted. We show that $M$ is a network matrix.

\begin{figure}[t]
\noindent\makebox[\textwidth]{
\centering
\ifarxiv
\includegraphics[scale=1]{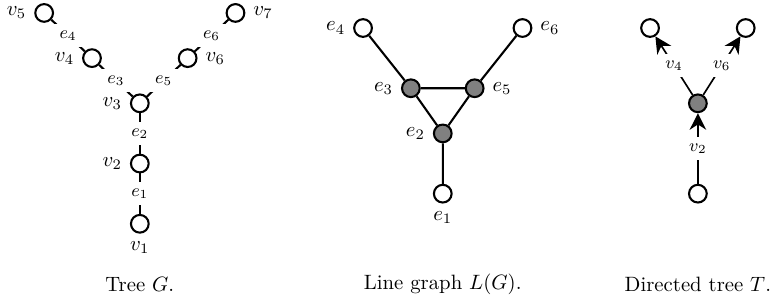}
\else
\scalebox{.9}{
\begin{tikzpicture}[line width=1pt,scale=.6,xscale=.9,yscale=.85] 
    
    \node[vert,label=below:$v_1$] (V1) at (0,0) {}; 
    \node[vert,label=left:$v_2$] (V2) at (0,2) {}; 
    \node[vert,label=left:$v_3$] (V3) at (0,4) {}; 
    \node[vert,label=left:$v_4$] (V4) at (-1.5,5.5) {}; 
    \node[vert,label=left:$v_5$] (V5) at (-3,7) {}; 
    \node[vert,label=right:$v_6$] (V6) at (1.5,5.5) {}; 
    \node[vert,label=right:$v_7$] (V7) at (3,7) {}; 

    \node (G) at (0,-2) {Tree $G$.};
    
    \draw (V1) --node[timelabel] {$e_1$} (V2);
    \draw (V2) --node[timelabel] {$e_2$} (V3);
    \draw (V3) --node[timelabel] {$e_3$} (V4);
    \draw (V4) --node[timelabel] {$e_4$} (V5);
    \draw (V3) --node[timelabel] {$e_5$} (V6);
    \draw (V6) --node[timelabel] {$e_6$} (V7);

    \node[vert,label=below:$e_1$] (U1) at (9.5,1) {};
    \node[vert,label=left:$e_2$,fill=gray] (U2) at (9.5,3) {};
    \node[vert,label=left:$e_3$,fill=gray] (U3) at (8.5,4.5) {};
    \node[vert,label=left:$e_4$] (U4) at (7,6.5) {};
    \node[vert,label=right:$e_5$,fill=gray] (U5) at (10.5,4.5) {};
    \node[vert,label=right:$e_6$] (U6) at (12,6.5) {};

    \node (LG) at (9.5,-2) {Line graph $L(G)$.};

    \draw (U1) -- (U2);
    \draw (U2) -- (U3);
    \draw (U3) -- (U4);
    \draw (U2) -- (U5);
    \draw (U5) -- (U6);
    \draw (U3) -- (U5);

    \node (T) at (17.5,-2) {Directed tree $T$.};

    \node[vert] (W1) at (17.5,1) {};
    \node[vert,fill=gray] (W2) at (17.5,4) {};
    \node[vert] (W3) at (16,6.5) {};
    \node[vert] (W4) at (19,6.5) {};

    \draw[diredge] (W1) --node[timelabel] {$v_2$} (W2);
    \draw[diredge] (W2) --node[timelabel] {$v_4$} (W3);
    \draw[diredge] (W2) --node[timelabel] {$v_6$} (W4);
\end{tikzpicture}
}
\fi}
    \caption{Illustration of the construction of the directed tree $T$ in \Cref{lem:fracTU}. On the left an example tree $G$, where the vertices are ordered by their indices. In the middle the line graph $L(G)$ of $G$, where the gray vertices form a maximal clique with more than two vertices. On the right the constructed directed tree~$T$, where the vertices of the maximal clique are merged into one vertex (gray) and the directed arcs correspond to the \degtwo vertices of $G$.}\label{fig:network}
\end{figure}

    To this end, we define a directed tree $T=(V,A)$ (see \Cref{def:networkmatrix}) as follows. We give an illustration in \Cref{fig:network}. 
    Let $L(G)$ be the line graph of the input tree $G$ of the \deltaUpperBound\ instance. 
    Now we exhaustively merge all maximal cliques in $L(G)$ with more than two vertices to a single vertex.
    Here, the new vertex is adjacent to all remaining neighbors of the originally merged vertices.
    Let this graph be called $G'$. It is easy to see that $G'$ is a tree, and every edge in~$G'$ corresponds to a \degtwo vertex in $G$, and vice versa. Furthermore, we have that every vertex~$v$ in~$G'$ that has degree larger than two corresponds to a subtree $G_v$ (which may be the degenerate tree consisting only of one vertex) in $G$ such that all vertices in $V(G_v)$ have degree larger than two in $G$.
    Now we transform $G'$ into $T$ by replacing each edge of $G'$ with an oriented arc. Consider a \degtwo vertex $v$ of $G$ and the corresponding edge $e_v$ in $G'$. Let $u$ and $w$ be the two neighbors of $v$ in $G$, where $u$ is ordered before~$w$. We distinguish four cases. 
    \begin{itemize}
        \item Vertex $u$ has degree two in $G$. Let $e_u$ be the edge in $G'$ corresponding to $u$. Then we orient $e_v$ such that the resulting arc points away from the common endpoint of $e_v$ and $e_u$.
        \item The degree of vertex $u$ does not equal two and $w$ has degree two in $G$. Let $e_w$ be the edge in $G'$ corresponding to $w$. Then we orient $e_v$ such that the resulting arc points towards the common endpoint of $e_v$ and $e_w$.
        \item Neither $u$ nor $w$ have degree two. Then we can assume that at least one of $u$ and $w$ has degree larger than two, otherwise $G$ only has three vertices and we can solve the instance in constant time. 
        \begin{itemize}
            \item Vertex $u$ has degree larger than two in $G$. Then let $G_u$ denote the maximal subtree of~$G$ that contains $u$ such that all vertices in $V(G_u)$ have degree larger than two in $G$. Then there is a vertex $u'$ in $G'$ corresponding to $G_u$ that has degree larger than two and is incident with $e_v$. We orient $e_v$ such that the resulting arc points away from $u'$.
            \item Vertex $u$ has degree one and vertex $w$ has degree larger than two in $G$. Then let $G_w$ denote the maximal subtree of $G$ that contains $w$ such that all vertices in $V(G_w)$ have degree larger than two in $G$. Then there is a vertex $w'$ in $G'$ corresponding to $G_w$ that has degree larger than two and is incident with $e_v$. We orient $e_v$ such that the resulting arc points towards $w'$.
        \end{itemize}
    \end{itemize}
    This finishes the description of the directed tree $T$.

    Next, we make the following observation: Each Constraint~(\ref{constr:9}) produces the same coefficients for the rational variables in the constraint matrix as one of Constraint~(\ref{constr:8}). To see this, note that each Constraint~(\ref{constr:9}) has exactly one non-zero coefficient for one rational variable~$x_v$, which is either $1$ or $-1$. Variable $x_v$ corresponds to a \degtwo vertex $v$ in $G$. Let $u$ and~$w$ be the two neighbors of $v$ in $G$, where $u$ is ordered before~$w$. Then, since $u$ and $w$ have distance at least two in $G$, we have a Constraint~(\ref{constr:8}) corresponding to the pair of vertices~$s,t$ with $s=u$ and $t=w$. This constraint has exactly one non-zero coefficient for a rational variable, which is $x_v$ and the coefficient is $1$. In the case where $s=w$ and $t=u$ we have the same situation, except that the coefficient is $-1$. Hence, since $M$ has no duplicate rows, we can assume that each row of $M$ corresponds to a Constraint~(\ref{constr:8}).

    Now consider a row of $M$ corresponding to Constraint~(\ref{constr:8}), which in turn corresponds to vertex pair $s,t$ in $G$, where $s$ and $t$ have distance at least two in $G$. Let $P_{s,t}$ denote the (unique) path from $s$ to $t$ in $G$. Let $V^{(s,t)}_{\text{int}, \text{deg}2, \text{f}}\subseteq V(P_{s,t})$ and $V^{(s,t)}_{\text{int}, \text{deg}2, \text{b}}\subseteq V(P_{s,t})$ be defined as in the description of Constraint~(\ref{constr:8}). Then we have that $V^{(s,t)}_{\text{int}, \text{deg}2, \text{f}}\cup V^{(s,t)}_{\text{int}, \text{deg}2, \text{b}}$ are all internal vertices of $P_{s,t}$ that have degree two in $G$. Furthermore, in Constraint~(\ref{constr:8}) corresponding to vertex pair $s,t$ we have that the fractional variable $x_v$ has a non-zero coefficient if and only if $v\in V^{(s,t)}_{\text{int}, \text{deg}2, \text{f}}\cup V^{(s,t)}_{\text{int}, \text{deg}2, \text{b}}$. More specifically, the coefficient of $x_v$ is $1$ if $v\in V^{(s,t)}_{\text{int}, \text{deg}2, \text{f}}$ and the coefficient of $x_v$ is $-1$ if $v\in V^{(s,t)}_{\text{int}, \text{deg}2, \text{b}}$.

    Furthermore, by construction we have that each vertex in $V^{(s,t)}_{\text{int}, \text{deg}2, \text{f}}\cup V^{(s,t)}_{\text{int}, \text{deg}2, \text{b}}$ corresponds to an arc of $T$. It remains to show that there is a path $P'$ in $T$ that visits exactly the arcs corresponding to the vertices in $V^{(s,t)}_{\text{int}, \text{deg}2, \text{f}}\cup V^{(s,t)}_{\text{int}, \text{deg}2, \text{b}}$ such that the arcs corresponding to vertices in $V^{(s,t)}_{\text{int}, \text{deg}2, \text{f}}$ appear forwards in the path and the arcs corresponding to the vertices in $V^{(s,t)}_{\text{int}, \text{deg}2, \text{b}}$ appear backwards in the path.
    To this end, we order the vertices in $V^{(s,t)}_{\text{int}, \text{deg}2, \text{f}}\cup V^{(s,t)}_{\text{int}, \text{deg}2, \text{b}}$ in the order in which they are visited by $P_{s,t}$, that is, let $V^{(s,t)}_{\text{int}, \text{deg}2, \text{f}}\cup V^{(s,t)}_{\text{int}, \text{deg}2, \text{b}} = \{v_1, v_2, \ldots, v_{k}\}$. Now we show by induction on $k$ that $P'$ exists and that its last arc corresponds to $v_k$. If $k=1$ then $P'_{s,t}$ only consists of the arc $e_{v_1}$ corresponding to $v_1$. If $v_1\in V^{(s,t)}_{\text{int}, \text{deg}2, \text{f}}$, then we consider $P'$ to be the path in which $e_{v_1}$ forwards. Otherwise, we consider $P'$ to be the path in which $e_{v_1}$ backwards. Now assume $k>1$. Let $P''$ be the path in $T$ that visits the arcs corresponding to vertices $\{v_1,\ldots,v_{k-1}\}$ such that for all $i\in[k-1]$ the arc corresponding to $v_i$ appears forwards if $v_i\in V^{(s,t)}_{\text{int}, \text{deg}2, \text{f}}$ and backwards otherwise. Furthermore, we have that $P''$ ends with the arc corresponding to $v_{k-1}$. Now we show that we can append the arc corresponding to $v_k$ to $P''$.
    We distinguish two cases. 
    \begin{itemize}
        \item In the first case, $v_k$ is visited directly after $v_{k-1}$ by $P_{s,t}$. Then the arcs $e_{v_{k-1}}$ and $e_{v_k}$ in~$T$ share a common endpoint which is a \degtwo vertex. It follows that we can append the arc $e_{v_k}$ to $P''$ to create a new path that ends in $e_{v_k}$. We still need to show that $e_{v_k}$ is appended in the correct direction. Consider the two neighbors of $v_k$ in $G$. One of the neighbors is $v_{k-1}$. Let the second neighbor be $w$. Now assume that $v_k\in V^{(s,t)}_{\text{int}, \text{deg}2, \text{f}}$. This implies that $v_{k-1}$ is ordered before $w$. In this case the arc $e_{v_k}$ in $T$ points away from the common endpoint of $e_{v_{k-1}}$ and $e_{v_k}$, and hence it appears forwards in the $P''$ appended with $e_{v_k}$. The case where $v_k\in V^{(s,t)}_{\text{int}, \text{deg}2, \text{b}}$ is analogous.
    
        \item In the second case, there is a non-empty set of vertices $\widehat{V}$ with degree larger than two, that are visited by $P_{s,t}$ between $v_{k-1}$ and $v_k$. Let $\widehat{G}$ denote the maximal tree in $G$ that only contains vertices that have degree larger than two in $G$ and that contains $\widehat{V}$. By construction, there is vertex $\widehat{v}$ in $T$ that corresponds to $\widehat{G}$ and both $e_{v_{k-1}}$ and $e_{v_k}$ in $T$ have $\widehat{v}$ as one of their endpoints. First, we argue that we can append $e_{v_k}$ to $P''$ and thereby create a path in $T$ that ends with $e_{v_k}$. To this end, notice that the vertex visited by $P_{s,t}$ before $v_{k-1}$ cannot be part of $\widehat{G}$, otherwise $G$ would contain a cycle. It follows that the common endpoint of $e_{v_{k-2}}$ and $e_{v_{k-1}}$ is not $\widehat{v}$ and appending $e_{v_k}$ to $P''$ indeed produces a path in $T$ that ends with $e_{v_k}$. It remains to show that $e_{v_k}$ is appended in the correct direction. Consider the two neighbors $u$ and $w$ of $v_k$ in $G$. We have that one neighbor is contained in $\widehat{G}$, assume w.l.o.g.\ that it is $u$. Then $u$ is visited by $P_{s,t}$ before~$v_{k}$. Now assume that $v_k\in V^{(s,t)}_{\text{int}, \text{deg}2, \text{f}}$. This implies that $u$ is ordered before $w$. In this case the arc $e_{v_k}$ in $T$ points away from $u$, and hence it appears forwards in the path $P''$ appended with $e_{v_k}$. The case where $v_k\in V^{(s,t)}_{\text{int}, \text{deg}2, \text{b}}$ is analogous.
    \end{itemize}
It follows that for every row in $M$, there is a path in $T$ such that the conditions in \Cref{def:networkmatrix} are met. We can conclude that $M$ is a network matrix.
\end{proof}

Now we are ready to show that if one of the produced \textsc{MILP} instances admits a feasible solution, then we are facing a yes-instance of \deltaUpperBound. 

\begin{lemma}
\label{lem:corr2}
    Given an instance $(G,D,\Delta)$ of \deltaUpperBound, if there exists a global label configuration~$\sigma$ such that the \textsc{MILP} instance $I_\sigma$ admits a feasible solution, then $(G,D,\Delta)$ is a yes-instance of \deltaUpperBound.
\end{lemma}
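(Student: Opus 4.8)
The plan is to convert a feasible solution of $I_\sigma$ into a $\Delta$-periodic labeling $\lambda$ of $G$ that realizes $D$. The first step is to make the solution integral. A feasible solution already assigns integer values to all integer variables $y_v^{u,w}$ and $z_e$; only the fractional variables $x_v$ may be non-integral. By \Cref{lem:fracTU} the constraint matrix $M$ of the fractional variables is a network matrix, hence totally unimodular by \Cref{lem:TU}. Fixing the integer variables to their given values and noting that each $x_v$ is confined to the box $[1,\Delta-1]$ by Constraint~(\ref{constr:9}), the residual program over the $x_v$ is feasible, bounded, and has a totally unimodular constraint matrix, so by \Cref{lem:MILP} it admits an integral feasible point. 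I therefore fix an integral feasible solution of $I_\sigma$.

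Next I would read off the travel delays and build $\lambda$ by propagation. Since $G$ is a tree, I fix an arbitrary edge $e_0$, set $\lambda(e_0):=1$, and extend outward: whenever a vertex $v$ has a labeled incident edge $\{p,v\}$ and an unlabeled incident edge $\{v,c\}$, I set $\lambda(\{v,c\})$ to be the unique label in $\{1,\dots,\Delta\}$ congruent to $\lambda(\{p,v\})+\tau_v^{p,c}\pmod\Delta$, where $\tau_v^{p,c}$ is $x_v$ (resp.\ $\Delta-x_v$) if $v$ has degree two, and $y_v^{p,c}$ if $v$ has degree larger than two. As $G$ is acyclic this assigns a well-defined label to every edge. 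The key point to verify is that every travel delay of the resulting $\lambda$ equals the corresponding MILP variable. For a degree-two vertex this is immediate from \Cref{def:travelDelays} together with $x_v\in[1,\Delta-1]$, which also forces its two incident edges to receive distinct labels, so that $\tau_v^{w,u}=\Delta-\tau_v^{u,w}$ by \Cref{obs:traveldelays}. For a vertex $v$ of degree larger than two I would show that all edges incident with $v$ receive a label of the form $z_e+\delta_v\pmod\Delta$ for a single offset $\delta_v$; this holds because Constraints~(\ref{constr:1}),~(\ref{constr:2}),~(\ref{constr:4}), and~(\ref{constr:5}) force $y_v^{u,w}\equiv z_{\{v,w\}}-z_{\{u,v\}}\pmod\Delta$, so the $y_v^{u,w}$ are differences of a common $z$-potential. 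Consequently $\tau_v^{u,w}=y_v^{u,w}$ for every pair of neighbors $u,w$, independently of which incident edge served as the ``parent'' during propagation.

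Finally I would check that $\lambda$ realizes $D$. Pairs at distance at most one are trivial: the duration of a fastest path is then $1\le D_{s,t}$, since the entries of $D$ are positive. For a pair $s,t$ at distance at least two the path $P_{s,t}$ in the tree is unique, so by \Cref{lem:duration} its duration is $1+\sum_v \tau_v^{(s\to t)}$ over the internal vertices $v$ of $P_{s,t}$. Matching the orientation conventions of Constraint~(\ref{constr:8}), this sum is exactly its left-hand side: an internal degree-two vertex in $V^{(s,t)}_{\text{int},\text{deg}2,\text{f}}$ contributes $\tau_v^{(s\to t)}=x_v$, one in $V^{(s,t)}_{\text{int},\text{deg}2,\text{b}}$ contributes $\Delta-x_v$ (using $\tau_v^{w,u}=\Delta-\tau_v^{u,w}$), and an internal degree-$>2$ vertex contributes $y_v^{u^{(s,t)}(v),w^{(s,t)}(v)}$. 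Since Constraint~(\ref{constr:8}) bounds the left-hand side by $D_{s,t}-1$, we obtain $d(P_{s,t})\le D_{s,t}$, so $\lambda$ realizes $D$ and $(G,D,\Delta)$ is a yes-instance.

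I expect the main obstacle to be the middle step, namely arguing that the tree propagation yields a globally consistent labeling whose travel delays agree with the MILP variables at vertices of degree larger than two. The crux is the ``common offset'' observation, which relies precisely on the equality Constraints~(\ref{constr:1}),~(\ref{constr:2}),~(\ref{constr:4}),~(\ref{constr:5}) encoding each $y_v^{u,w}$ as a difference of the $z$-potential; once this is established, the duration check reduces to a direct term-by-term comparison with Constraint~(\ref{constr:8}).
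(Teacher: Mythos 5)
Your proposal is correct and follows essentially the same route as the paper: integralize the feasible solution via the total unimodularity of the fractional constraint matrix (\Cref{lem:fracTU,lem:TU,lem:MILP}), propagate a labeling from an arbitrarily seeded edge using the travel-delay variables, verify that the resulting travel delays match the MILP variables, and conclude via \Cref{lem:duration} and Constraint~(\ref{constr:8}). Your ``common offset $\delta_v$'' argument at high-degree vertices is just a rephrasing of the paper's observation that the $\Delta$ possible propagated labelings are uniform shifts of one another and that one of them agrees with the $z$-values at $v$, so the two verifications coincide.
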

\begin{proof}
Let $(G,D,\Delta)$ be an instance of \deltaUpperBound. Assume there exists a global label configuration~$\sigma$ such that the \textsc{MILP} instance $I_\sigma$ admits a feasible solution. By \Cref{lem:MILP,lem:TU,lem:fracTU} and the fact that totally unimodularity is preserved under row duplication~\cite{schrijver2003combinatorial} we have that then there exists a feasible solution where \emph{all} variables are set to integer values. Assume we are given such a solution for $I_\sigma$. We construct a solution $\lambda$ for $(G,D,\Delta)$ as follows.

To this end, we make the following observation. Let $u,v,w$ be three vertices in $G$ such that $\{u,v\}$ and $\{v,w\}$ are edges in $G$ and $u$ is ordered before $w$. Then for each fixed label $\lambda(\{u,v\})$ there is a uniquely defined label $\lambda(\{v,w\})$ such that $\tau_v^{u,w}=x_v$ if $v$ has degree two in $G$, and $\tau_v^{u,w}=y_v^{u,w}$ if $v$ has degree larger than two in $G$.

This allows us to produce a labeling $\lambda$ with the following procedure. We arbitrarily pick an edge $e$, say the lexicographically smallest one according to the vertex ordering, in $G$ and set $\lambda(e)=1$. 
Now we iteratively pick an edge $\{u,v\}$ that has not received a label yet and shares a common endpoint $v$ with an edge $\{v,w\}$ in $G$ that has already received a label. By the observation above, the label of $\{v,w\}$ together with the travel delay $\tau_v^{u,w}$ uniquely defines the label of $\{u,v\}$, and in turn $\tau_v^{u,w}$ is given by $x_v$, $y_v^{u,w}$, or $y_v^{w,u}$, depending on the degree of $v$ and whether $u$ is ordered before $w$ or not. We claim that the described procedure produces a solution $\lambda$ for $(G,D,\Delta)$.

First, we investigate the travel delays in the $\Delta$-periodic temporal graph $(G,\lambda)$. By construction and \Cref{obs:traveldelays}, we have for each vertex $v$ that has degree two in $G$, and neighbors $u,w$, where~$u$ is ordered before $w$, that $\tau_v^{u,w}= x_v$ and $\tau_v^{w,u}=\Delta - x_v$. For each vertex $v$ in $G$ that has degree larger than two, the above described procedure does not per se guarantee that for each pair $u,w$ of neighbors of $v$ that $\tau_v^{u,w}= y_v^{u,w}$. However, notice that the labeling $\lambda$ is uniquely determined by the label used for the first edge. It follows that there are $\Delta$ different labelings that can be produced by the above described procedure, one for each label that can be put on the first edge. Call them $\lambda=\lambda_1,\lambda_2,\ldots,\lambda_\Delta$, where for all edges $e$ and for all $1\le i<j\le \Delta$ we have $\lambda_i(e)=\lambda_j(e)+i-j$. In particular, this means that for each vertex $v$ that has degree larger than two and each edge $e$ that is incident with $v$, there is an $i\in[\Delta]$ such that $\lambda_i(e)=z_e$. In this case, Constraints~(\ref{constr:1}),~(\ref{constr:2}),~(\ref{constr:3}),~(\ref{constr:4}),~(\ref{constr:5}),~(\ref{constr:6}), and~(\ref{constr:7}) ensure that indeed for all edges $e$ that are incident with $v$ we have that $\lambda_i(e)=z_e$. Furthermore, these constraints ensure that indeed for each pair $u,w$ of neighbors of $v$ we have $\tau_v^{u,w}= y_v^{u,w}$. Since all labelings $\lambda_1,\ldots,\lambda_\Delta$ produce the same travel delays, we can conclude that for each vertex $v$ in $G$ that has degree larger than two and for each pair $u,w$ of neighbors of $v$ in $(G,\lambda)$ we have $\tau_v^{u,w}= y_v^{u,w}$.

Assume for contradiction that $\lambda$ is not a solution for $(G,D,\Delta)$. Then there are two vertices $s,t$ in $G$ such that the duration of a fastest temporal path $P_{s,t}=\left(\left(v_{i-1},v_i,t_i\right)\right)_{i=1}^k$ in the $\Delta$-periodic temporal graph $(G,\lambda)$ is larger than $D_{s,t}$. By \Cref{lem:duration} we have that 
    \begin{equation*}
        d(P_{s,t})=1+\sum_{i\in[k-1]}\tau_{v_i}^{v_{i-1},v_{i+1}}.
    \end{equation*}
As discussed above, we have the following:
\begin{itemize}
    \item If $v_i$ has degree two and $v_{i-1}$ is ordered before $v_{i+1}$, then $\tau_{v_i}^{v_{i-1},v_{i+1}}=x_{v_i}$.
    \item If $v_i$ has degree two and $v_{i+1}$ is ordered before $v_{i-1}$, then $\tau_{v_i}^{v_{i-1},v_{i+1}}=\Delta - x_{v_i}$.
    \item If $v_i$ has degree larger than two, then $\tau_{v_i}^{v_{i-1},v_{i+1}}=y_{v_i}^{v_{i-1},v_{i+1}}$.
\end{itemize}
Hence, using the notation introduced for Constraint~(\ref{constr:8}), we can write
\begin{equation*}
    d(P_{s,t})=1+\sum_{v\in V^{(s,t)}_{\text{int}, \text{deg}2, \text{f}}} x_v \ + \sum_{v\in V^{(s,t)}_{\text{int}, \text{deg}2, \text{b}}} (\Delta - x_v) \ +  \sum_{v\in V^{(s,t)}_{\text{int}, \text{deg}>2}} y^{u^{(s,t)}(v),w^{(s,t)}(v)}_v.
\end{equation*}
However, since Constraint~(\ref{constr:8}) is fulfilled for the vertex pair $s,t$, it follows that $d(P_{s,t})\le D_{s,t}$, a contradiction. This finishes the proof.
\end{proof}

\Cref{thm:FPT} follows from the shown results. 
\begin{proof}[Proof of \Cref{thm:FPT}]
    We use the following algorithm to solve \deltaUpperBound. Let $(G,D,\Delta)$ be an instance of \deltaUpperBound. For each global label configuration $\sigma$ we create the \textsc{MILP} instance $I_\sigma$. If for some $\sigma$ the instance $I_\sigma$ admits a feasible solution, we answer ``yes''. Otherwise, we answer~``no''.

    By \Cref{lem:corr1,lem:corr2} this algorithm is correct. It remains to bound the running time. Let $\ell$ be the number of leaves of $G$. By \Cref{obs:configs} we create $O(\ell^{\ell^2})$ \textsc{MILP} instance. By \cref{obs:instance} we can create each \textsc{MILP} instance in polynomial time and by \Cref{obs:vars} each created \textsc{MILP} instance has $O(\ell^3)$ integer variables. By \Cref{thm:MILP} it follows that we can solve each \textsc{MILP} instance in FPT-time with respect to parameter $\ell$. We can conclude that the overall running time of the algorithm is in $f(\ell)\cdot |(G,D,\Delta)|^{O(1)}$ for some computable function $f$. 
\end{proof}

\section{Conclusion}
We have initiated the investigation of the natural temporal tree realization problem \deltaUpperBound and shown that it NP-hard in quite restrictive cases. 
On the positive side, we provided an FPT-algorithm for the number of leaves in the input tree as a parameter, essentially the only reasonable parameter on trees that is not a constant in at least one of our hardness reductions. 
A canonical future research direction is to investigate \deltaUpperBound on general graphs.
For example, can our FPT-algorithm for number of leaves can be transferred to general graphs? 
Further interesting parameters for general graphs include the distance to a clique (or distance to other dense graphs), maximum independent set, clique cover, or even various parameters of the input matrix $D$.

\bibliography{bib}

\end{document}